\relax
\RequirePackage{etex}
\documentclass[a4paper, 11pt]{article} 
\usepackage[margin=1in]{geometry}
\usepackage[hyphens]{url}  
\usepackage{graphicx} 
\urlstyle{rm} 
\usepackage{graphicx}  
\usepackage{natbib}  
\usepackage{caption} 
\frenchspacing  
\setlength{\pdfpagewidth}{8.5in}  
\setlength{\pdfpageheight}{11in}  

\usepackage{amssymb,amsmath,amsthm,microtype,booktabs,csquotes,enumitem,tikz,subcaption,mathtools,pgfplots,array}
\usepackage{algorithmicx,algorithm}
\usepackage[noend]{algpseudocode}
\usepackage[export]{adjustbox}
\pgfplotsset{compat=newest}
\usetikzlibrary{intersections,pgfplots.fillbetween}
\allowdisplaybreaks

\renewcommand{\(}{\left(}
\renewcommand{\)}{\right)}
\newcommand{\lt}{\left[}
\newcommand{\rt}{\right]}

\DeclareMathOperator*{\argmax}{arg\,max}
\newcommand{\ip}[2]{\langle #1,#2 \rangle}
\newcommand{\N}{\mathcal N}
\newcommand{\x}{x_{ki}}
\newcommand{\xm}{x_{\text{mix}}}
\newcommand{\ym}{y_{\text{mix}}}
\definecolor{scolor}{rgb}{0.733,0.843,0.890}
\definecolor{ocolor}{rgb}{0.783,0.893,0.940}
\definecolor{oocolor}{rgb}{0.783,0.893,0.940}

\newtheorem{theorem}{Theorem}[section]
\newtheorem{lemma}[theorem]{Lemma}

\theoremstyle{definition}

\theoremstyle{remark}
\newtheorem*{remark}{Remark}

\hbadness=99999
\setcounter{secnumdepth}{1} 

\title{Exploration-Exploitation in Multi-Agent Learning:\\
Catastrophe Theory Meets Game Theory\thanks{Stefanos Leonardos gratefully acknowledgeS MOE AcRF Tier 2 Grant 2016-T2-1-170 and NRF 2018 Fellowship NRF-NRFF2018-07. Georgios Piliouras gratefully acknowledges MOE AcRF Tier 2 Grant 2016-T2-1-170, grant PIE-SGP-AI-2018-01, NRF2019-NRF-ANR095 ALIAS grant and NRF 2018 Fellowship NRF-NRFF2018-07.}}

\author{Stefanos Leonardos,\textsuperscript{\rm 1} Georgios Piliouras\textsuperscript{\rm 1}\\}
\date{\textsuperscript{\rm 1}Singapore University of Technology and Design\\ 
8 Somapah Road, 487372 Singapore\\
\{stefanos\_leonardos~;~georgios\}@sutd.edu.sg}

\begin{document}

\maketitle

\begin{abstract}
Exploration-exploitation is a powerful and practical tool in multi-agent learning (MAL), however, its effects are far from understood. To make progress in this direction, we study a smooth analogue of Q-learning. We start by showing that our learning model has strong theoretical justification as an optimal model for studying exploration-exploitation. Specifically, we prove that smooth Q-learning has bounded regret in arbitrary games for a cost model that explicitly captures the balance between game and exploration costs and that it always converges to the set of quantal-response equilibria (QRE), the standard solution concept for games under bounded rationality, in weighted potential games with heterogeneous learning agents. In our main task, we then turn to measure the effect of exploration in collective system performance. We characterize the geometry of the QRE surface in low-dimensional MAL systems and link our findings with catastrophe (bifurcation) theory. In particular, as the exploration hyperparameter evolves over-time, the system undergoes phase transitions where the number and stability of equilibria can change radically given an infinitesimal change to the exploration parameter. Based on this, we provide a formal theoretical treatment of how tuning the exploration parameter can provably lead to equilibrium selection with both positive as well as negative (and potentially unbounded) effects to system performance. 
\end{abstract}

\section{Introduction}
\label{sec:intro}

The problem of optimally balancing exploration and exploitation in \emph{multi-agent systems (MAS)} has been a fundamental motivating driver of online learning, optimization theory and evolutionary game theory \cite{Cla98,Pan05}. From a behavioral perspective, it involves the design of realistic models to capture complex human behavior, such as the standard Experienced Weighted Attraction model~\cite{EWA1,EWA3}. Learning agents use time varying parameters to explore suboptimal, boundedly rational decisions, while at the same time, they try to coordinate with other interacting agent and maximize their profits~\cite{EWA2, Bow02,Kai09}.\par
From an AI perspective, the exploration-exploitation dilemma is related to the optimization of adaptive systems. For example, neural networks are trained to parameterize policies ranging from very exploratory to purely exploitative, whereas meta-controllers decide which policy to prioritize~\cite{2020arXiv200313350P}. Similar techniques have been applied to rank agents in tournaments according to performance for preferential evolvability~\cite{Lan17,omidshafiei2019alpha,Row19} and to design \emph{multi-agent learning (MAL)} algorithms that prevent collective learning from getting trapped in local optima~\cite{Kai10,Kai11}.\par
Despite these notable advances both on the behavioral modelling and AI fronts, the theoretical foundations of learning in MAS are still largely incomplete even in simple settings~\cite{Wun10,Blo15}. While there is still no theory to formally explain the performance of MAL algorithms, and in particular, \emph{the effects of exploration in MAS}~\cite{Klo10}, existing research suggests that many pathologies of exploration already emerge at stateless matrix games at which naturally emerging collective learning dynamics exhibit a diverse set of outcomes \cite{Sat03,Sat05,Tuy12}. \par
The reasons for the lack of a formal theory are manifold. First, even without exploration, MAL in games can result in complex behavior that is hard to analyze~\cite{Bal20,Mer18,2018arXiv180405464M}. Once explicit exploration is enforced, the behavior of online learning becomes even more intractable as Nash Equilibria (NE) are no longer fixed points of agents' behavior. Finally, if parameters are changed enough, then we get bifurcations and possibly chaos~\cite{Wol12,Pal17,San18}. \\[0.2cm]
{\bf Our approach \& results.} Motivated by the above, we study a smooth variant of stateless Q-learning, with softmax or Boltzmann exploration (one of the most fundamental models of exploration-exploitation in MAS), termed Boltzmann Q-learning or \emph{smooth Q-learning (SQL)}, which has recently received a lot of attention due to its connection with evolutionary game theory \cite{Tuy03,Kia12}. Informally (see Section~\ref{sec:model} for the rigorous definition), each agent $k$ updates her choice distribution $x=\(x_i\)$ according to the rule
\[\textstyle \dot x_i/x_i=\beta_k \(u_i-\bar u\)-\alpha_k\(\ln{x_i}-\sum_j x_j \ln{x_j}\)\]
where $u_i,\bar u$ denote agent $k$'s utility from action $i$ and average utility, respectively, given all other agents' actions and $\alpha_k/\beta_k$ is agent $k$'s exploration rate.\footnote{This variant of Q-learning has been also extensively studied in the economics and reinforcement learning literature under various names, see e.g.,  \cite{Alo10,San18} and \cite{Kae96,Mer16}, respectively.} Agents tune the exploration parameter to increase/decrease exploration during the learning process. We analyze the performance of SQL dynamics along the following axes. \par
\emph{Regret and Equilibration.} First, we benchmark their performance against the optimal choice distribution in a cost model that internalizes agents' utilities from exploring the space (Lemma~\ref{lem:modified}), and show that in this context, the SQL dynamics enjoy a \textit{constant} total regret bound in arbitrary games that depends logarithmically in the number of actions (Theorem~\ref{thm:regret}). Second, we show that they converge to \emph{Quantal Response Equilibria} (QRE)\footnote{The prototypical extension of NE for games with bounded rationality \cite{Mck95}.} in weighted potential games with heterogeneous agents of arbitrary size (Theorem~\ref{thm:potential}).\footnote{Apart from their standard applications, see \cite{Pan16,Swe18,Per20} and references therein, weighted potential games naturally emerge in distributed settings such as recommendation systems \cite{Ben18}.} The underpinning intuition is that agents' deviations from pure exploitation are not a result of their bounded rationality but rather a perfectly rational action in the quest for more information about unexplored choices which creates value on its own. This is explicitly captured by a correspondingly modified Lyapunov function (potential) which combines the original potential with the entropy of each agent's choice distribution (Lemma~\ref{lem:potential}). \par
While previously not formally known, these properties mirror results  of strong regret guarantees for online algorithms (see e.g., \citet{cesa2006prediction,Kwo17,Mer18};   convergence results for SQL in more restricted settings (\citet{Les05,Cou15}).\footnote{They are also of independent interest in the limited literature on the properties of the softmax function~\cite{Gao17}.} However, whereas in previous works such results corresponded to main theorems, in our case they are only our starting point as they clearly not suffice to explain the 
  disparity between the regularity of the SQL dynamics in theory and their unpredictable performance in practice. \par
We are faced with two major unresolved complications. First, the outcome of the SQL algorithm in MAS is highly sensitive on the exploration parameters~\cite{Tuy06}. The set of QRE ranges from the NE of the underlying game when there is no exploration to uniform randomization when exploration is constantly high (agents never learn). Second, the collective system evolution is \emph{path dependent}, i.e., in the case of time-evolving parameters, the equilibrium selection process cannot be understood by only examining the final exploration parameter but rather depends on its whole history of play \cite{Goc02,Rom15,Yan17}.\par

\emph{Catastrophe theory and equilibrium selection.} We explain the fundamentally different outcomes of exploration-exploitation with SQL in games via catastrophe theory. The link between these two distinct fields lies on the properties of the underlying game which, in turn, shape the geometry of the QRE surface. As agents' exploration parameters change, the QRE surface also changes. This prompts dramatic phase transitions in the exploration path that ultimately dictate the outcome of the learning process. Catastrophe theory reveals that such transitions tend to occur as part of well-defined qualitative geometric structures. \par
In particular, the SQL dynamics may induce a specific type of catastrophes, known as \emph{saddle-node} bifurcations \cite{Str00}. At such bifurcation points, small changes in the exploration parameters change the stability of QRE and cause QRE to merge and/or disappear. However, as we prove, this is not always sufficient to stabilize desired states; the decisive feature is whether the QRE surface is connected or not (see Theorem~\ref{thm:location_qre} and Figure~\ref{fig:location}) which in turn, determines the possible types of bifurcations, i.e., whether there are one or two branches of saddle-node bifurcation curves, that may occur as exploration parameters change (Figure~\ref{fig:intro}).

\begin{figure}[!htb]
\includegraphics[width=0.48\textwidth]{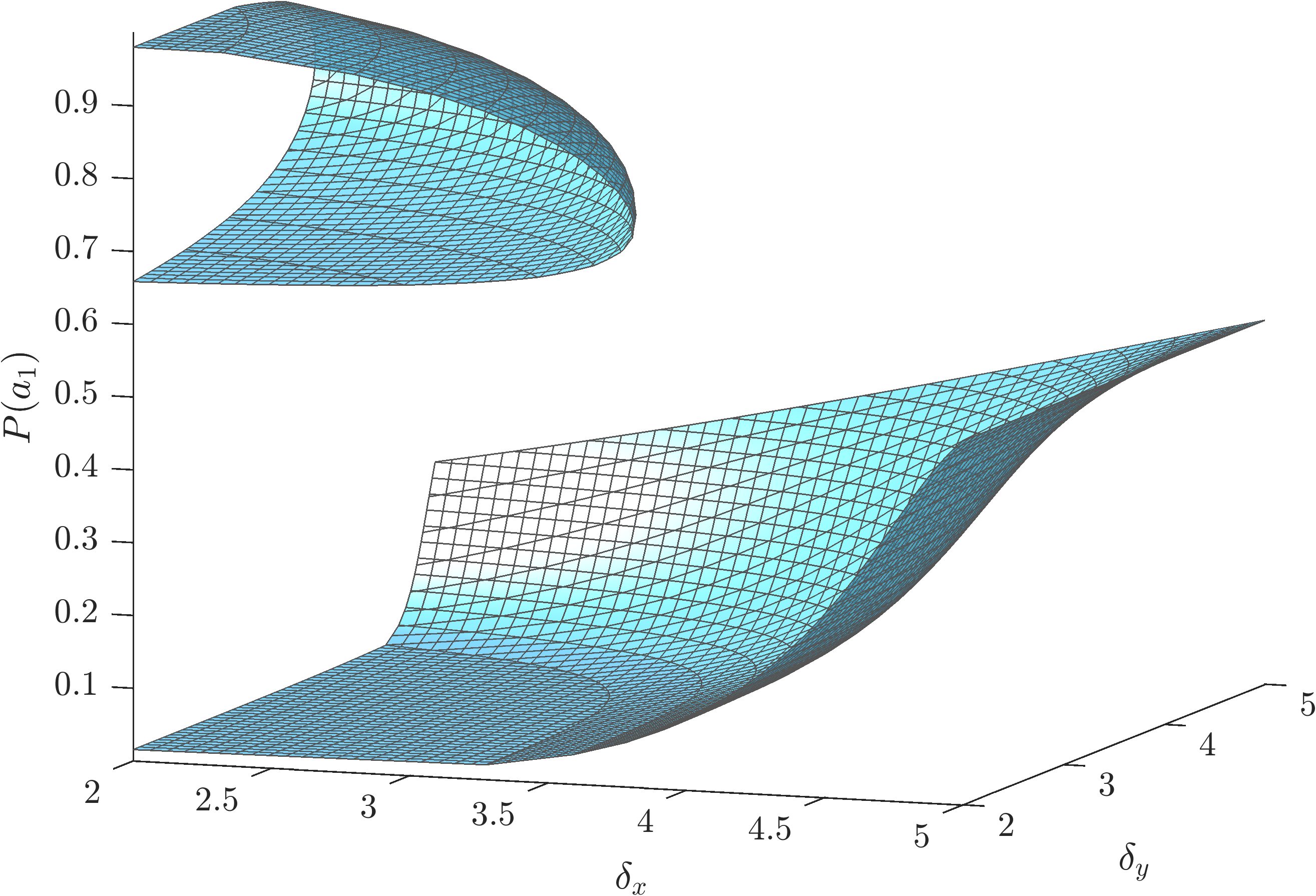}\hspace{10pt}
\includegraphics[width=0.48\textwidth]{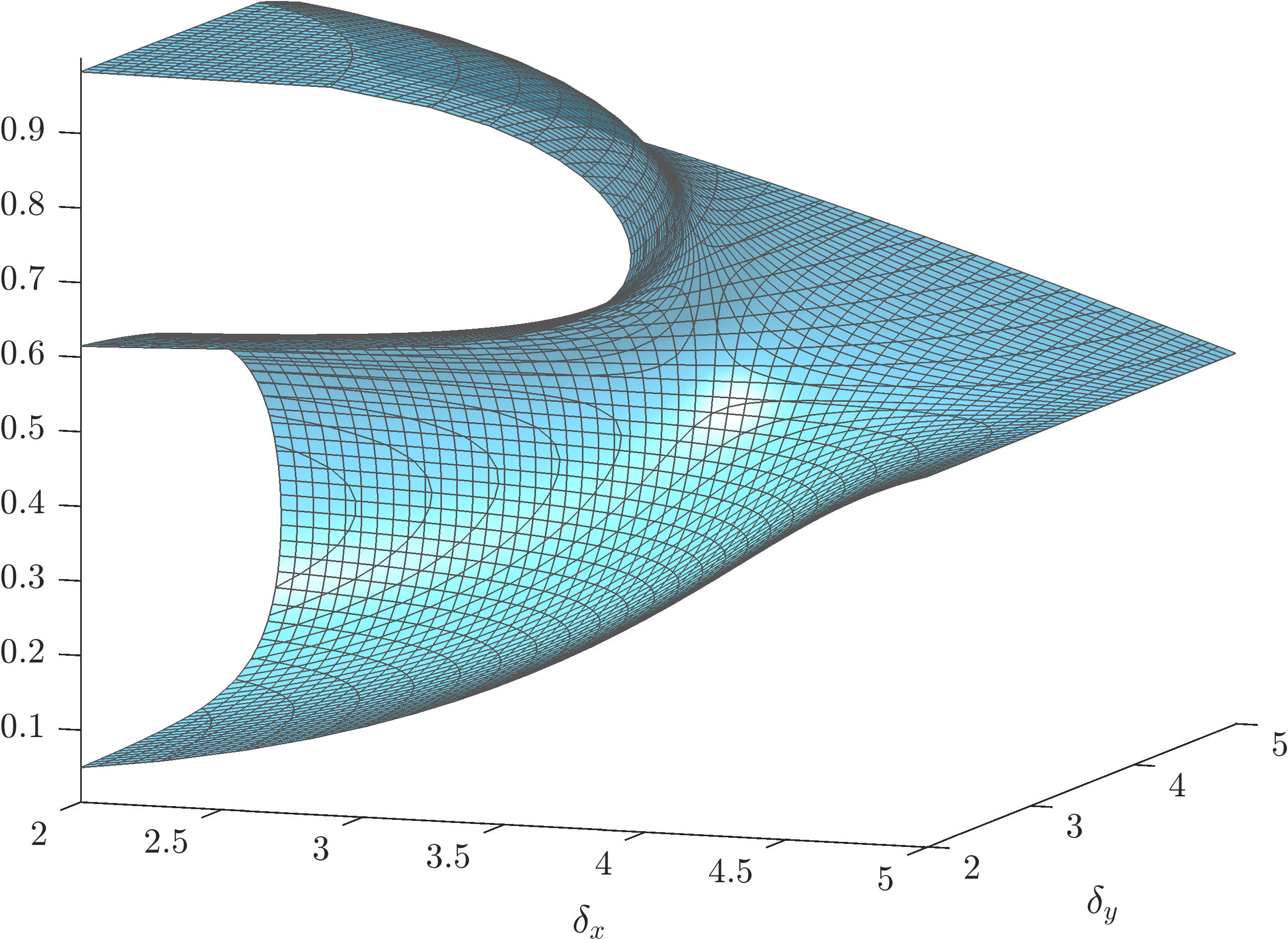}
\caption{Single \emph{saddle-node bifurcation} curve (left) vs two branches of saddle-node bifurcation curves meeting which is consistent with the emergence of a \emph{co-dimension $2$ cusp bifurcation} (right) on the QRE manifold of two player, two action games as function of the exploration rates, $\delta_x,\delta_y$ (see also Figures~\ref{fig:payrisk},\ref{fig:nopay}). The possible learning paths before, during and after exploration depend on the geometry of the QRE surface (Theorems~\ref{thm:catastrophe},\ref{thm:location_qre}).}
\label{fig:intro}
\end{figure}
In terms of performance, this is formalized in Theorem~\ref{thm:catastrophe} which states that even in the simplest of MAS, \textit{exploration can lead under different circumstances both to unbounded gain as well as unbounded loss}. While existential in nature, Theorem~\ref{thm:catastrophe} does not merely say that anything goes when exploration is performed. When coupled with the characterization of the geometric locus of QRE in Theorem~\ref{thm:location_qre}, it suggests that we can identify cases where exploration can be provably beneficial or damaging. This provides a formal geometric argument why exploration is both extremely powerful but also intrinsically unpredictable. \par
The above findings are visualized in systematic experiments in both low and large dimensional games along two representative exploration-exploitation policies, \emph{explore-then-exploit and cyclical learning rates} (Section~\ref{sec:applications}). We also visualize the modified potential (and how it changes during exploration) in weighted potential games of arbitrary size by properly adapting the technique of \citet{Li18} for visualizing high dimensional loss functions in deep learning  (Figure~\ref{fig:large}). Omitted materials, all proofs, and more experiments are included in Appendices~\ref{app:derivation}-\ref{app:experiments}.


%
\section{Preliminaries: Game Theory and SQL}
\label{sec:model}

We consider a finite set $\N$ of interacting agents indexed by $k=1,2,\dots,N$. Each agent $k\in \N$ can take an action from a finite set $A_k=\{1,2,\dots,n_k\}$. Accordingly, let $A:=\prod_{k=1}^NA_k$ denote the set of joint actions or pure action profiles, with generic element $a=\(a_1,a_2,\dots,a_N\)$. To track the evolution of the agents' choices, let $X_k=\{x_k\in\mathbb R^{n_k}:\sum_{i=1}^{n_k}\x=1, \x\ge0\}$ denote the set of all possible choice distributions $x_k:=\(\x\)_{i\in A_k}$ of agent $k\in \N$.\footnote{Depending on the context, we will use either the indices $i,j\in A_k$ or the symbol $a_k\in A_k$ to denote the pure actions of agent $k$.} We consider the dynamics in the collective state space $X:=\prod_{k=1}^N X_k$, i.e., the space of all joint choice distributions $x=\(x_k\)_{k\in\N}$. Using conventional notation, we will write $\(a_k;a_{-k}\)$ to denote the pure action profile at which agent $k\in \N$ chooses action $a_k\in A_k$ and all other agents in $\N$ choose actions $a_{-k}\in A_{-k}:=\prod_{l\neq k}A_l$. Similarly, for choice distribution profiles, we will write $\(x_k,x_{-k}\)$ with $x_{-k}\in X_{-k}:=\prod_{l\neq k}X_l$. When time is relevant, we will use the index $t$ for agent $k$'s choice distribution $x_k\(t\):=\(\x\(t\)\)_{i\in A_k}$ at time $t\ge0$.\par
When selecting an action $i\in A_k$, agent $k\in \N$ receives a reward $u_k\(i;a_{-k}\)$ which depends on the choices $a_{-k}\in A_{-k}$ of all other agents. Accordingly, the expected reward of agent $k\in \N$ for a choice distribution profile $x=\(x_k,x_{-k}\)\in X$ is equal to $u_k\(x\)=\sum_{a\in A}\(\x u_k\(i;a_{-k}\)\prod_{l\neq k}x_{la_l}\)$. We will also write $r_{ki}\(x\):=u_k\(i;x_{-k}\)$ or equivalently $r_{ki}\(x_{-k}\)$ for the reward of pure action $i\in A_k$ at the joint choice distribution profile $x=\(x_k;x_{-k}\)\in X$ and $r_k\(x\):=\(r_{ki}\(x\)\)_{i\in A_k}$ for the resulting reward vector of all pure actions of agent $k$. Using this notation, we have that $u_k\(x\)=\ip{x_k}{r_{k}\(x\)}$, where $\ip{\cdot}{\cdot}$ denotes the usual inner product in $\mathbb R^{n_k}$, i.e., $\ip{x_k}{r_k\(x\)}=\sum_{j\in A_k}x_{kj}r_{kj}\(x\)$. In particular, $\partial u_k\(x\)/\partial \x=r_{ki}\(x\)$. To sum up, the above setting can be represented in compact form with the notation $\Gamma=\(\N,\(A_k,u_k\)_{k\in \N}\)$.\par
We assume that the updates in the choice distribution $x_k$ of agent $k\in N$ are governed by the dynamics
\begin{subequations}
\label{eq:main}
\begin{align}\label{eq:qre}
\dot x_{ki}/\x &=\beta_k[r_{ki}\(x\)-\sum_{j\in A_k}x_{kj}r_{kj}\(x\)]-\alpha_k[\ln{\x}-\sum_{j\in A_k}x_{kj}\ln{x_{kj}}]\\[0.2cm]
\label{eq:qre_ip}
& =\beta_k[r_{ki}\(x\)-\ip{x_k}{r_k\(x\)}]-\alpha_k[\ln{\x}-\ip{x_k}{\ln{x_k}}]
\end{align}
\end{subequations}
where $\beta_k\in\lt 0,+\infty\)$ and $\alpha_k\in \lt 0,1\)$ are positive constants that control the rate of choice adaptation and memory loss, respectively of the learning agent $k\in \N$ and $\ln{x_k}:=\(\ln{\x}\)_{i\in A_k}$ for $x_k\in X_k$. The first term, $r_{ki}\(x\)-\sum_{j\in A_k}^nx_{kj}r_{kj}\(x\)$, corresponds to the vector field of the replicator dynamics and captures the adaptation of the agents' choices towards the best performing strategy (exploitation). The second term, $\ln{x_{ki}} -\sum_{j\in A_k}x_{kj}\ln{x_{kj}}$, corresponds to the memory of the agent and the exploration of alternative choices. Due to their mathematical connection with Q-learning, we will refer to the dynamics in \eqref{eq:main} as \emph{smooth Q-learning} (SQL) dynamics.\footnote{An explicit derivation due to \cite{Tuy03,Sat05,Wol12,Kia12} (among others) of the connection between Q-learning \cite{Wat92} and the above dynamics (including their resting points) is given in Appendix~\ref{app:derivation}.} The interior fixed points $x^Q\in X$ of the dynamics in equations \eqref{eq:main} are the \emph{Quantal Response Equilibria} (QRE) of $\Gamma$. In particular, each $x^Q_k\in X_k$ for $k=1,2,\dots,N$ satisfies
\begin{equation}\label{eq:fixed}
 x^Q_{ki}=\exp{(r_{ki}(x^Q_{-k})/\delta_k)}/\sum_{i\in A_k}\exp{(r_{kj}(x^Q_{-k})/\delta_k)},
\end{equation}
for $i\in A_k$, where $\delta_k:=\alpha_k/\beta_k$ denotes the \emph{exploration rate} for each agent $k\in \N$.\par
\section{Bounded Regret in All Games and Convergence in Weighted Potential Games}
\label{sec:results}
Our first observation is that the SQL dynamics in \eqref{eq:main} can be considered as replicator dynamics in a modified game with the same sets of agents and possible actions for each agent but with modified utilities. 

\begin{lemma}\label{lem:modified}
Given $\Gamma=\(\N,\(A_k,u_k\)_{k\in \N}\)$, consider the modified utilities $\(u^H_k\)_{k\in \N}$ defined by $u^H_k\(x\):=\beta_k\ip{x_k}{r_k\(x\)}-\alpha_k\ip{x_k}{\ln{x_k}}$, for $x\in X$. Then, the dynamics described by the differential equation $\dot x_{ik}/x_{ik}$ in \eqref{eq:main} can be written as 
\begin{equation}\label{eq:replicator}
\dot x_{ki}/\x=r^H_{ki}\(x\)-\ip{x_k}{r^H_k\(x\)}
\end{equation}
where $r^H_{ki}\(x\):=\frac{\partial}{\partial x_{ki}}u_k^H\(x\)=\beta_kr_{ki}\(x\)-\alpha_k\(\ln{\x}+1\)$. In particular, the dynamics in \eqref{eq:main} describe the replicator dynamics in the modified setting $\Gamma^H=\(\N,\(A_k,u^H_k\)_{k\in \N}\)$.
\end{lemma}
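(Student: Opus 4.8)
The plan is to establish the lemma by direct verification in two steps: first compute the gradient of the modified utility to confirm the stated formula for $r^H_{ki}$, and then substitute this into the replicator vector field $r^H_{ki}(x)-\ip{x_k}{r^H_k(x)}$ and check that the resulting expression collapses exactly onto the right-hand side of \eqref{eq:qre_ip}. Since everything reduces to an algebraic identity, the argument is computational rather than conceptual.

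First I would differentiate $u^H_k(x)=\beta_k\ip{x_k}{r_k(x)}-\alpha_k\ip{x_k}{\ln{x_k}}$ with respect to $x_{ki}$, term by term. The crucial observation is that the reward $r_{kj}(x)=u_k(j;x_{-k})$ of each pure action $j\in A_k$ depends only on the opponents' profile $x_{-k}$ and not on agent $k$'s own distribution $x_k$; hence differentiating the bilinear term $\beta_k\ip{x_k}{r_k(x)}=\beta_k\sum_{j\in A_k}x_{kj}r_{kj}(x)$ with respect to $x_{ki}$ returns simply $\beta_k r_{ki}(x)$. For the entropy term, the product rule gives $\partial/\partial x_{ki}\,[-\alpha_k\sum_{j\in A_k}x_{kj}\ln{x_{kj}}]=-\alpha_k(\ln{x_{ki}}+1)$. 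Combining the two yields precisely $r^H_{ki}(x)=\beta_k r_{ki}(x)-\alpha_k(\ln{x_{ki}}+1)$, matching the stated expression.

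Next I would form the expected modified reward $\ip{x_k}{r^H_k(x)}=\beta_k\ip{x_k}{r_k(x)}-\alpha_k(\ip{x_k}{\ln{x_k}}+1)$, where the additive constant $+1$ survives only because $\sum_{i\in A_k}x_{ki}=1$ on the simplex. Substituting both $r^H_{ki}(x)$ and $\ip{x_k}{r^H_k(x)}$ into $r^H_{ki}(x)-\ip{x_k}{r^H_k(x)}$, the two $\pm\alpha_k$ constants cancel, leaving exactly $\beta_k[r_{ki}(x)-\ip{x_k}{r_k(x)}]-\alpha_k[\ln{x_{ki}}-\ip{x_k}{\ln{x_k}}]$, which is the right-hand side of \eqref{eq:qre_ip}. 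The final assertion then follows by definition, since \eqref{eq:replicator} is literally the replicator dynamics driven by the payoff vector $r^H_k$ of the game $\Gamma^H$.

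There is no genuine obstacle here, but two points deserve to be made explicit rather than glossed over. The first is the independence of $r_{ki}$ from $x_{ki}$, which is what lets the partial derivative of the utility recover the reward $r_{ki}(x)$ cleanly instead of producing extra cross terms. The second is the cancellation of the constant $\alpha_k$, which I would also motivate conceptually via the shift-invariance of the replicator field: adding any constant vector to every component of the payoff leaves $r^H_{ki}-\ip{x_k}{r^H_k}$ unchanged, so the $+1$ generated by the entropy derivative is immaterial and need not even be tracked.
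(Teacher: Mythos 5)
Your proposal is correct and follows essentially the same route as the paper: expand $r^H_{ki}(x)-\ip{x_k}{r^H_k(x)}$ using the gradient formula and cancel the $\alpha_k$ constant via $\ip{x_k}{1}=1$. The only difference is that you explicitly verify the partial-derivative identity $r^H_{ki}=\beta_k r_{ki}-\alpha_k(\ln x_{ki}+1)$ (including the independence of $r_{ki}$ from $x_k$), which the paper takes as given in the lemma statement; this is a welcome but inessential addition.
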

The superscript $H$ refers to the regularizing term, $H\(x_k\):=-\ip{x}{\ln{x_k}}=-\sum_{j\in A_k}x_{kj}\ln{x_{kj}}$ which denotes the \emph{Shannon entropy} of choice distribution $x_k\in X_k$.\\[0.2cm]
\textbf{Bounded regret.} To measure the performance of the SQL dynamics in \eqref{eq:main}, we will use the standard notion of (accumulated) \emph{regret} \cite{Mer18}. The regret $R_k\(T\)$ at time $T>0$ for agent $k$ is
\begin{align}\label{eq:regret}
R_k\(T\):=\max_{x'_k\in X_k}\int_{0}^T&\lt u_k\(x'_k;x_{-k}\(t\)\)-u_k\(x_k\(t\),x_{-k}\(t\)\)\rt dt,
\end{align}
i.e., $R_k\(T\)$ is the difference in agent $k$'s rewards between the sequence of play $x_k\(t\)$ generated by the SQL dynamics and the best possible choice up to time $T$ in hindsight. Agent $k$ has \emph{bounded regret} if for every initial condition $x_k\(t\)$ the generated sequence $x_k\(t\)$ satisfies $\lim\sup R_k\(T\)\le 0$ as $T\to\infty$. Our main result in this respect is a constant upper bound on the regret of the SQL dynamics.
\begin{theorem}\label{thm:regret}
Consider the modified setting $\Gamma^H=\(\N,\(A_k,u^H_k\)_{k\in \N}\)$. Then, every agent $k\in \N$ who updates their choice distribution $x_k\in X_k$ according to the dynamics in equation \eqref{eq:replicator} has bounded regret, i.e., there exists a constant $C>0$ such that $\limsup_{T\to\infty}R^H_k\(T\)\le C$.
\end{theorem}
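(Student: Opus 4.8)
The plan is to control the cumulative regret by a relative-entropy (Kullback--Leibler) Lyapunov function toward a fixed comparator, the standard device for replicator / entropic mirror-descent dynamics. Fix an arbitrary comparator $x'_k\in X_k$ and define
$V\(t\):=\sum_{i\in A_k}x'_{ki}\ln\(x'_{ki}/\x\(t\)\)=\ip{x'_k}{\ln x'_k}-\ip{x'_k}{\ln x_k\(t\)}\ge 0$,
the relative entropy of $x'_k$ with respect to the trajectory. First I would differentiate $V$ along \eqref{eq:replicator}. Since $\dot\x/\x=r^H_{ki}\(x\)-\ip{x_k}{r^H_k\(x\)}$ and $\sum_{i\in A_k}x'_{ki}=\sum_{i\in A_k}\x=1$, the average term $\ip{x_k}{r^H_k}$ factors through the unit mass and cancels, yielding the clean expression $\dot V\(t\)=\ip{x_k\(t\)-x'_k}{r^H_k\(x\(t\)\)}$.

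The heart of the argument is to match this derivative against the instantaneous regret $g\(t\):=u^H_k\(x'_k;x_{-k}\(t\)\)-u^H_k\(x_k\(t\);x_{-k}\(t\)\)$. I would substitute $r^H_{ki}=\beta_k r_{ki}-\alpha_k\(\ln\x+1\)$ and use the fact, built into the model, that the reward vector $r_k$ depends only on $x_{-k}$, so the comparator is evaluated against the same $r_k\(x\(t\)\)$. The constant $+1$ drops out of $\dot V$ because $x_k$ and $x'_k$ carry equal mass, and the game contributions $\beta_k\ip{x'_k-x_k}{r_k}$ enter $g$ and $\dot V$ with opposite signs and cancel in the sum. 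What remains is purely entropic: $g$ contributes the comparator's own entropy $-\alpha_k\ip{x'_k}{\ln x'_k}$ while $-\dot V$ contributes the cross term $+\alpha_k\ip{x'_k}{\ln x_k}$, so that $g\(t\)+\dot V\(t\)=\alpha_k\ip{x'_k}{\ln x_k\(t\)-\ln x'_k}=-\alpha_k V\(t\)$. This identity $g\(t\)=-\dot V\(t\)-\alpha_k V\(t\)$ is the one genuinely delicate step, since it requires correctly pairing the \emph{state-dependent} entropy gradient inside $r^H_k$ against the \emph{fixed} entropy of the comparator; everything else is bookkeeping.

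Finally I would integrate the identity over $\lt 0,T\rt$, giving $\int_0^T g\(t\)\,dt=V\(0\)-V\(T\)-\alpha_k\int_0^T V\(t\)\,dt$. Because $V\(t\)\ge 0$ and $\alpha_k\ge 0$, the last two terms are nonnegative (the exploration term $-\alpha_k\int_0^T V$ only improves the bound), so the cumulative regret against $x'_k$ is at most $V\(0\)$, uniformly in $T$. Maximizing over comparators, $R^H_k\(T\)\le\max_{x'_k\in X_k}V\(0\)$; as $x'_k\mapsto V\(0\)$ is convex on the simplex $X_k$ its maximum is attained at a vertex $e_i$, where $V\(0\)=-\ln\x\(0\)$, hence $R^H_k\(T\)\le -\ln\min_{i\in A_k}\x\(0\)=:C$. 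For any interior initial condition this is a finite, $T$-independent constant (equal to $\ln n_k$ for the uniform start), which gives $\limsup_{T\to\infty}R^H_k\(T\)\le C$ and recovers the advertised logarithmic-in-actions dependence.
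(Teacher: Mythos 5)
Your proof is correct and follows essentially the same route as the paper's: both track the relative entropy (equivalently, the cross-entropy term $\ip{p_k}{\ln x_k\(t\)}$, which differs from your $V$ only by a constant) toward a fixed comparator along the replicator dynamics of $\Gamma^H$, and both close the argument via nonnegativity of the KL-divergence, arriving at the bound $-\ip{p_k}{\ln x_k\(0\)}$. Your exact identity $g\(t\)=-\dot V\(t\)-\alpha_k V\(t\)$ is a mild sharpening that makes the slack in the paper's pointwise inequality explicit, and your final maximization over comparators (convexity, vertex attainment) gives the uniform constant $-\ln\min_{i\in A_k}\x\(0\)$ slightly more carefully than the paper does.
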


From the proof of Theorem~\ref{thm:regret}, it follows that the constant $C$ is logarithmic in the number of actions given a uniformly random initial condition as is the standard. This yields an optimal bound which is powerful in general MAL settings. In particular, regret minimization by the SQL dynamics at an optimal $O(1/T)$ rate implies that their time-average converges fast to \emph{coarse correlated equilibria (CCE)}. These are CCE of the perturbed game, $\Gamma^H$, but if exploration parameter is low, they are approximate CCE of the original game as well. Even $\epsilon$-CCE are $(\frac{\lambda (1+\epsilon)}{1-\mu(1+\epsilon)})$-optimal for $\lambda-\mu$ smooth games, see e.g., \cite{Rou15}. However, for games that are not smooth (e.g., games with NE that have widely different performance and hence, a large Price of Anarchy), we need more specialized tools (see Section~\ref{sec:performance}). \\[0.2cm]
\textbf{Convergence to QRE in weighted potential games with heterogeneous agents.} If $\Gamma=\(\N,\(A_k,u_k\)_{k\in \N}\)$ describes a potential game, then more can be said about the limiting behavior of the SQL dynamics. Formally, $\Gamma$ is called a \emph{weighted potential game} if there exists a function $\phi:A\to \mathbb R$ and a vector of positive weights $w=\(w_k\)_{k\in\N}$ such that for each player $k\in \N$, $u_k\(i,a_{-k}\)-u_k\(j,a_{-k}\)=w_k\(\phi\(i,a_{-k}\)-\phi\(j,a_{-k}\)\)$, for all $i\neq j\in A_k$, and $a_{-k}\in A_{-k}$. If $w_k=1$ for all $k\in \N$, then $\Gamma$ is called an \emph{exact potential game}. Let $\Phi:X\to\mathbb R$ denote the multilinear extension of $\phi$ defined by $\Phi\(x\)=\sum_{a\in A}\phi\(a\)\prod_{k\in N}x_{ka_k}$, for $x\in X.$ We will refer to $\Phi$ as the \emph{potential function} of $\Gamma$. Using this notation, we have the following.
\begin{theorem}\label{thm:potential}
If $\Gamma=\(\N,\(A_k,u_k\)_{k\in\N}\)$ admits a potential function, $\Phi:X\to\mathbb R$, then the sequence of play generated by the SQL dynamics in \eqref{eq:main} converges to a compact connected set of QRE of $\Gamma$.
\end{theorem}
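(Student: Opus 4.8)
The plan is to exhibit an explicit strict Lyapunov function for the SQL dynamics and then apply LaSalle's invariance principle, using that the $\omega$-limit set of a bounded trajectory is nonempty, compact, and connected. By Lemma~\ref{lem:modified} the dynamics \eqref{eq:main} are the replicator dynamics of $\Gamma^H$ with rewards $r^H_{ki}\(x\)=\beta_k r_{ki}\(x\)-\alpha_k\(\ln\x+1\)$, so the first step is to show that $\Gamma^H$ inherits a weighted potential from $\Gamma$. Writing $\partial_{ki}$ for $\partial/\partial\x$, I would set
\[
\Psi\(x\):=\Phi\(x\)+\sum_{k\in\N}\frac{\delta_k}{w_k}\,H\(x_k\),\qquad \delta_k=\alpha_k/\beta_k,
\]
and, using the identity $r_{ki}\(x\)-r_{kj}\(x\)=w_k\(\partial_{ki}\Phi-\partial_{kj}\Phi\)$ (which follows from the definition of the weighted potential via the multilinear extension $\Phi$) together with $\partial_{ki}H\(x_k\)=-\(\ln\x+1\)$, verify by a direct computation that $r^H_{ki}\(x\)-r^H_{kj}\(x\)=\beta_k w_k\(\partial_{ki}\Psi-\partial_{kj}\Psi\)$ for all $i\neq j\in A_k$. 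Hence $\Gamma^H$ is a weighted potential game with potential $\Psi$ and positive weights $W_k:=\beta_k w_k$; this is the modified potential combining $\Phi$ with the agents' entropies alluded to in the introduction.

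Second, I would differentiate $\Psi$ along trajectories. Setting $g_{ki}:=\partial_{ki}\Psi\(x\)$ and $\bar g_k:=\ip{x_k}{g_k}$, the identity above shows $r^H_{ki}-W_k g_{ki}$ is independent of $i$, so \eqref{eq:replicator} reduces to $\dot\x=W_k\,\x\(g_{ki}-\bar g_k\)$ and therefore
\[
\frac{d\Psi}{dt}=\sum_{k\in\N}\sum_{i\in A_k} g_{ki}\,\dot\x=\sum_{k\in\N}W_k\sum_{i\in A_k}\x\, g_{ki}\(g_{ki}-\bar g_k\)=\sum_{k\in\N}W_k\,\mathrm{Var}_{x_k}\(g_k\)\ge 0,
\]
with equality exactly at the rest points of \eqref{eq:main} (since each $W_k>0$ and $\mathrm{Var}_{x_k}\(g_k\)=0$ forces $\dot\x=0$ for all $i$). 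Thus $\Psi$ is a non-decreasing Lyapunov function, and it is bounded because $\Phi$ is a polynomial and each $H\(x_k\)\in[0,\ln n_k]$ on the compact set $X$.

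Third, I would handle the boundary and conclude. Since $-\alpha_k\ln\x\to+\infty$ as $\x\to0$ while every other term stays bounded, the growth rate $\dot\x/\x$ diverges to $+\infty$ near $\x=0$, pushing trajectories inward; hence $\mathrm{int}\(X\)$ is forward invariant and every interior trajectory eventually remains in a compact set $K\subset\mathrm{int}\(X\)$ bounded away from $\partial X$. With $\Psi$ being $C^1$, bounded, and non-decreasing, and $\{d\Psi/dt=0\}$ equal to the rest-point set, LaSalle's invariance principle gives that the $\omega$-limit set lies in the interior rest points, which by \eqref{eq:fixed} are precisely the QRE of $\Gamma$. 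Finally, the $\omega$-limit set of a bounded trajectory is nonempty, compact, and connected, so the sequence of play converges to a compact connected set of QRE, as claimed.

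The Lyapunov computation is routine once the right potential is in hand, so I expect the main obstacle to be twofold: (i) guessing $\Psi$ with the correct \emph{per-agent} weights $\delta_k/w_k$, which is exactly what makes the heterogeneous rates $\alpha_k,\beta_k$ and the game weights $w_k$ cancel so that a single potential serves all agents; and (ii) controlling the singular entropy vector field near $\partial X$ to ensure forward invariance of the interior and a compact invariant set $K$, so that LaSalle can be applied legitimately on the non-compact interior and the limit set consists of genuine interior QRE rather than spurious boundary rest points.
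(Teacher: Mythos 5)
Your proposal is correct and follows essentially the same route as the paper: its Lemma~\ref{lem:potential} constructs the same modified potential (the paper absorbs the game weights $w_k$ into the rates $\beta_k$, so its $\delta_k$ plays exactly the role of your $\delta_k/w_k$), proves the same variance identity $\dot\Phi^H\ge 0$ with equality only at rest points, and the paper's proof of Theorem~\ref{thm:potential} then runs the same $\omega$-limit-set argument (compact, connected, contained in the rest-point set, hence consisting of QRE). The only notable difference is that you explicitly verify forward invariance of the interior so that the limit set contains interior rest points (genuine QRE) rather than boundary fixed points, a step the paper's proof leaves implicit.
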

Intuitively, the first term, $\beta_k\(r_{ki}\(x\)-\ip{x_k}{r_k\(x\)}\)$, in equation \eqref{eq:main} corresponds to agent $k$'s replicator dynamics in the underlying game (with utilities rescaled by $\beta_k$ that can also absorb agent $k$'s weight) and thus, it is governed by the potential function. The second term, $-\alpha_k\(\ln{\x}-\ip{x_k}{\ln{x_k}}\)$, is an idiosyncratic term which is independent from the environment, i.e., the other agents' choice distributions. Hence, the potential game structure is preserved --- up to a multiplicative constant for each player which represents that players' exploration rate $\delta_k$ --- and Theorem~\ref{thm:potential} can be established by extending the techniques of \cite{Kle09,Cou15} to the case of weighted potential games. This is the statement of Lemma~\ref{lem:potential} (which is also useful for the numerical experiments). 
\begin{lemma}\label{lem:potential}
Let $\Phi:X\to \mathbb R$ denote a potential function for $\Gamma=\(\N,\(A_k,u_k\)_{k\in\N}\)$, and consider the modified utilities $u^H_k\(x\):=\beta_k\ip{x_k}{r_k\(x\)}-\alpha_k\ip{x_k}{\ln{x_k}}$, for $x\in X$. Then, the function $\Phi^H\(x\)$ defined by
\begin{equation}\label{eq:potential}
\Phi^H\(x\):=\Phi\(x\)+\sum_{k\in\N}\delta_k H\(x_k\), \quad \text{for }x\in X,
\end{equation}
is a potential function for the modified game $\Gamma^H=\(\N,\(A_k,u^H_k\)_{k\in\N}\)$. The time derivative $\dot\Phi^H\(x\)$ of the potential function is positive along any sequence of choice distributions generated by the dynamics of equation \eqref{eq:replicator}
except for fixed points at which it is $0$. 
\end{lemma}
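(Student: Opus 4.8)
The plan is to prove both claims by direct differentiation, reducing everything to the replicator form \eqref{eq:replicator} supplied by Lemma~\ref{lem:modified}. Write $\Phi_{ki}:=\partial\Phi/\partial\x$ for the marginal potential and $g_{ki}:=\partial\Phi^H/\partial\x$ for the gradient of the candidate potential. By multilinearity $\Phi_{ki}=\sum_{a_{-k}}\phi\(i,a_{-k}\)\prod_{l\neq k}x_{la_l}$, so the defining identity of a weighted potential game transfers verbatim to these marginals: $r_{ki}-r_{kj}=w_k\(\Phi_{ki}-\Phi_{kj}\)$ for all $i,j\in A_k$, equivalently $r_{ki}=w_k\Phi_{ki}+c_k$ with a remainder $c_k$ independent of $i$. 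It suffices to treat the exact case $w_k=1$; the genuinely weighted case follows by absorbing $w_k$ into the rate $\beta_k$ (as noted after Theorem~\ref{thm:potential}), which rescales the entropy weight to $\delta_k/w_k$, and with $w_k=1$ the entropy weight is exactly $\delta_k$ as in the statement.

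For the first assertion I would differentiate $\Phi^H=\Phi+\sum_k\delta_kH\(x_k\)$ using $\partial H\(x_k\)/\partial\x=-\(\ln\x+1\)$, obtaining $g_{ki}=\Phi_{ki}-\delta_k\(\ln\x+1\)$; meanwhile $r^H_{ki}=\beta_kr_{ki}-\alpha_k\(\ln\x+1\)$ by Lemma~\ref{lem:modified}. Substituting $r_{ki}=\Phi_{ki}+c_k$ together with $\alpha_k=\delta_k\beta_k$ yields $r^H_{ki}=\beta_kg_{ki}+\beta_kc_k$. Thus each agent's modified payoff gradient is the positive multiple $\beta_k$ of the gradient of the single shared function $\Phi^H$, up to an action-independent shift --- precisely the defining property of a (weighted) potential for the continuous game $\Gamma^H$.

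For the monotonicity claim I would compute $\dot\Phi^H=\sum_k\sum_{i\in A_k}g_{ki}\,\dot\x$ and insert $\dot\x=\x\(r^H_{ki}-\ip{x_k}{r^H_k}\)$. Since $r^H_{ki}=\beta_kg_{ki}+\beta_kc_k$ and the shift $\beta_kc_k$ is annihilated by the mean-subtraction, each agent contributes $\beta_k\sum_{i\in A_k}\x g_{ki}\(g_{ki}-\ip{x_k}{g_k}\)=\beta_k\,\mathrm{Var}_{x_k}\(g_k\)$, the variance of the potential gradient under agent $k$'s own distribution. As every $\beta_k>0$, this gives $\dot\Phi^H=\sum_k\beta_k\,\mathrm{Var}_{x_k}\(g_k\)\ge0$, and equality forces $\mathrm{Var}_{x_k}\(g_k\)=0$ for every $k$; in the relative interior this means $r^H_{ki}$ is constant in $i$ on the support of $x_k$, which is exactly the resting-point condition $\dot\x=0$ and unwinds to the QRE equation \eqref{eq:fixed}. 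One should also note that the relative interior is forward-invariant --- the entropy gradient $-\ln\x$ repels the boundary --- so the trajectory stays clear of the singularities of $\ln\x$ and the computation is valid along genuine orbits.

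The step I expect to be the main obstacle is the bookkeeping that lets a single common function $\Phi^H$ serve all agents despite their heterogeneous $\beta_k,\alpha_k$ and weights $w_k$: one must check that $w_k$ can be absorbed into $\beta_k$ without breaking the shared structure of $\Phi$, and that the leftover action-independent terms $\beta_kc_k$ are exactly what the replicator mean-subtraction deletes. It is this cancellation, combined with the nonnegativity of the per-agent variance, that simultaneously delivers the potential property and its strict positivity away from the fixed-point set.
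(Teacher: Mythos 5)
Your proof is correct and takes essentially the same route as the paper's: reduce the weighted case to the exact one by absorbing $w_k$ into $\beta_k$, show that each agent's modified payoff gradient $r^H_{ki}$ is the positive multiple $\beta_k$ of $\partial\Phi^H/\partial x_{ki}$ (up to an action-independent shift), and then compute $\dot\Phi^H$ as a sum over agents of $\beta_k^{-1}$- (equivalently $\beta_k$-) weighted variances of the payoff gradient under $x_k$, which is nonnegative and vanishes exactly at rest points. If anything, your explicit tracking of the shift $c_k$ and its cancellation under the replicator mean-subtraction is slightly more careful than the paper, which asserts $\partial\Phi/\partial x_{ki}=\partial u_k/\partial x_{ki}$ outright even though this identity holds only up to such an action-independent term; your added observation that the interior is forward-invariant is likewise a harmless strengthening, not a departure.
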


\section{From Topology to Performance}\label{sec:performance}
While the above establish some desirable topological properties of the SQL dynamics, the effects of exploration are still unclear in practice both in terms of equilibrium selection and agents' individual performance (utility). As we formalize in Theorem~\ref{thm:catastrophe} and visualize in Section~\ref{sec:applications}, exploration -- exploitation may lead to (unbounded) improvement, but also to (unbounded) performance loss even in simple settings. \par
To compare agents' utility for different exploration-exploitation policies, it will be convenient to denote the sequence of utilities of agent $k\in \N$ by $u_k^{\text{exploit}}\(t\), t\ge0$ if there exist thresholds $\delta_k>0$ (that may depend on the initial condition $x_k\(0\)$ of agent $k$) such that $\delta_k\(t\)<\delta_k$ for all $k\in \N$, i.e., if exploration remains \emph{low} for all agents, and by $u_k^{\text{explore}}\(t\), t\ge0$ otherwise. Then we have the following. 
\begin{theorem}[\textbf{Catastrophes in Exploration-Exploitation}]
\label{thm:catastrophe}
For any number $M>0$, there exist potential games $\Gamma^M_u=\{\N,\(X_k,u_k\)_{k\in\N}\}$ and $\Gamma^M_v=\{\N,\(X_k,v_k\)_{k\in\N}\}$, positive-measure sets of initial conditions $I_u,I_v\subset X$, and exploration rates $\delta_k>0$, so that 
\begin{align*}&\lim_{t\to\infty}\(u_k^{\text{exploit}}\(t\)/u_k^{\text{explore}}\(t\)\)\ge M, \text{ and }\\
&\lim_{t\to\infty}\(v_k^{\text{exploit}}\(t\)/v_k^{\text{explore}}\(t\)\)\le 1/M
\end{align*}
for all $k\in \N$, whenever $\limsup_{t\to\infty}\delta_k\(t\)=0$ for all $k\in \N$, i.e., whenever, after some point, exploration stops for all agents. In particular, for all agents $k\in \N$, the individual --- and hence, also the aggregate --- performance loss (gain) in terms of utility due to exploration can be unbounded, even if exploration is only performed by a single agent. 
\end{theorem}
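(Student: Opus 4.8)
The plan is to exploit the \emph{path dependence} (hysteresis) of the SQL dynamics that stems from the saddle-node bifurcation structure of the QRE surface, and to realise both conclusions inside explicit two-player, two-action symmetric coordination games. In such a game the replicator part of \eqref{eq:main} (i.e.\ $\delta_k=0$) has two pure Nash equilibria separated by the stable manifold of an interior mixed equilibrium; this separatrix partitions the state space into two basins of attraction, each of positive measure. The exploration term instead pulls every trajectory toward the uniform (center) distribution, and at high exploration the unique QRE near the center is globally attracting. The decisive observation is that the center need \emph{not} lie in the basin of the high-payoff equilibrium: by shaping the off-diagonal payoffs we place the separatrix on either side of the uniform point, and thereby decide whether a transient burst of exploration transports the state into the good or the bad basin.

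\textbf{The two games.} Write $A,B,C,D$ for the symmetric payoffs $u(1,1),u(1,2),u(2,1),u(2,2)$, so that the indifference point is $y^\ast=(D-B)/\big((A-C)+(D-B)\big)$. For the loss game $\Gamma^M_u$ I would take $A=1,\ D=M,\ C=0,\ B=M-\tfrac12$, so that $(1,1)$ is the low-payoff equilibrium and $(2,2)$ the high-payoff one, yet $y^\ast=\tfrac13$ lies \emph{below} the center and the uniform distribution therefore sits in the basin of the \emph{bad} equilibrium $(1,1)$. Taking $I_u$ to be a small interval just inside the good basin, pure exploitation converges to $(2,2)$ with payoff $M$, whereas an explore-then-exploit schedule carries the state across the separatrix into $(1,1)$ with payoff $1$, giving $\lim_t u_k^{\text{exploit}}/u_k^{\text{explore}}=M$. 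For the gain game $\Gamma^M_v$ I would take the standard coordination payoffs with $A=1,\ D=M,\ B=C=0$, so that $y^\ast=M/(1+M)$ lies close to $1$ and the center falls in the \emph{large} basin of the high-payoff equilibrium $(2,2)$; taking $I_v$ just inside the small basin of $(1,1)$ then yields $\lim_t v_k^{\text{exploit}}/v_k^{\text{explore}}=1/M$. The payoffs already meet the bounds $\ge M$ and $\le 1/M$, and symmetry of the games guarantees the statement simultaneously for \emph{both} agents.

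\textbf{Proof steps.} I would proceed as follows. (i) At $\delta_k=0$ identify the two pure equilibria and the interior saddle, compute the separatrix from the indifference condition, and record that each basin has positive measure. (ii) Using Lemma~\ref{lem:modified} together with Theorem~\ref{thm:potential}, argue that for every fixed small exploration rate the dynamics converge to the unique nearby QRE and that the perturbed basins vary continuously with $\delta_k$; hence the exploit path from $I$ stays in its initial basin and converges, as $\delta_k(t)\to0$, to the corresponding pure equilibrium. (iii) For the explore path, show that during the high-exploration phase the entropic term dominates, so that the state is driven into an arbitrarily small neighbourhood of the uniform point; this uses that for large $\delta_k$ the modified potential $\Phi^H$ of Lemma~\ref{lem:potential} is maximised near the center and that the associated QRE is globally attracting. (iv) Once exploration is switched off the trajectory starts near the center, i.e.\ strictly inside the basin dictated by the game's geometry, and Theorem~\ref{thm:potential} again forces convergence to the corresponding pure equilibrium. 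Comparing the two limits yields the claimed ratios.

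\textbf{Main obstacle.} The delicate point is steps (iii)--(iv): while $\delta_k(t)$ is ramped up and down the system is \emph{non-autonomous}, so the autonomous convergence guarantees cannot be quoted directly. I would need a quasi-static, two-timescale argument showing that a sufficiently long exploration phase captures the state within the basin containing the center before $\delta_k$ returns to zero, together with a transversality estimate ensuring that the state lands a fixed distance \emph{inside} the target basin rather than on the moving separatrix. Establishing this capture uniformly --- and verifying that it persists when only a single agent varies its exploration rate, so that the relevant separatrix of the joint state space is still crossed --- is where the saddle-node bifurcation analysis of the QRE surface, and the geometric characterisation of Theorem~\ref{thm:location_qre}, do the real work.
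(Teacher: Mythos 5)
Your proposal is correct and follows essentially the same route as the paper's proof: a constructive argument built on symmetric $2\times 2$ coordination games in which the uniform distribution lies in the basin of attraction of one pure equilibrium while the initial-condition set lies in the basin of the other, so that a transient burst of exploration permanently transports the state across the separatrix (your payoff matrices differ numerically from the paper's, which uses $u_{11}=2M,\ u_{12}=0,\ u_{21}=2M-1,\ u_{22}=2$ and its $1.5$-variant, but they play the identical structural role). The quasi-static, non-autonomous convergence issue you flag as the main obstacle is handled at exactly the same informal level in the paper, which simply asserts that sufficiently high exploration drives the state to the QRE near uniform and that reducing exploration back to zero then yields convergence within that basin, justified by the disconnectedness of the QRE surface from Theorem~\ref{thm:location_qre}.
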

The proof of Theorem~\ref{thm:catastrophe} is constructive and relies on Theorem~\ref{thm:location_qre} discussed next. Theorem~\ref{thm:location_qre} characterizes the geometry of the QRE surface (connected or disconnected) which determines the bifurcation type that takes place during exploration. In turn, this dictates the possible outcomes --- and hence, the individual and collective performance --- after the exploration process, as formalized by Theorem~\ref{thm:catastrophe}. \\[0.3cm] 
\textbf{Classification of $2\times2$ coordination games and geometry of the QRE surface.} 
First, we introduce some minimal additional notation and terminology regarding \emph{coordination games}.\footnote{A more general description is in Appendix~\ref{app:mechanism}.} Two player $\N=\{1,2\}$, two action $A_k=\(a_1,a_2\), k=1,2$, \emph{coordination} games are games in which the payoffs satisfy $u_{11}>u_{21}, u_{22}>u_{12}$ and $v_{11}>v_{21}, v_{22}>v_{12}$ where $u_{ij}$ ($v_{ij}$) denotes the payoff of agent $1$ (2) when that agent selects action $i$ and the other agent action $j$. Such games admit three NE, two pure on the diagonal and one fully mixed $\(\xm,\ym\)$, with $\xm,\ym\in\(0,1\)$ (see Appendix~\ref{app:mechanism} for details). The equilibrium $\(a_2,a_2\)$ is called \emph{risk-dominant} if 
\begin{equation}\label{eq:riskdominant}
\(u_{22}-u_{12}\)\(v_{22}-v_{12}\)>\(u_{11}-u_{21}\)\(v_{11}-v_{21}\).
\end{equation} 
In particular, a NE is risk dominant if it has the largest basin of attraction (is less risky) \cite{Har88}. For symmetric games, inequality \eqref{eq:riskdominant} has an intuitive interpretation: the choice at the risk dominant NE is the one that yields the highest expected payoff under complete ignorance, modelled by assigning $\(1/2,1/2\)$ probabilities to the other agent's choices. If $u_{22}\ge u_{11}$ and $v_{22}\ge v_{11}$ with at least one inequality strict, then $\(a_2,a_2\)$ is called \emph{payoff-dominant}. 
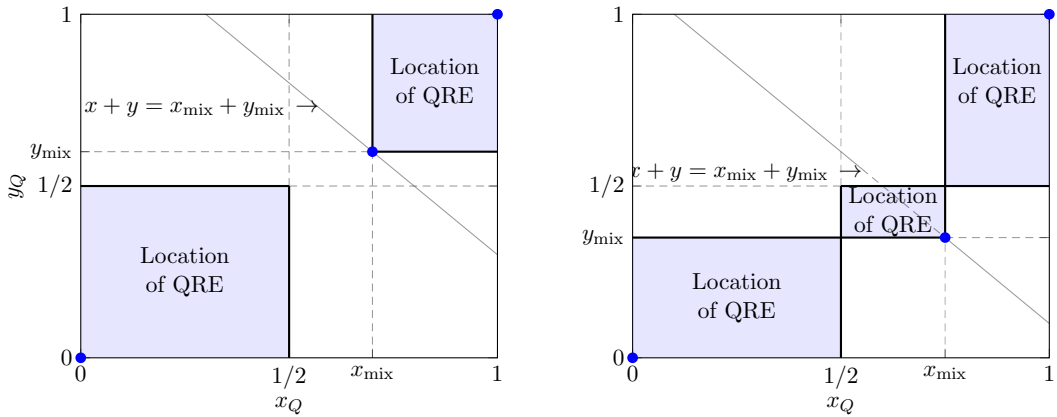
\begin{figure}[!hb]
\centering
\begin{tikzpicture}[scale=0.8, every node/.style={scale=0.8},shift=({10,6})]
\begin{axis}[ymin=0,ymax=1,xmin=0,xmax=1, xlabel=$x_Q$, ylabel=$y_Q$, axis lines =box,
xtick={0, 0.5, 0.7, 1}, xticklabels={0, 1/2, $\xm$, 1}, ytick={0, 0.5, 0.6,1}, yticklabels={0,1/2,$\ym$,1}]
\plot[name path=f11, densely dashed, opacity=0.4,domain=0:0.7] {0.6};
\plot[name path=f12, thick, domain=0.7:1] {0.6};
\plot[name path=f21, thick, domain=0:0.5] {0.5};
\plot[name path=f22, densely dashed, opacity=0.4,domain=0.5:1] {0.5};
\addplot[name path=f31, thick] coordinates {(0.5, 0) (0.5, 0.5)};
\addplot[name path=f32, densely dashed,opacity=0.4] coordinates {(0.5, 0.5) (0.5, 1)};
\addplot[name path=f41, densely dashed, opacity=0.4] coordinates {(0.7, 0) (0.7, 0.6)};
\addplot[name path=f42, thick] coordinates {(0.7, 0.6) (0.7, 1)};
\addplot[name path=f5, opacity=0.4]{1.3-x};
\node at (axis cs: 0.29,0.73) {$x+y=\xm+\ym \,\to$};
\addplot[thick,draw=blue, mark=*,mark options={color=blue}]  coordinates {(1,1)};
\addplot[thick,draw=blue, mark=*,mark options={color=blue}]  coordinates {(0,0)};
\addplot[thick,draw=blue, mark=*,mark options={color=blue}]  coordinates {(0.7,0.6)};
\plot[name path=fxb,opacity=0,domain=0:0.5] {0};
\plot[name path=fxt,opacity=0,domain=0:1] {1};
\addplot fill between[of = f12 and fxt, soft clip={domain=0.7:1}, every even segment/.style  = {blue,opacity=.1}];
\addplot fill between[of = f21 and fxb, soft clip={domain=0:0.5}, every even segment/.style  = {blue,opacity=.1}];
\node[align=center] at (axis cs: 0.25,0.25) {Location\\ of QRE};
\node[align=center] at (axis cs: 0.85,0.8) {Location\\ of QRE};
\end{axis}
\end{tikzpicture}%
\hspace{20pt}
\begin{tikzpicture}[scale=0.8, every node/.style={scale=0.8}]
\begin{axis}[ymin=0,ymax=1,xmin=0,xmax=1, xlabel=$x_Q$, axis lines =box,
xtick={0, 0.5, 0.75, 1}, xticklabels={0, 1/2, $\xm$, 1}, ytick={0, 0.5, 0.35,1}, yticklabels={0,1/2,$\ym$,1}]
\plot[name path=f11, densely dashed, opacity=0.4,domain=0.75:1] {0.35};
\plot[name path=f12, thick, domain=0:0.75] {0.35};
\plot[name path=f21, thick, domain=0.5:1] {0.5};
\plot[name path=f22, densely dashed, opacity=0.4,domain=0:0.5] {0.5};
\addplot[name path=f31, thick] coordinates {(0.5, 0) (0.5, 0.5)};
\addplot[name path=f32, densely dashed,opacity=0.3] coordinates {(0.5, 0.5) (0.5, 1)};
\addplot[name path=f41, densely dashed, opacity=0.3] coordinates {(0.75, 0) (0.75, 0.35)};
\addplot[name path=f42, thick] coordinates {(0.75, 0.35) (0.75, 1)};
\addplot[name path=f51, opacity=0.4,domain=0:0.55]{1.1-x};
\addplot[name path=f52, opacity=0.4,domain=0.75:1]{1.1-x};
\addplot[name path=f53, densely dashed,opacity=0.4,domain=0.55:0.75]{1.1-x};
\node at (axis cs: 0.27,0.54) {$x+y=\xm+\ym \,\to$};
\addplot[thick,draw=blue, mark=*,mark options={color=blue}]  coordinates {(1,1)};
\addplot[thick,draw=blue, mark=*,mark options={color=blue}]  coordinates {(0,0)};
\addplot[thick,draw=blue, mark=*,mark options={color=blue}]  coordinates {(0.75,0.35)};
\plot[name path=fxb,opacity=0,domain=0:0.5] {0};
\plot[name path=fxt,opacity=0,domain=0:1] {1};
\addplot fill between[of = f21 and fxt, soft clip={domain=0.75:1}, every even segment/.style  = {blue,opacity=.1}];
\addplot fill between[of = f12 and f21, soft clip={domain=0.5:0.75}, every even segment/.style  = {blue,opacity=.1}];
\addplot fill between[of = f12 and fxb, soft clip={domain=0:0.5}, every even segment/.style  = {blue,opacity=.1}];
\node[align=center] at (axis cs: 0.625,0.425) {Location\\ of QRE};
\node[align=center] at (axis cs: 0.25,0.175) {Location\\ of QRE};
\node[align=center] at (axis cs: 0.875,0.8) {Location\\ of QRE};
\end{axis}
\end{tikzpicture}
\caption{Geometric locus of QRE in $2\times2$ coordination games for all possible exploration rates in the two cases 
(i) $\xm,\ym\ge1/2$ (upper panel) and (ii) $\xm\ge1/2,\ym<1/2$ (bottom panel) of Theorem~\ref{thm:location_qre}. The blue dots are the NE of the underlying game $\Gamma$ (when exploration is zero). In both panels, the risk-dominant equilibrium is $\(0,0\)$.}
\label{fig:location}
\end{figure}

Depending on whether the interests of both agents are perfectly aligned --- in the sense that $\(u_{11}-u_{22}\)\(v_{11}-v_{22}\)>0$ --- or not, the QRE surface can be disconnected or connected. A formal characterization is provided in Theorem~\ref{thm:location_qre}.

\begin{theorem}[Geometric locus of the QRE equilibria in coordination games]
\label{thm:location_qre}
Consider a two-player, $\N=\{1,2\}$, two-action, $A_1=A_2=\{a_1,a_2\}$, coordination game  $\Gamma=\(\N,\(A_k,u_k\)_{k\in \N}\)$ with payoff functions $\(u_1,u_2\)$ as in equations \eqref{eq:coordination}. If $\xm+\ym>1$, then, for any exploration-exploitation rates $\alpha_x,\beta_x,\alpha_y,\beta_y>0$ it holds that
\begin{enumerate}[label=(\roman*)\,\,, wide=0pt, leftmargin=\parindent, labelsep=0pt,noitemsep]
\item If $\xm,\ym>1/2$, then any QRE $\(x_Q,y_Q\)$ satisfies either $x_Q>\xm, y_Q>\ym$ or $x_Q,y_Q<1/2$.
\item If $\xm>1/2,\ym\le 1/2$, then any QRE $\(x_Q,y_Q\)$ satisfies one of: $x_Q<1/2,y_Q<\ym$, $1/2<x_Q<\xm, \ym<y_Q<1/2$ and $x_Q>\xm,y_Q>\ym$. 
\end{enumerate}
In particular, if $\Gamma$ is symmetric, i.e., if $u_2=u_1^T$, then there exist no symmetric QRE, $\(x_Q,x_Q\)$, with $1/2<x_Q<\xm$.
\end{theorem}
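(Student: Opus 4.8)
The plan is to extract from the fixed-point equations \eqref{eq:fixed} a sign correspondence between the two coordinates of any QRE, and then observe that a point on the diagonal with $1/2<x_Q<\xm$ cannot satisfy it. First I would specialize \eqref{eq:fixed} to the $2\times2$ coordination game. Writing $x_Q$ and $y_Q$ for the probabilities that agents $1$ and $2$ play $a_1$, the two-action softmax collapses to the logistic form $x_Q=\sigma\big((r_{11}-r_{12})/\delta_x\big)$ and $y_Q=\sigma\big((r_{21}-r_{22})/\delta_y\big)$, where $\sigma(z)=1/(1+e^{-z})$. A one-line computation shows that $r_{21}(x_Q)-r_{22}(x_Q)$ is affine in $x_Q$ with slope $(v_{11}-v_{21})+(v_{22}-v_{12})>0$ (positive by the coordination inequalities) and vanishes exactly at $x_Q=\xm$; symmetrically $r_{11}-r_{12}$ is increasing in $y_Q$ and vanishes at $y_Q=\ym$.

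Because $\sigma$ is strictly increasing with $\sigma(0)=1/2$ and all rates are positive, this gives, for every QRE and all exploration rates, the two correspondences
\[
x_Q<\tfrac12 \iff y_Q<\ym, \qquad y_Q<\tfrac12 \iff x_Q<\xm
\]
(and likewise with all inequalities reversed); these are exactly the relations that drive the classification in parts (i)--(ii). Next I would invoke symmetry: when $u_2=u_1^T$ the interchange of the two agents is an automorphism of $\Gamma$, so it permutes the finite set of NE and therefore fixes the \emph{unique} fully mixed NE; since that swap sends $(\xm,\ym)$ to $(\ym,\xm)$, we get $\xm=\ym$. Together with the standing hypothesis $\xm+\ym>1$ this forces $\xm=\ym>1/2$, placing every symmetric game in case (i).

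The conclusion is then immediate. Suppose toward a contradiction that $(x_Q,x_Q)$ is a symmetric QRE with $1/2<x_Q<\xm$. Feeding its common value into the second correspondence gives $x_Q<\xm\Rightarrow y_Q<\tfrac12$, i.e. $x_Q<\tfrac12$, contradicting $x_Q>\tfrac12$. Equivalently, by part (i) every QRE lies either in $\{x_Q>\xm,\,y_Q>\ym\}$ or in $\{x_Q,y_Q<\tfrac12\}$, and a diagonal point with $1/2<x_Q<\xm$ belongs to neither region; hence no such symmetric QRE exists.

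I expect the only steps needing care to be the two set-up claims rather than the final argument: reducing \eqref{eq:fixed} to the scalar logistic equations while pinning down the sign of the slope and the zero-crossing (so that the argument of $\sigma$ increases in the opponent's probability and vanishes precisely at the mixed-NE coordinate), and verifying $\xm=\ym$ from $u_2=u_1^T$. Once these are established the contradiction is a single line, so I anticipate no substantive obstacle beyond careful bookkeeping of the coordination inequalities $u_{11}>u_{21},u_{22}>u_{12}$ and $v_{11}>v_{21},v_{22}>v_{12}$.
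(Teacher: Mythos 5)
Your proposal is correct and follows essentially the paper's own route: your two logistic correspondences ($x_Q<\tfrac12 \iff y_Q<\ym$ and $y_Q<\tfrac12 \iff x_Q<\xm$) are precisely the paper's QRE conditions $c_1\(y_Q-\ym\)+\ln\(\frac{1}{x_Q}-1\)=0$ and $c_2\(x_Q-\xm\)+\ln\(\frac{1}{y_Q}-1\)=0$ rewritten in log-odds form, and both the classification in (i)--(ii) and the symmetric-case contradiction proceed by the same exhaustive sign analysis. The only cosmetic difference is that you obtain $\xm=\ym$ via a player-swap automorphism fixing the unique mixed NE, where the paper computes $k_1=k_2$ and $\lambda_1=\lambda_2$ directly from $u_2=u_1^T$.
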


The statement of Theorem~\ref{thm:location_qre} is visualized in Figure~\ref{fig:location}. In the first case, disconnected QRE surface, the dynamics select the risk-dominant equilibrium after a \emph{saddle-node bifurcation} in the exploration phase, regardless of whether it coincides with the payoff dominant equilibrium or not. In the second case, the QRE surface is connected via two branches of saddle-node bifurcations which is consistent with the emergence of a \emph{cusp bifurcation} point. Hence, after exploration which it may rest to either of the two boundary equilibria. In short, the collective outcome of the SQL dynamics depends on the geometry of the QRE surface which is illustrated next.

\section{Experiments: Phase Transitions in Games}
\label{sec:applications}
To visualize the above, we start with $2\times 2$ coordination games and then proceed to potential games with action spaces of arbitrary size. In all cases, we consider two representative exploration-exploitation policies: an \emph{Explore-Then-Exploit} (ETE) policy \citep{Bai20}, which starts with (relatively) high exploration that reduces linearly to zero and a \emph{Cyclical Learning Rate with one cycle} (CLR-1) policy \citep{Smi17}, which starts with low exploration, increases to high exploration around the middle of the cycle and decays to (ultimately) zero exploration (i.e., pure exploitation).\footnote{The findings are qualitatively equivalent for non-linear, e.g., quadratic, changes in the exploration rates in both policies and for more that one learning cycle in the CLR policy.}\\[0.2cm]
\begin{table}
\centering
\setlength{\extrarowheight}{2pt}
\hspace{0.6cm}\begin{tabular}{*{3}{c|}}
\multicolumn{3}{c}{Pareto Coordination}\\
\cline{2-3}
& $a_1$& $a_2$ \\\hline
\multicolumn{1}{|c|}{$a_1$}& $1,1$ & $0,0$ \\\hline
\multicolumn{1}{|c|}{$a_2$} & $0,0$ & $1.5,1.8$ \\\hline
\end{tabular}\hspace{0.6cm}
\begin{tabular}{*{3}{c|}}
\multicolumn{3}{c}{Battle of the Sexes}\\
\cline{2-3}
& $a_1$& $a_2$ \\\hline
\multicolumn{1}{|c|}{$a_1$}& $1.5,1$ & $0,0$ \\\hline
\multicolumn{1}{|c|}{$a_2$} & $0,0$ & $1,2$ \\\hline
\end{tabular}\hspace{0.6cm}
\begin{tabular}{*{3}{c|}}
\multicolumn{3}{c}{Stag Hunt}\\
\cline{2-3}
& $a_1$& $a_2$ \\\hline
\multicolumn{1}{|c|}{$a_1$}& $3,3$ & $0,2$ \\\hline
\multicolumn{1}{|c|}{$a_2$} & $2,0$ & $1.5,1.5$ \\\hline
\end{tabular}
\caption{Payoffs of the games in Section~\ref{sec:applications}.}
\label{tab:games}
\end{table}
\begin{figure}[!htb]
\centering
\includegraphics[width=0.45\linewidth]{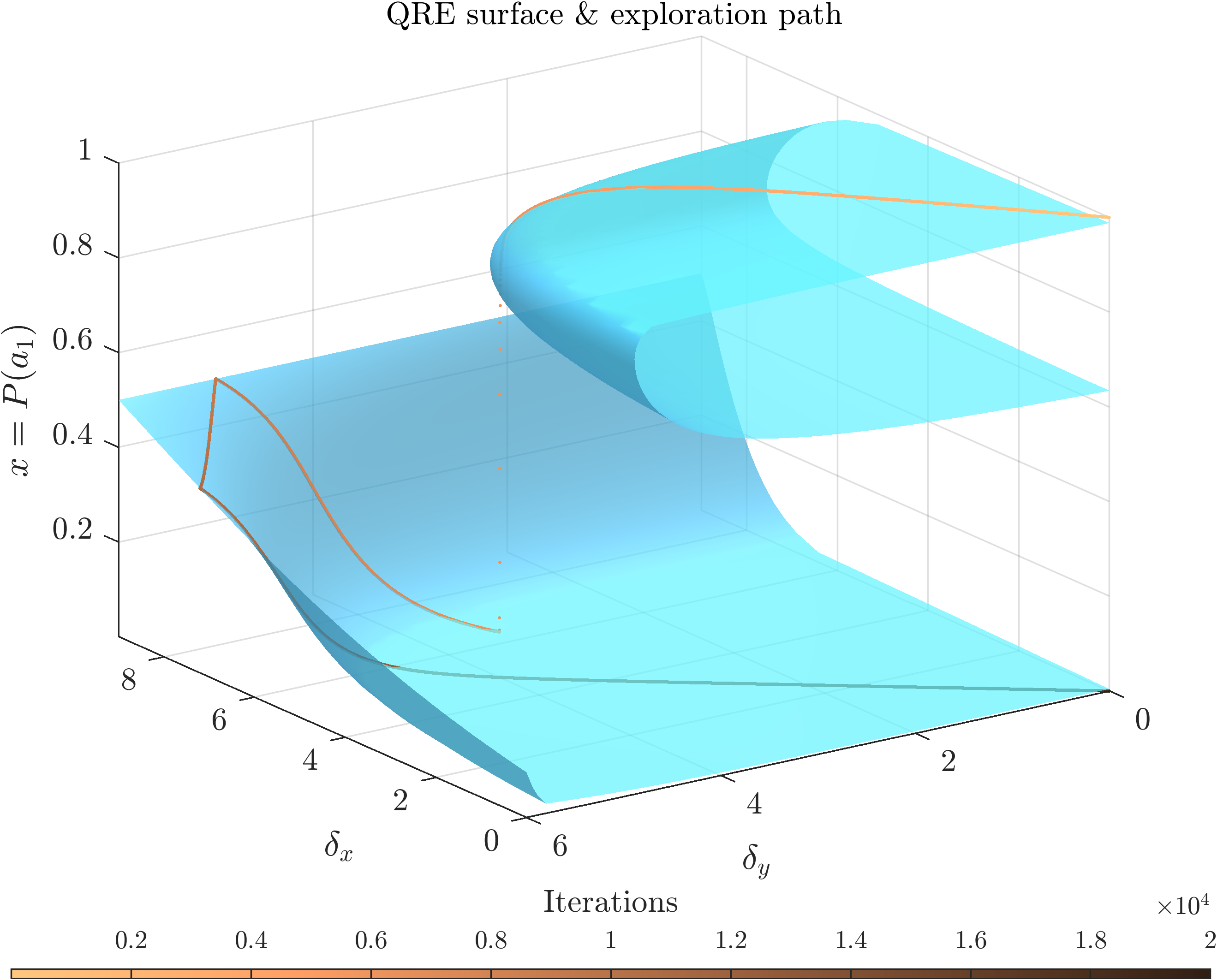} \hspace{15pt}
\includegraphics[width=0.48\linewidth]{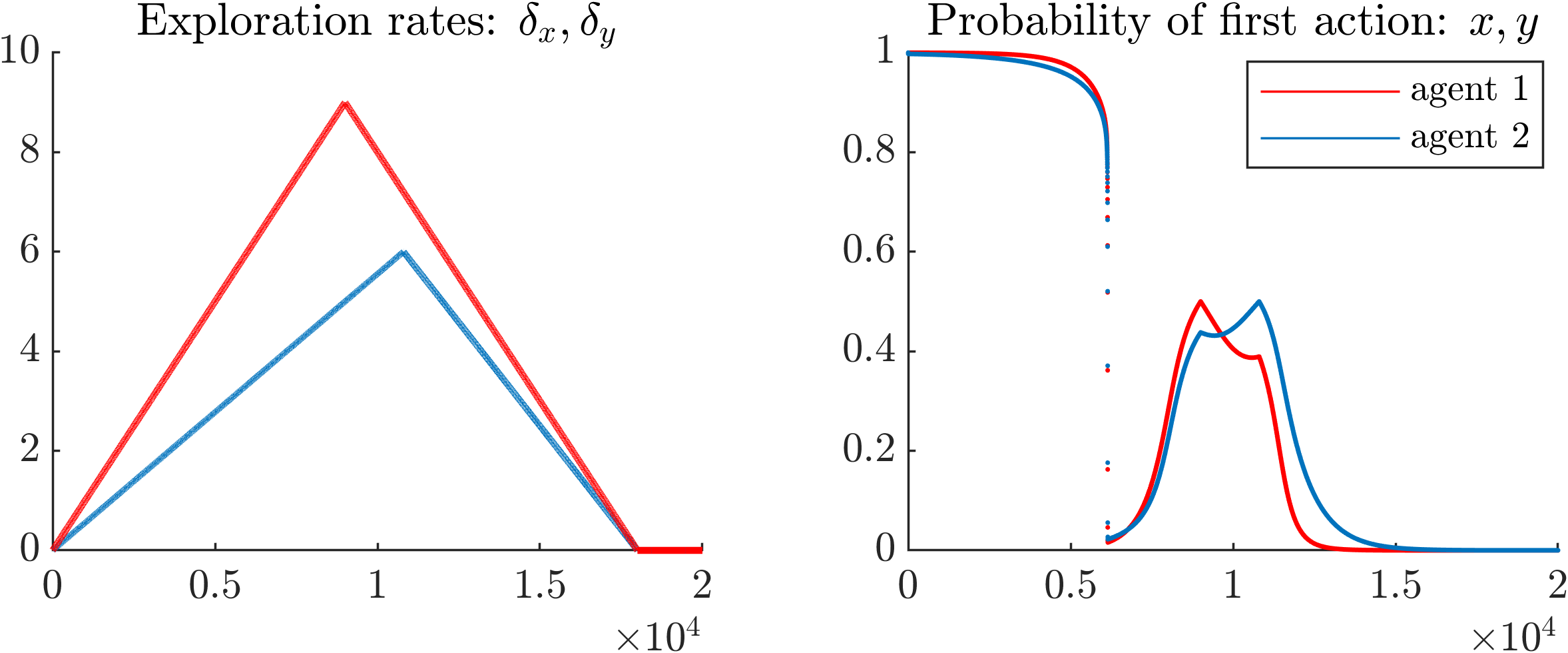}
\caption{SQL in Stag Hunt. The upper panel shows the QRE surface and the exploration path of agent 1 (light to dark line). The bottom panels show the CLR-1 exploration rates (left) and the probability of action 1 during the learning process for both agents (right). As agents increase exploration, their choice distributions undergo a \emph{saddle-node bifurcation} (disconnected surface). This prompts a permanent transition from the vicinity of the payoff dominant action profile, $\(x,y\)=\(1,1\)$, in the upper component of the QRE surface to the (0,0) equilibrium when exploration reduces back to zero (right corner of the lower component).}
\label{fig:payrisk}
\end{figure}
\begin{figure}[!ht]
\centering
\includegraphics[width=0.45\linewidth]{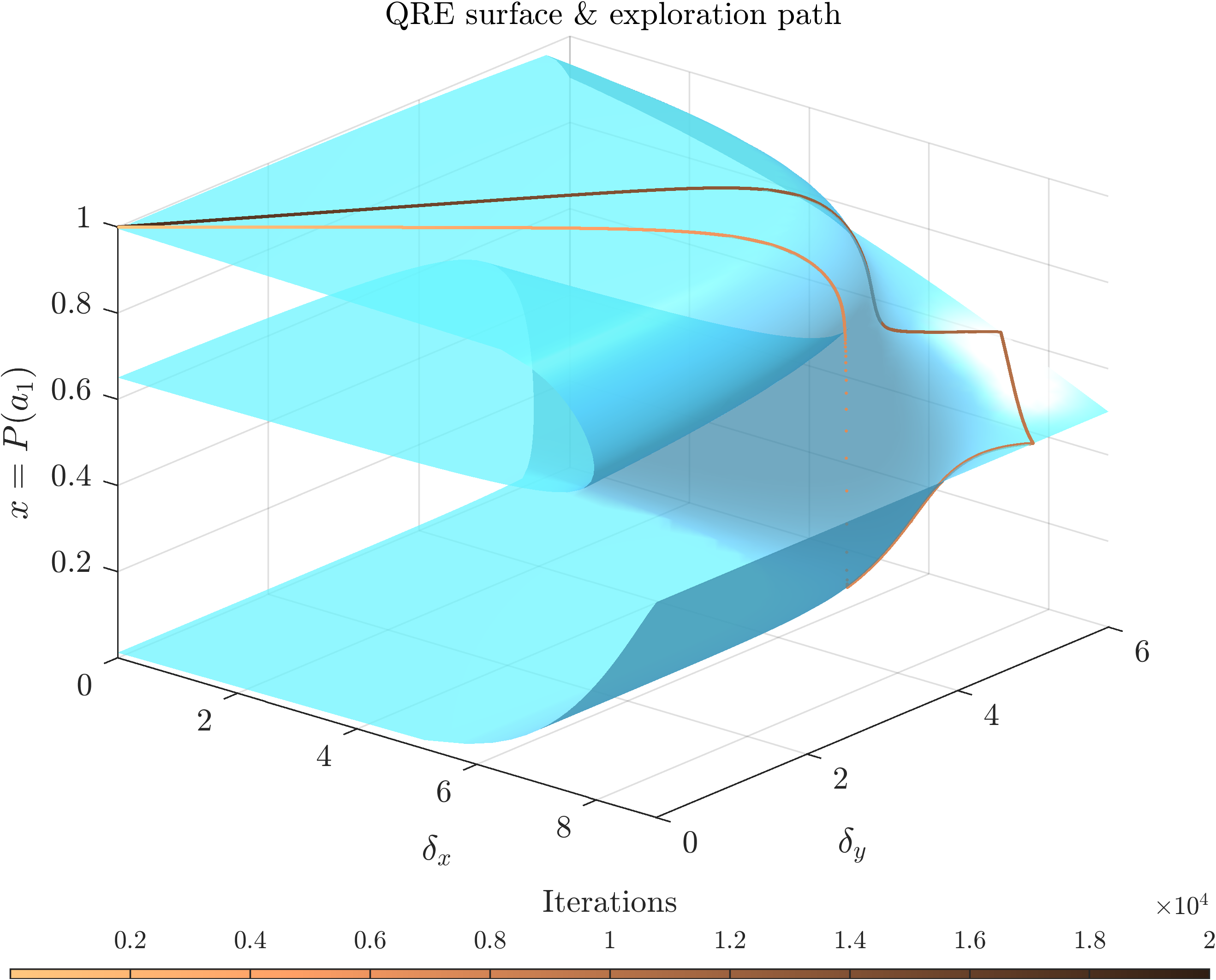}\hspace{18pt}
\includegraphics[width=0.48\linewidth]{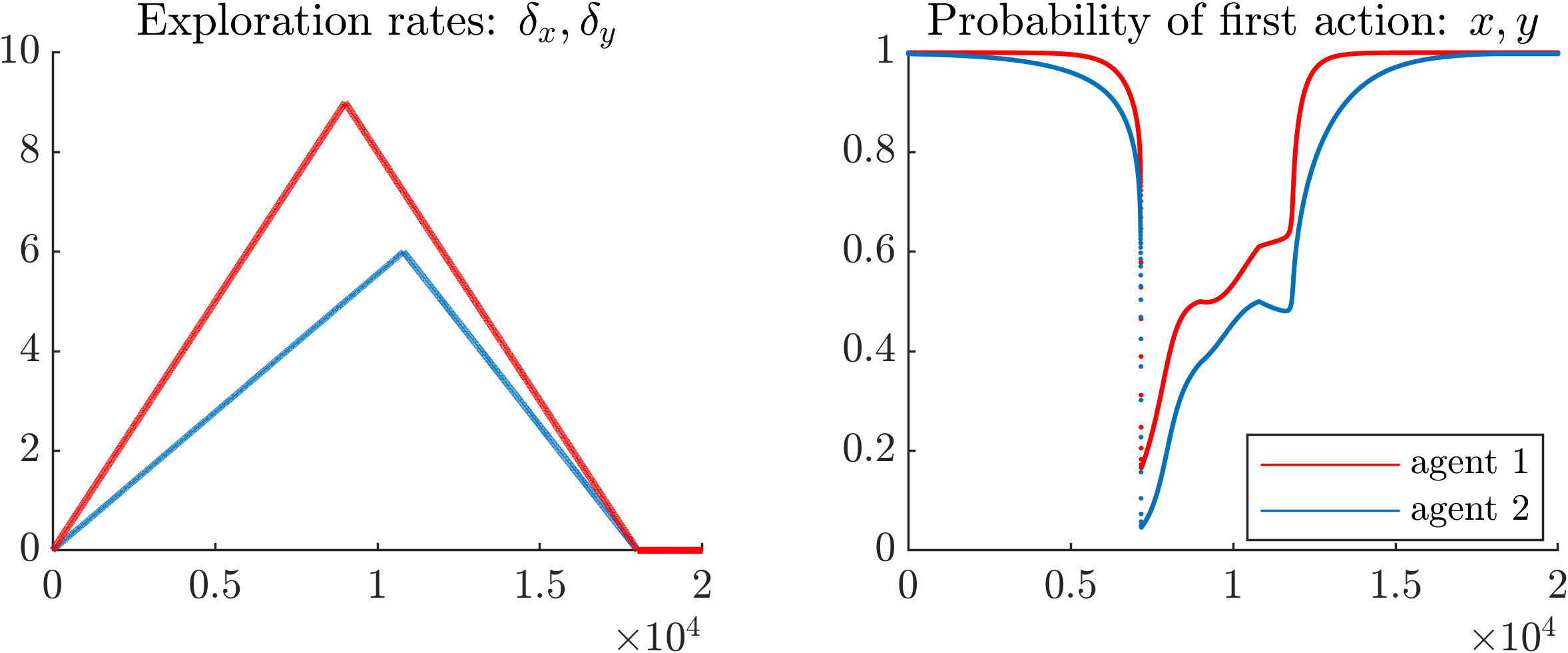}\vspace*{0.5cm}
\includegraphics[width=0.45\linewidth]{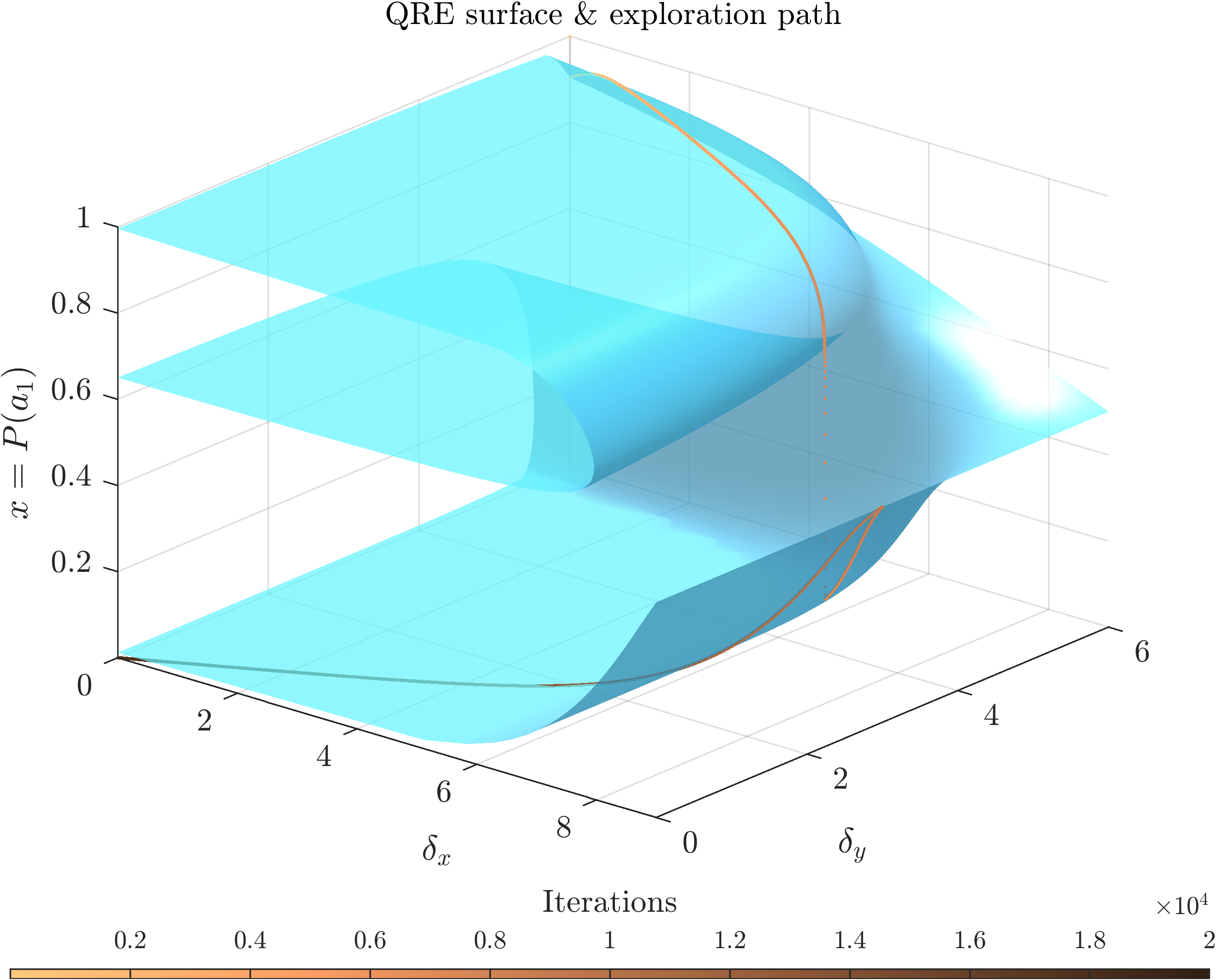}\hspace{18pt}
\includegraphics[width=0.48\linewidth]{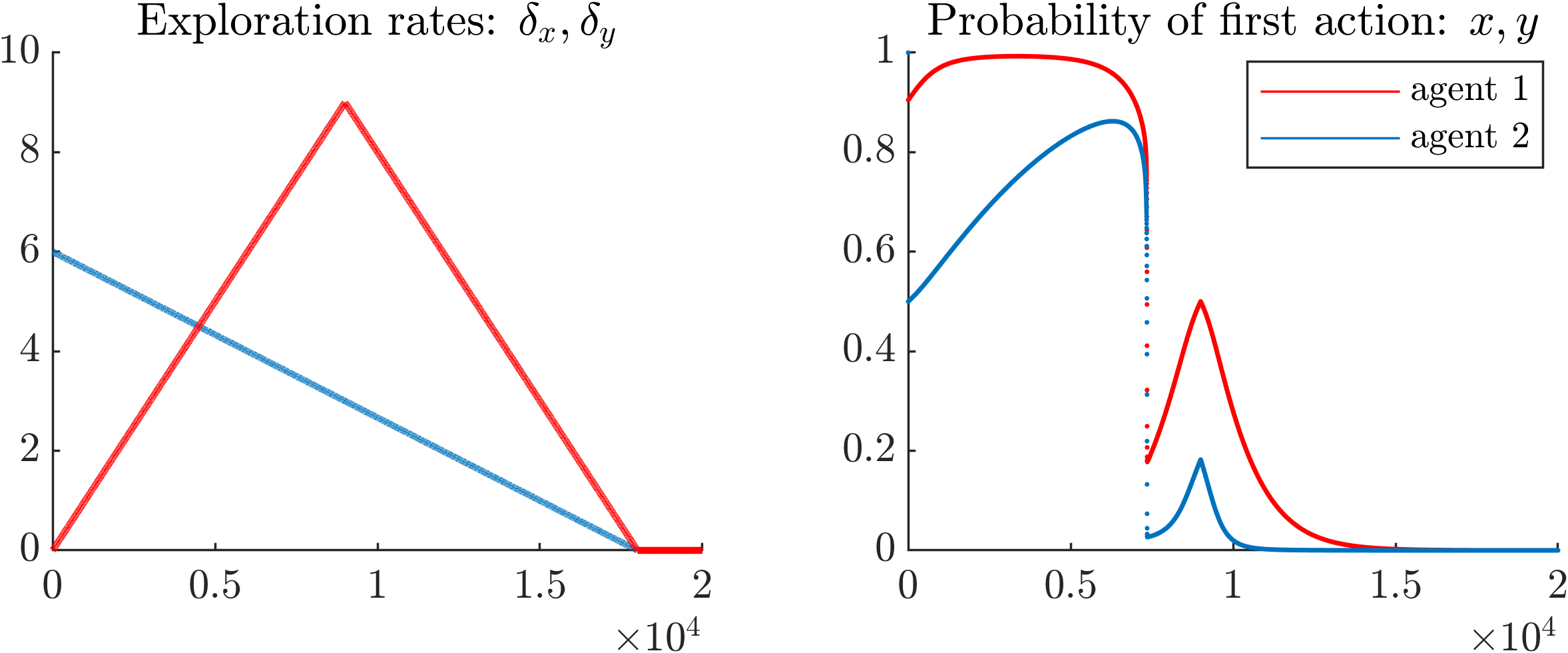}
\caption{Exploration-Exploitation in Battle of the Sexes. In contrast to Stag Hunt, the QRE manifold has two branches of saddle-node bifurcation curves (consistent with the emergence of a co-dimension $2$ cusp point) and the phase transition to the lower part of the QRE surface may not be permanent. These two cases are illustrated via the CLR-1 vs CLR-1 policies (up) and the CLR-1 vs ETE policies (down).}
\label{fig:nopay}
\end{figure}
\textbf{Coordination Games $2\times2$.}
As long as agents' interests are aligned, sufficient exploration even by a single agent leads the learning process (after exploration is reduced back to zero) to the risk dominant equilibrium regardless of whether this equilibrium coincides with the payoff dominant equilibrium or not. Typical realizations of these cases are the Pareto Coordination and Stag Hunt games (Table~\ref{tab:games}).\footnote{In the numerical experiments, we have the used the transformations in Lemma~\ref{lem:recursion} and Remark~\ref{rem:recursion} in Appendix~\ref{app:derivation} which lead to a robust discretization of the ODEs in the theoretical analysis.} \par
In Pareto Coordination, $\(a_2,a_2\)$ is both the risk- and payoff-dominant equilibrum whereas in Stag Hunt, the payoff dominant equilibrium is $\(a_1,a_1\)$. However, in both games $\xm,\ym>1/2$ (due to the aligned interests of the players) which implies that the location of the QRE is described by the upper panel in Figure~\ref{fig:location}. Accordingly, the QRE surface is disconnected and if any agent sufficiently increases their exploration rate, the SQL dynamics converge to the risk-dominant equilibrium independently of the starting point and the exploration policy of the other agent. This is illustrated and explained in Figure~\ref{fig:payrisk} (and in a similar fashion in Figure~\ref{fig:grid} in Appendix~\ref{app:experiments}). Note that in both theses cases, the risk-dominant equilibrium is the global maximizer of the potential function, see Lemma~\ref{lem:risk_dominant} and \cite{Alo10,Sch03}. \par

\begin{figure}[!ht]
\centering
\includegraphics[width=0.312\linewidth]{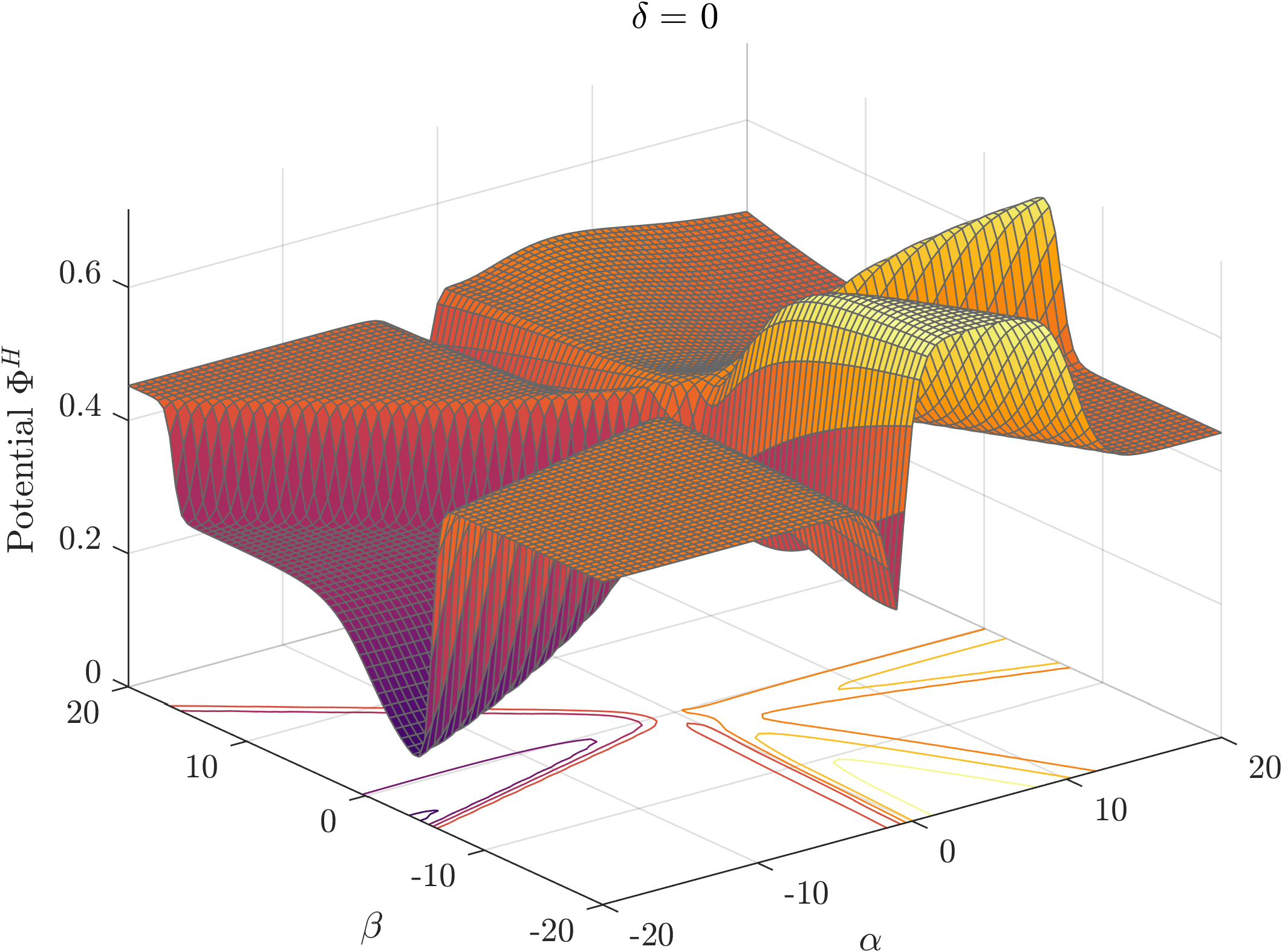}\hspace{10pt}
\includegraphics[width=0.312\linewidth]{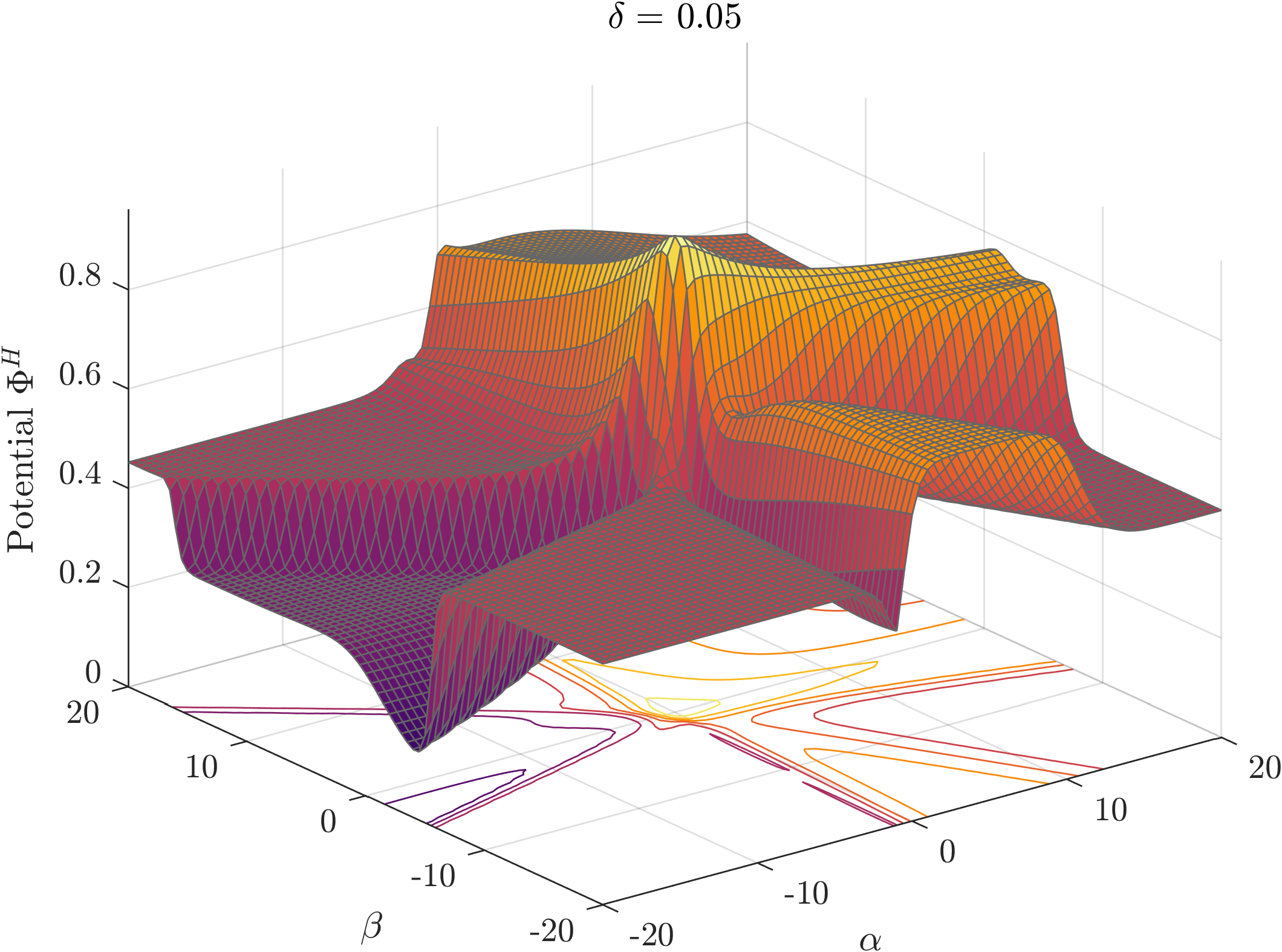}\hspace{10pt}
\includegraphics[width=0.312\linewidth]{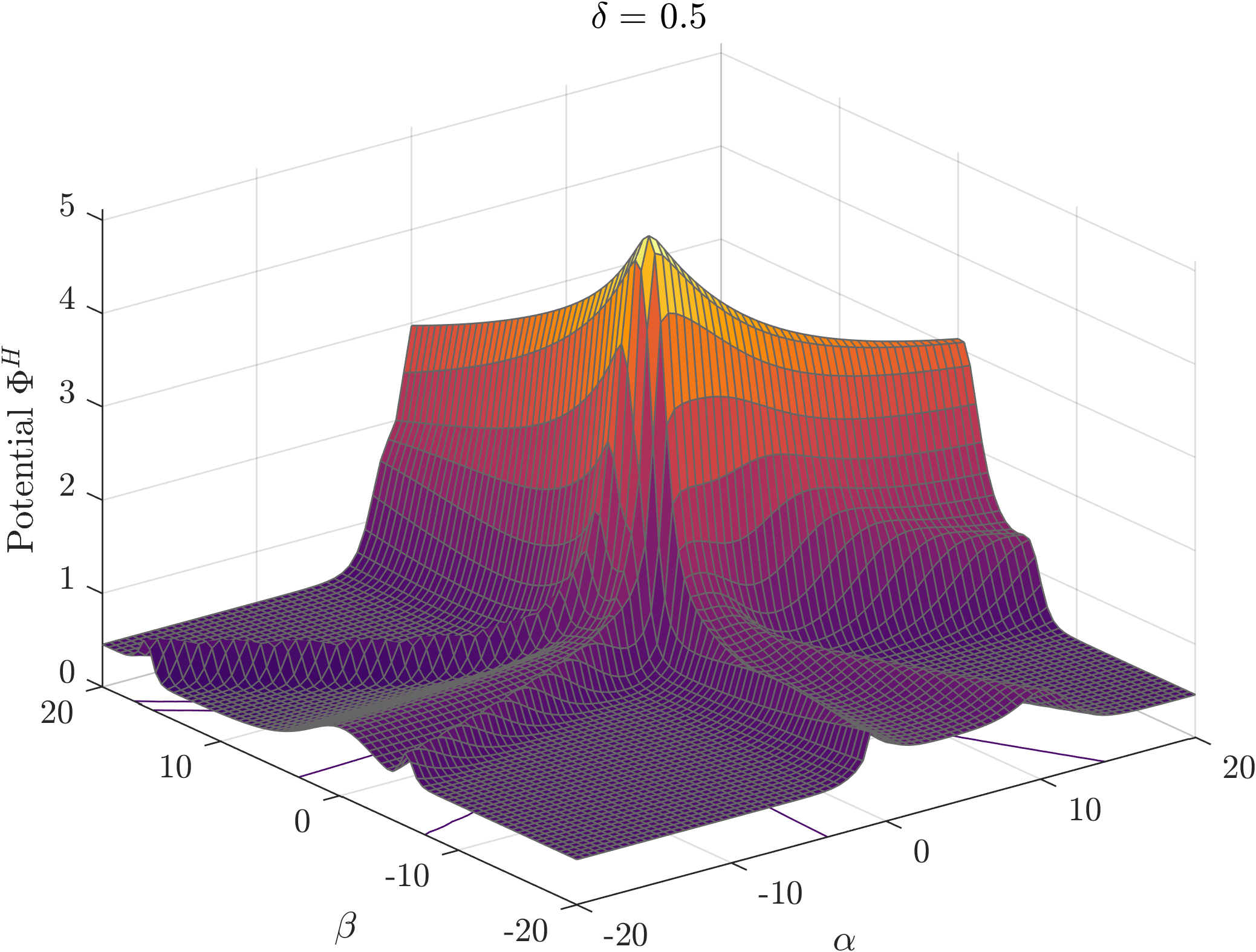}
\caption{Snapshots of the modified potential $\Phi^H$ for different exploration rates in a symmetric 2-player potential game with random payoffs in $[0,1]$. Unlike Figures~\ref{fig:payrisk} and \ref{fig:nopay}, we now visualize the potential function instead of the QRE surface. Hence, we cannot reason about the bifurcation types. However, we see that without exploration, $\delta=0$, the potential (equal to the potential, $\Phi$, of the original game) has various local maxima, whereas as exploration increases, a unique remaining attractor (maximum) forms at the vicinity of the uniform distribution, $\(0,0\)$ in the transformed coordinates.}
\label{fig:large}
\end{figure}

By contrast, if agents' interests are not perfectly aligned, then the outcome of the exploration process is not unambiguous (even if the game remains a coordination game). A representative game of this class, in which no payoff dominant equilibrium exists, is the Battle of the Sexes in Table~\ref{tab:games}. The most preferable outcome is now different for the two agents which implies that there is no payoff dominant equilibrium. However, the pure joint profile $\(a_2,a_2\)$ remains the risk-dominant equilibrium.\footnote{Note that, although well defined, risk-dominance seems to be now less appealing: if agent 1 is completely ignorant about the equilibrium selection of agent 2 (and assigns a uniform distribution to agent 2's actions), then agent 1 is better off to select action $a_1$, despite the fact that $\(a_2,a_2\)$ is the risk-dominant equilibrium.} In this class of games, the location of the QRE is described by the bottom panel in Figure~\ref{fig:location}. The QRE surface is connected and the collective output of the exploration process depends on the exploration policies (timing and intensity) of the two agents. This is illustrated in Figure~\ref{fig:nopay} which denotes two different outcomes of the learning process under the same exploration policy for agent 1 but different exploration policies for agent 2. In Appendix~\ref{app:experiments}, we provide an exhaustive treatment of the possible outcomes under the ETE and CLR-1 exploration policies.\par
Figure~\ref{fig:media} shows the QRE manifolds (surfaces) from a different perspective and highlights the bifurcation curves in the Stag Hunt and Battle of the Sexes games (Pareto Coordination is equivalent to Stag Hunt in this respect). Rotations of the images are included in the Multimedia Appendix.\par

\begin{figure}[!htb]
\includegraphics[width=0.48\linewidth]{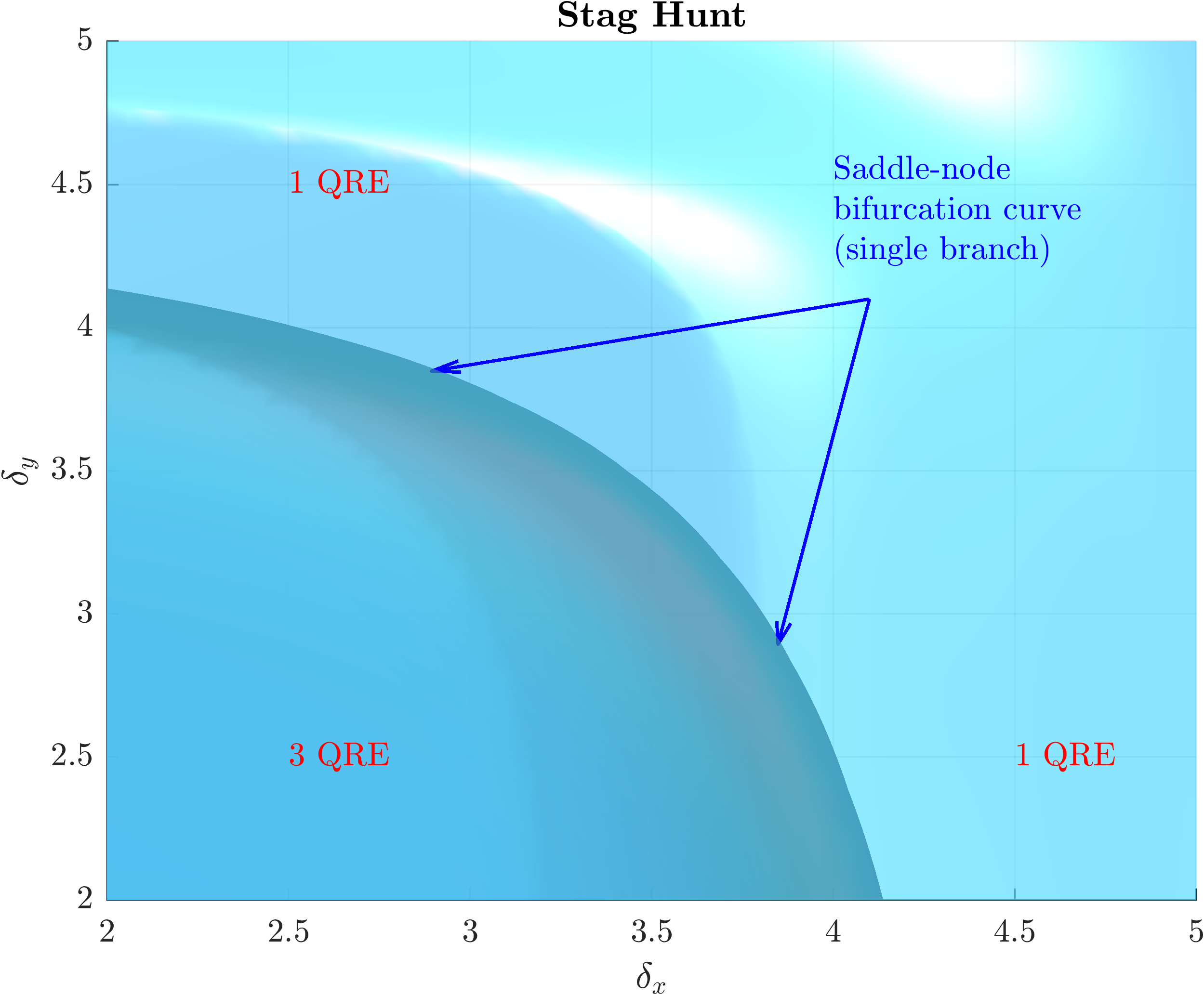}\hspace{10pt}
\includegraphics[width=0.48\linewidth]{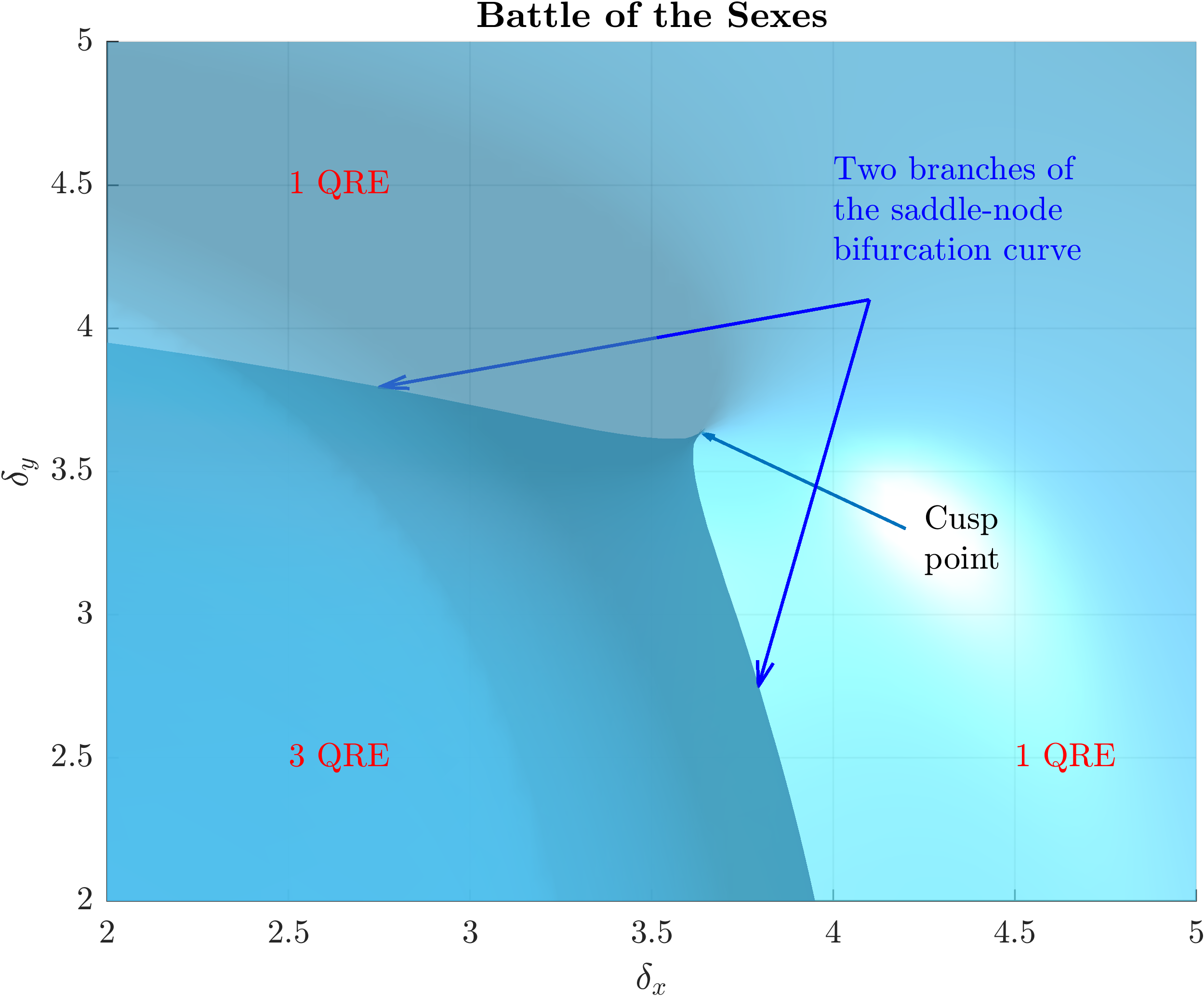}
\caption{Stag Hunt (left panel) and Battle of the Sexes (right panel). The depicted surfaces are the same as in Figures~\ref{fig:intro},\ref{fig:payrisk},\ref{fig:nopay} and \ref{fig:grid}. The current perspective is from the top of the z-axis. The x-y axes have been set between 2 and 5 for a better focus on the bifurcation curve. 360\textdegree{} rotations of these images are shown in the Multimedia Appendix.}
\label{fig:media}
\end{figure}

\textbf{Potential games in larger dimensions}
To visualize the modified potential in equation \eqref{eq:potential} of Lemma~\ref{lem:potential}, we adapt the two-dimensional projection technique of \citet{Li18}. Given a potential game with potential $\Phi$ and $n,m$ actions for agents 1 and 2, we first embed their choice distributions into $\mathbb R^{n+m-2}$ to remove the Simplex restrictions via the transformation $y_i:=\log{x_i/x_n}$ from $\mathbb R^n\to \mathbb R^{n-1}$ (with $\sum_{i=1}^n x_i=1$) for the first agent (and similarly for the second agent) and then choose two arbitrary directions in $\mathbb R^{n+m-2}$ along which we plot the modified potential $\Phi^H\(x\)=\Phi\(x\)+\sum_{k\in\N}\delta_k H\(x_k\), \quad \text{for }x\in X$, cf. equation \eqref{eq:potential}. For simplicity, we keep the exploration ratio $\delta_k:=\beta_k/\alpha_k$ equal to a common $\delta$ for both players.\footnote{A detailed description of the routine is in Appendix~\ref{app:experiments}. This method produces similar visualizations for any number of players.} \par
A visualization of randomly generated 2-player potential game is given in Figure~\ref{fig:large}. As players modify their exploration rates, the SQL dynamics converge to different QRE (local maxima) of these changing surfaces. However, when exploration is large, a single attracting QRE remains (similar to the low dimensional case).\par
In Figure~\ref{fig:potential_10}, we plot the SQL dynamics ($1e-20$ Q-value updates for each of $1e-03$ choice distribution updates) in a $2$-player potential game with $n=10$ actions and potential, $\Phi$, with random values in $[0,10]$. Both agents use CLR-1 policies. Starting from a grid of initial conditions, one close to each pure action pair, the SQL dynamics rest at different local optima before the exploration, converge to the uniform distribution when exploration rates reach their peak and then converge to the same (in this case, global) optimum when exploration is gradually reduced back to zero (horizontal line and vanishing shaded region). 
\begin{figure}[htbp!]
\centering
\includegraphics[width=0.48\linewidth]{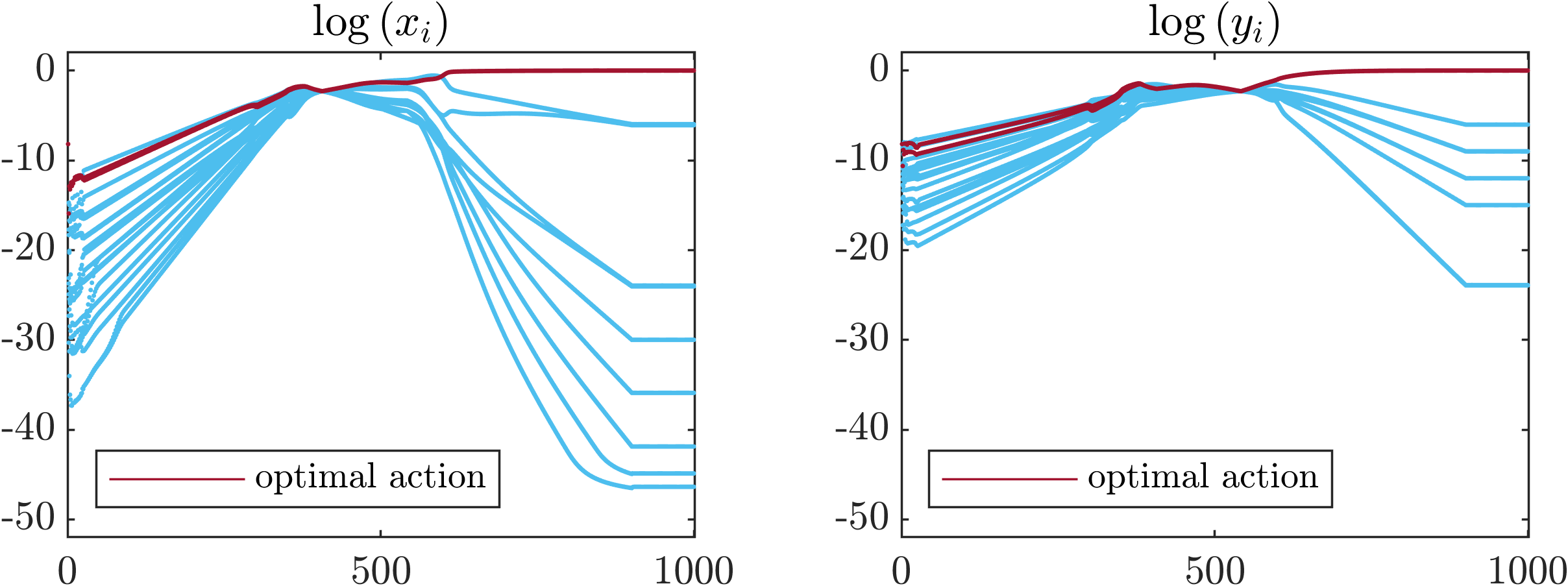}\hspace{10pt}
\includegraphics[width=0.48\linewidth]{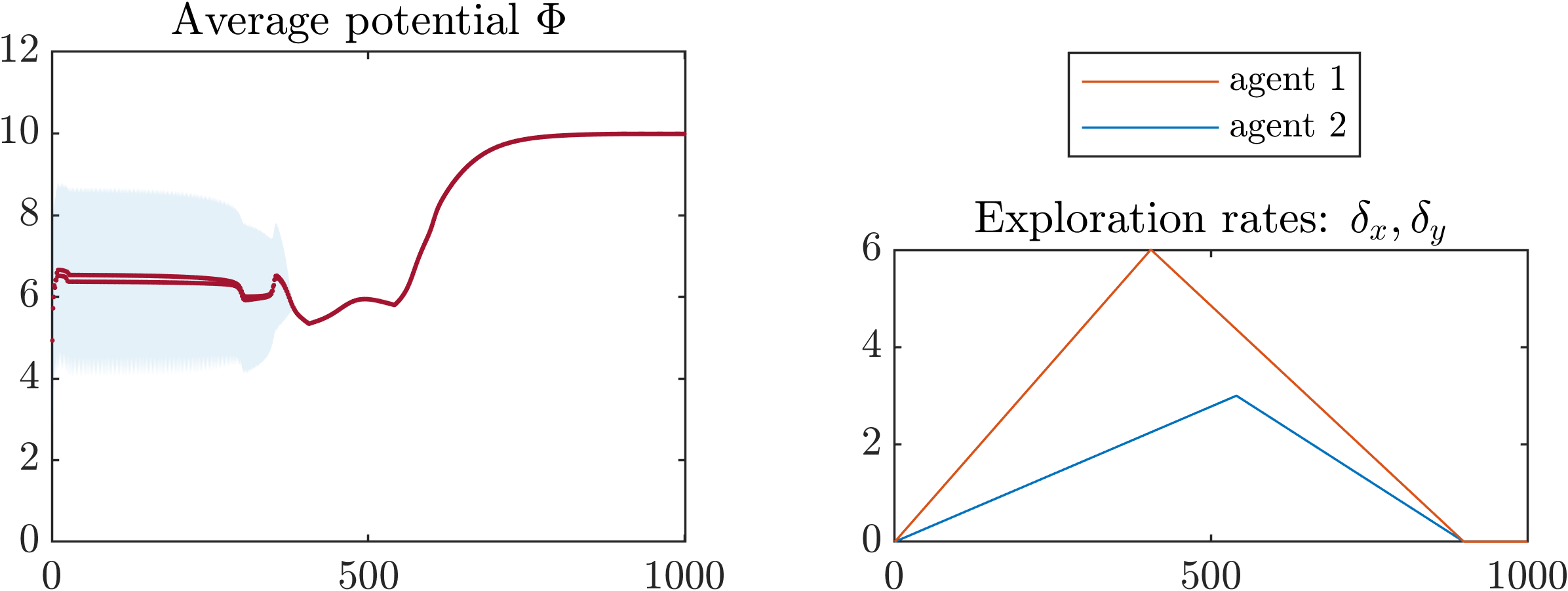}
\caption{Exploration-Exploitation with the SQL dynamics in a potential game with $n=10$ actions. The first two panels show the (log) choice distributions (with the optimal action in different color). The third panel shows the average potential over a set of $10\times10$ different trajectories (starting points) and one standard deviation (shaded region that disappears after all trajectories converge to the same choice distribution). The fourth panel shows the selected CLR-1 policies.}
\label{fig:potential_10}
\end{figure}
\bibliographystyle{plainnat}
\bibliography{catastrophe_bib}

\appendix
\section{Derivation of SQL Dynamics}
\label{app:derivation}
The mathematical connection between Q-learning and the dynamics in \eqref{eq:main} via a smoothening process (continuous time limit) at which the Q-values are interpreted as Boltzmann probabilities for the action selection can be found in \cite{Tuy03} and \cite{Sat03} among others. To make the paper self-contained, we repeat here the main arguments. Each agent $k\in \N$ keeps track of the past performance of their actions $i\in A_k$ via the Q-learning update rule
\begin{equation}\label{eq:update_q}
Q_{ki}\(n+1\)=Q_{ki}\(n\)+\alpha_k\lt r_{ki}\(n\)-Q_{ki}\(n\)\rt,\;i \in A_k
\end{equation}
where $n\in\mathbb N$ denotes the time (in discrete steps) and $\alpha_k\in[0,1]$ the learning rate or memory decay of agent $k$, cf. \cite{Sat03,Kia12}. Here $Q_{ki}\(n\)$ is called the \emph{memory} of agent $k$ about the performance of action $i\in A_k$ up to time step $n\in \mathbb N$. Agent $k\in \N$ updates their actions (choice distributions) according to a Boltzmann type distribution, with 
\begin{equation}\label{eq:update_x}
x_{ki}\(n\)=\frac{\exp{\(\beta_kQ_{ki}\(n\)\)}}{\sum_{j\in A_k}\exp{\(\beta_kQ_{kj}\(n\)\)}},\quad \text{for each } i\in A_k,
\end{equation}
where $\beta_k\in\lt0,+\infty\)$ denotes agent $k$'s learning sensitivity or adaptation, i.e., how much the choice distribution is affected by the past performance. Higher values of $\beta_k$ indicate a higher exploitation rate, i.e., proclivity of the agent towards the best performing action, whereas values of $\beta_k$ close to $0$ lead to higher exploration or randomization among the agent's available choices in $A_k$. Combining the above equations, one obtains the recursive equation of the agent's choice distribution
\begin{align*}&x_{ki}\(n+1\)=\frac{x_{ki}\(n\)\exp{\(\beta_k\(Q_{ki}\(n+1\)-Q_{ki}\(n\)\)\)}}{\sum_{j\in A_k}x_{kj}\(n\)\exp{\(\beta_k\(Q_{kj}\(n+1\)-Q_{kj}\(n\)\)\)}}\,,
\end{align*}
for each $i\in A_k$. In practice, agents perform a large number of actions (updates of Q-values) for each choice-distribution (update of Boltzmann selection probabilities). This motivates to consider a continuous time version of the learning process for each agent $k\in \N$ which results in the following update rules for both the memories $Q_{ki}$ 
\[\dot Q_{ki}=\alpha_k\lt r_{ki}\(x\)-Q_{ki} \rt, \]
where as in the main text, $r_{ki}\(x\)$ denotes $k$'s reward for selecting pure action $i\in A_k$ at state $x=\(x_k,x_{-k}\)\in X$, and the selection probabilities $x_{ki}$ of each action $i\in A_k$
\[\dot x_{ki}=\beta_kx_{ki}\(\dot Q_{ki}-\sum_{j\in A_k}\dot Q_{kj}\).\]
Combining the last two equations under the assumptions that the relationship between pairs of actions is constant over time and that the choice distributions of the various agents are independently distributed, yields the dynamics in equation \eqref{eq:main}, namely
\begin{align*}
\frac{\dot x_{ki}}{\x}&=\beta_k\(r_{ki}\(x\)-\sum_{j\in A_k}x_{kj}r_{kj}\(x\)\)-\alpha_k\(\ln{\x}-\sum_{j\in A_k}x_{kj}\ln{x_{kj}}\)
\end{align*}
for each $i\in A_k$ and for each agent $k\in \N$.

\paragraph{Time scale of updates.} In the above dynamics, there are three relevant time scales or rates from the perspective of agent $k\in \N$: (1) the rate of change of the environment, i.e., of $x_{-k}\in X_{-k}$, (2) the rate of adaptation, i.e., the rate of change of $x_k\in X_k$, which is captured by $\beta_k$ and (3) the rate of memory dissipation or exploration of the action space which is captured by $\alpha_k$. Typically, cf. \cite{Sat05}, the rate of change of the environment --- captured by the choices $x_{-k}\in Y$ of all other agents --- is slower than the changes in the choices $x_k$ of agent $k$ --- updates of equation \eqref{eq:update_x} --- which in turn, are slower than the rate of interaction --- memory updates or equivalently updates of equation \eqref{eq:update_q} --- with the environment. In other words, the agent fixes a choice distribution $x_k\in X_k$ and interacts multiple times with the other agents (environment) before updating their choice distribution according to the above scheme.\par
To determine the interplay between updates of Q-values and choice distributions, let $n$ and $n+1$ denote the time points of two successive updates of the choice distribution $x_k\(n\)$ and $x_k\(n+1\)$ of agent $k\in\N$. For any choice $i\in A_k$, let $n_i:=x_{ki}\(n\)M$ denote the (on average or expectation) number of times that agent $k$ selects action $i\in A_k$ given choice distribution $x_k\(n\)$, where $M\gg0$ is the large number of interactions of the agent with its environment under the fixed choice distribution $x_k\(n\)$. Then, using the index $t\in[0,n_i]$ such that $n=0<1<\dots <t <\dots<n_i=n+1$ to denote the timing of the interactions (the actual timing is irrelevant; only the number of interactions matters), equation \eqref{eq:update_q} yields the following recursion.

\begin{lemma}\label{lem:recursion}
Let $x_{-k}\(n\)\in X_{-k}$ and $x_k\(n\)\in X_k$ denote the choice distributions of agent $k\in\N$ and all other agents in $\N$ other than $k$ at time point $n\ge0$. Let $n+1$ denote the time point of the next update of the choice distribution of agent $k$. Then, if $n=0<1<\dots <t <\dots<n_i=n+1$ denote the time points of $M\gg0$ interactions of agent $k$ with its environment between time points $n$ and $n+1$, the updates in the $Q$-values of agent $k$ are governed in expectation by the equation
\begin{align}\label{eq:q}
Q_{ki}\(n+1\)=\;&\(1-\alpha_k\)^{n_i}Q_{ki}\(n\)+\sum_{t=0}^{n_i}\(1-\alpha_k\)^tr_i\(n_i-t\),
\end{align}
with $n_i:=M\cdot x_{ki}\(n\)$. If $x_{-k}\(n\)$ remains constant between two successive updates of the choice distribution of agent $k$ at time points $n$ and $n+1$, then $r_i\(n_i-t\)=r_i$ for any $t\in [0,n_i]$ and equation \eqref{eq:q} simplifies to 
\begin{align}\label{eq:updates}
Q_{ki}\(n+1\)=\;&\(1-\alpha_k\)^{n_i}Q_{ki}\(n\)+\frac{r_i}{\alpha_k}\(1-\(1-\alpha_k\)^{n_i+1}\).
\end{align}
\end{lemma}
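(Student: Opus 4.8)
The plan is to obtain both displayed equations by directly iterating the per-interaction Q-update rule \eqref{eq:update_q} over the sub-timeline $0<1<\dots<n_i=n+1$ of interactions that occur while agent $k$'s choice distribution is held fixed at $x_k(n)$. The starting point is a bookkeeping observation: since the agent plays $M$ times under the fixed distribution $x_k(n)$ and selects action $i$ with probability $x_{ki}(n)$, action $i$ is chosen $n_i=M\,x_{ki}(n)$ times in expectation, and under the standard asynchronous Q-learning convention only the memory of the chosen action is revised at each interaction. Hence, across the whole interval between the two choice-distribution updates, the scalar $Q_{ki}$ undergoes exactly $n_i$ applications of the affine map extracted from \eqref{eq:update_q} --- a contraction by $(1-\alpha_k)$ together with an additive reward contribution --- while it is left unchanged on the interactions at which a different action is played.

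First I would establish the general recursion \eqref{eq:q} by induction on the number of updates. Because the update map is affine in $Q_{ki}$ with multiplier $(1-\alpha_k)$, iterating it $n_i$ times contracts the initial memory $Q_{ki}(n)$ by the factor $(1-\alpha_k)^{n_i}$ and superimposes a geometrically weighted sum of the rewards collected along the way, the reward $r_i(n_i-t)$ received $t$ steps before the final interaction carrying weight $(1-\alpha_k)^t$. The phrase ``in expectation'' is precisely what lets me replace the random number of plays of $i$ by its mean $n_i$ and each realized reward by its expected value $r_{ki}(x)$ given the environment; taking expectations over the opponents' realized actions passes through cleanly because \eqref{eq:q} is linear in the rewards.

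Second, for the simplified form \eqref{eq:updates}, I would specialize to a constant environment. If $x_{-k}$ does not change between the two choice-distribution updates at time points $n$ and $n+1$, then the expected reward of action $i$ equals the same constant $r_i=r_{ki}(x)$ at every interaction, so $r_i(n_i-t)=r_i$ for all $t$ and the weighted sum in \eqref{eq:q} collapses to the geometric series $r_i\sum_{t=0}^{n_i}(1-\alpha_k)^t$. Evaluating this series with ratio $1-\alpha_k$ and using $\alpha_k=1-(1-\alpha_k)$ for the denominator yields $\tfrac{r_i}{\alpha_k}\bigl(1-(1-\alpha_k)^{n_i+1}\bigr)$, which is exactly the right-hand side of \eqref{eq:updates}.

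I expect the only genuinely delicate point to be the indexing and constant-tracking rather than any analytic difficulty: keeping straight which interactions actually update $Q_{ki}$ (only those at which $i$ is selected), aligning the reward labels $r_i(n_i-t)$ with the geometric weights $(1-\alpha_k)^t$, and handling the off-by-one in the number of summands so that the exponent $n_i+1$ and the factor $1/\alpha_k$ emerge correctly in \eqref{eq:updates}. Everything downstream of the unrolling is a one-line geometric-series evaluation.
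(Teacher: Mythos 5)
Your proposal is correct and follows essentially the same route as the paper's proof: unroll the affine per-interaction update (contraction by $(1-\alpha_k)$ plus an additive reward contribution) over the $n_i$ interactions at which action $i$ is selected to obtain \eqref{eq:q}, then collapse the constant-environment case via the partial geometric series $\sum_{t=0}^{n_i}(1-\alpha_k)^t=\frac{1}{\alpha_k}\left(1-(1-\alpha_k)^{n_i+1}\right)$ to get \eqref{eq:updates}. The paper carries out the unrolling by repeated substitution rather than by a formal induction, but the argument is the same, down to the indexing conventions (including the treatment of the additive reward term) that make the stated exponents $n_i$ and $n_i+1$ and the factor $1/\alpha_k$ come out as in the lemma.
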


\begin{proof} Assuming that agent $k$ interacts $M$ times with their environment for each (fixed) choice distribution $x_k\(n\)$, where $M$ is a large positive integer, the expected number of times that agent $k$ selects choice $i\in A_k$ is equal to $n_i=M\cdot x_{ki}\(n\)$. Hence, equation \eqref{eq:update_q} yields 
\begin{align*}
Q_{ki}\(n+1\)=\;&r_i\(n_i\)+\(1-\alpha_k\)Q_{ki}\(n_i\)\\
=\;&r_i\(n_i\)+\(1-\alpha_k\)\lt r_i\(n_i-1\)+\(1-\alpha\)Q_{ki}\(n_i-1\)\rt \\
=\;&r_i\(n_i\)+\(1-\alpha_k\)r_i\(n_i-1\)+\(1-\alpha\)^2\lt r_i\(n_i-2\)+Q_{ki}\(n_i-2\)\rt \\
=\; &\dots\\[-0.15cm]
=\; &\(1-\alpha_k\)^{n_i}Q_{ki}\(n\)+\sum_{t=0}^{n_i}\(1-\alpha_k\)^tr_i\(n_i-t\)
\end{align*}
with $t$ indexing the time points at which agent $k$ interacts with their environment between the two successive updates of its choice distribution $x_k\(n\)$ and $x_{k}\(n+1\)$. Finally, assuming that the choice distribution of all other agents remains constant between time points $n$ and $n+1$, it holds that $r_i\(n_i-t\)=r_i$ for any $t\in [0,n_i]$ and equation simpilfies to
\begin{align*}
Q_{ki}\(n+1\)=&\(1-\alpha_k\)^{n_i}Q_{ki}\(n\)\frac{r_i}{\alpha_k}\(1-\(1-\alpha_k\)^{n_i+1}\)
\end{align*}
for $\alpha_k\in[0,1)$, by the formula of the partial sums of the geometric series, \[\sum_{t=0}^{n_i}\(1-\alpha_k\)^t=\frac{1}{\alpha_k}\(1-\(1-\alpha_k\)^{n_i+1}\).\qedhere\]
\end{proof}

\begin{remark}\label{rem:recursion}
Equations \eqref{eq:updates} and \eqref{eq:q} can now be used to compute the memory updates of agent $k$ between two successive updates of agent $k$'s choice distribution via equation \eqref{eq:update_x}. Note that both equations hold in expectation (or on average) since the acting agent chooses their choice according to the choice distribution $x_k\(n\)$ each of the $M$ times that they interact with their environment. However, under the working assumption that $M\gg0$, i.e., that $M$ is a very large number, the law of large numbers suggests that the expected value $n_i=M\cdot x_{ki}\(n\)$ is close to the actual value that agent $k$ uses choice $i$ during the time that agent $k$ draws from the (fixed) choice distribution $x_k\(n\)$. This approach has been used in all experiments resulting in a fast (scalable) implementation of the Q-learning process.\par
Finally, equation \eqref{eq:q} in Lemma~\ref{lem:recursion} suggests that for the extreme values of $\alpha\in[0,1)$, the updates of the $Q$-values are 
\[Q_i\(n+1\)=\begin{dcases}r_i\(n_i\), & \text{for } \alpha\to1,\\ Q_i\(n\)+\sum_{t=0}^{n_i} r_i\(t\), & \text{for } \alpha=0.\end{dcases}\]
This recovers the intuition that when $\alpha=0$, the agent has \emph{perfect memory}, whereas for $\alpha\to1$, the agent is completely oblivious of past rewards.
\end{remark}

\section{Omitted Proofs of Section~\ref{sec:results}}
\label{app:proofs}

\begin{proof}[Proof of Lemma~\ref{lem:modified}]
Using the defintion of $r^H_{ki}\(x\)$, we have that 
\begin{align*}
r^H_{ki}\(x\)-\ip{x_k}{r^H_k\(x\)}=\;&\beta_k\(r_{ki}\(x\)-\ip{x_k}{r_k\(x\)}\)-\alpha_k\(\ln{\x}+1-\ip{x_k}{\ln{x_k}}-\ip{x_k}{1}\). \end{align*}
Since $\ip{x_k}{1}=\sum_{j\in A_k}x_{kj}=1$, the right side reduces to the expression in \eqref{eq:qre_ip}.
\end{proof}

To compare the performance of two different choice distributions $p,x\in X$, we will use the notion of \emph{KL-divergence}, $D\(p\,\|\,x\)$, which is a measure of the distance from $p$ to $x$ defined by $D\(p\,\|\,x\):=-\sum_{i=1}^n p_i\ln\(\frac{x_i}{p_i}\)$. To prove Theorem~\ref{thm:regret}, we will need the following important property that follows immediately from the definition of the KL-divergence.
\begin{lemma}\label{lem:klpositive}
Let $p=\(p_1,p_2,\dots,p_n\)\in X$ be fixed and let $x=\(x_1,x_2,\dots,x_n\)\in X$ denote an arbitrary probability distributions (mixed strategy or population state) in $X$. Also let $\ip{\cdot}{\cdot}$ denote the inner product in $\mathbb R^n$. Then, 
\[\ip{p}{\ln{p}}\ge\ip{p}{\ln{x}}, \quad \text{for any } x\in X,\]
with equality if and only if $x=p$. 
\end{lemma}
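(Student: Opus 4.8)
The plan is to recognize that the claimed inequality is exactly the nonnegativity of the KL-divergence (Gibbs' inequality). By the definition given just above, $D\(p\,\|\,x\)=-\sum_{i=1}^n p_i\ln\(x_i/p_i\)=\ip{p}{\ln{p}}-\ip{p}{\ln{x}}$, so establishing $\ip{p}{\ln{p}}\ge\ip{p}{\ln{x}}$ is equivalent to showing $D\(p\,\|\,x\)\ge0$, with equality precisely when $x=p$. Thus the ``immediate'' content of the lemma is really this one classical fact, which I would prove from scratch rather than cite.

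The single analytic tool I would use is the elementary logarithmic inequality $\ln t\le t-1$, valid for all $t>0$, with equality if and only if $t=1$. First I would restrict attention to the indices with $p_i>0$: terms with $p_i=0$ contribute nothing under the convention $0\ln 0=0$, and if some $x_i=0$ while $p_i>0$ the left-hand quantity diverges to $+\infty$ and the inequality is trivial. On this support I would apply $\ln t\le t-1$ with $t=x_i/p_i$ and weight each instance by $p_i$, yielding
\[
\sum_{i:\,p_i>0}p_i\ln\(\frac{x_i}{p_i}\)\le\sum_{i:\,p_i>0}p_i\(\frac{x_i}{p_i}-1\)=\sum_{i:\,p_i>0}x_i-\sum_{i:\,p_i>0}p_i\le 1-1=0 .
\]
This telescoping step is the heart of the argument: the right-hand side collapses because both $x$ and $p$ lie in the simplex $X$, so their coordinate sums are each $1$.

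For the equality case I would track when each invocation of $\ln t\le t-1$ is tight. Equality in the summed bound forces $x_i/p_i=1$ for every $i$ with $p_i>0$, i.e. $x_i=p_i$ on the support; combined with $\sum_i x_i=\sum_i p_i=1$ this also pins down $x_i=0=p_i$ off the support, so $x=p$. The only real difficulty here is bookkeeping --- carefully handling the boundary of the simplex and the zero-probability coordinates --- since the pointwise inequality does all the genuine analytic work. An alternative route applies Jensen's inequality to the strictly convex function $-\ln$, treating $x_i/p_i$ as a random variable with law $p$, giving $-\sum_i p_i\ln\(x_i/p_i\)\ge-\ln\(\sum_i x_i\)=0$ and the same equality condition from strict convexity; I would keep the $\ln t\le t-1$ proof as the primary one for its self-containedness.
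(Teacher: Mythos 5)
Your proof is correct and takes essentially the same route as the paper's: both identify the difference $\ip{p}{\ln{p}}-\ip{p}{\ln{x}}$ with the KL-divergence $D\(p\,\|\,x\)$ and obtain non-negativity from the elementary bound $\ln t\le t-1$ (Gibbs' inequality). The only distinction is that you prove Gibbs' inequality in full, including the boundary and equality bookkeeping, where the paper simply cites it.
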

\begin{proof}
By linearity of the inner product in $\mathbb R^n$, and non-negativity of the KL-divergence (which follows from Gibbs inequality and the fact that $\ln{x}\le x-1$ for all $x>0$), we have that
\begin{align*}
\ip{p}{\ln{p}-\ln{x}}&=\sum_{i=1}^np_i\(\ln p_i-\ln{x_i}\)=-\sum_{i=1}^n p_i\ln\(\frac{x_i}{p_i}\)=D\(p\|x\)\ge0
\end{align*}
with equality if and only if $x=p$. 
\end{proof}

\begin{proof}[Proof of Theorem~\ref{thm:regret}]
Consider an agent $k\in \N$ and let $p_k\in X_k$ denote the agent's optimal strategy in hindsight, i.e., $p_k=\argmax_{x'_k\in X_k}\int_0^Tu^H_k\(x'_k;x_{-k}\(t\)\)dt$. Let also $x_{k}\(t\)$ denote the sequence of play generated by the dynamics in \eqref{eq:replicator} for an arbitrary initial condition $x_k\(0\)\in X_k$. Then, by taking the time derivative of the term $\sum_{i\in A_k}p_{ki}\ln{\(x_{ki}\(t\)\)}$, we obtain
\begin{align*}
\frac{d}{dt} \ip{p_k}{\ln{x_k\(t\)}}&=\sum_{i\in A_k}p_{ki}\cdot\frac{\dot x_{ki}}{\x}=\sum_{i\in A_k}p_{ki}\lt r^H_{ki}\(x\)-\ip{x_k}{r^H_k\(x\)}\rt\\
& =\sum_{i\in A_k}p_{ik}\lt\beta_k\(r_{ki}-\ip{x_k}{r_k\(x\)}\)-\alpha_k\(\ln{\x}-\ip{x_k}{\ln{x_k}}\)\rt\\
& =\beta_k\ip{p_k}{r_k\(x\)}-\alpha_k\ip{p_k}{\ln{x_k}}-\ip{x_k}{\beta_kr_k\(x\)-\alpha_k\ln{x_k}}\\
& \ge \beta_k\ip{p_k}{r_k\(x\)}-\alpha_k\ip{p_k}{\ln{p_k}}-u_k^H\(x_k;x_{-k}\)\\
& =u_k^H\(p_k;x_{-k}\(t\)\)-u_k^H\(x_k\(t\);x_{-k}\(t\)\)
\end{align*}
where the inequality has been established in Lemma~\ref{lem:klpositive}. Integrating both sides of the previous inequality from timepoint $0$ to $T>0$, and using the definition of $R_k\(T\)$ in equation \eqref{eq:regret} we get
\begin{align*}
&\sum_{i\in A_k}p_{ki}\(\ln{x_{ki}\(T\)}-\ln{x_{ki}\(0\)}\)\ge\int_{0}^T u_k^H\(p_k;x_{-k}\(t\)\)-u_k^H\(x_k\(t\);x_{-k}\(t\)\) dt = R^H_k\(T\).
\end{align*}
Since $x_{ki}\(T\)\in [0,1]$ for all $i=1,2,\dots,n$, and since $\sum_{i\in A_k}p_{ki}=1$, the left side is upper bounded by $-\sum_{i \in A_k}p_{ki}\ln{\(x_{ki}\(0\)\)}$ which is a constant with respect to $T$. Hence, 
\[\limsup_{T\to\infty}R^H_k\(T\)\le -\sum_{i\in A_k}p_{ki}\ln{x_{ki}\(0\)},\]
which concludes the proof.
\end{proof}

\begin{proof}[Proof of Lemma~\ref{lem:potential}]
Before proceeding to the proof of the statement of Lemma~\ref{lem:potential}, observe that the multiplicative constants $\beta_k, k\in\N$ in equation \eqref{eq:main} are essentially equivalent to a rescaling of the payoffs of agent $k$ in the underlying game. Thus, in a weighted potential game $\Gamma$ with vector of positive weights $w=\(w_k\)_{k\in\N}$ satisfying 
\[u_k\(i,a_{-k}\)-u_k\(j,a_{-k}\)=w_k\(\phi\(i,a_{-k}\)-\phi\(j,a_{-k}\)\),\]
for all $i\neq j\in A_k,a_{-k}\in A_{-k}$, for all agents $k\in \N$, one may rescale the parameters $\beta_k$ to $\beta'_k=\beta_k/w_k$ for all $k\in \N$ and consider the resulting \emph{exact} potential game instead. This implies that we can adjust the techniques of \cite{Kle09,Cou15} to prove the statement of Lemma~\ref{lem:potential}. \par
In particular, to see that $\Phi^H\(x\)$ defines a potential for $\Gamma^H$, consider the partial derivatives of $\Phi^H\(x\)$ at a point $x=\(x_{ki}\)_{k\in N,i\in A_k}\in X$,
\begin{align*}
\frac{\partial}{\partial \x}\Phi^H\(x\)&=\frac{\partial}{\partial \x}\Phi\(x\)-\frac{\alpha_k}{\beta_k}\(\ln{\x}+1\)\\&=\frac{\partial}{\partial \x}u_k\(x\)-\frac{\alpha_k}{\beta_k}\(\ln{\x}+1\)\\&= r_{ki}\(x\)-\alpha_k\(\ln{\x}+1\)=\frac{1}{\beta_k}r^H_{ki}\(x\)
\end{align*}
where $\frac{\partial}{\partial \x}u_k\(x\)=r_{ki}\(x\)$ by definition and $\frac{\partial}{\partial \x}\Phi\(x\)=\frac{\partial}{\partial \x}u_k\(x\)$ since $\Phi\(x\)$ is a potential function for the unmodified game with utilities $u_k\(x\), k\in \N$. Hence, $\Gamma^H$ is a weighted potential game with potential function $\Phi^H\(x\)$ for $x \in X$ and vector of weights $\beta=\(\beta_k\)_{k\in \N}$. Given the above, taking the time derivative of the potential $\Phi^H\(x\)$ yields
\begin{align*}
\dot \Phi^H\(x\)& =\sum_{k\in \N}\sum_{i\in A_k}\(\frac{\partial}{\partial x_{ki}}\Phi^H\(x\)\)\dot x_{ki}\\
&=\sum_{k\in \N}\frac{1}{\beta_k}\sum_{i\in A_k}r^H_{ki}\(x\)\dot x_{ki}\\
&=\sum_{k\in \N}\frac1\beta_k\sum_{i\in A_k}r^H_{ki}\(x\) \x \lt r^H_{ki}\(x\)-\ip{x_k}{r^H_k\(x\)}\rt\\
&=\sum_{k\in \N}\frac1\beta_k\lt\sum_{i\in A_k}\x \(r^H_{ki}\(x\)\)^2-\(\sum_{i\in A_k}\x r^H_{ki}\(x\)\)^2\rt \ge 0,
\end{align*}
where the last inequality follows directly from the Cauchy-Schwartz inequality (equivalently by observing that the term in braces is the variance of the quantities $r_{ki}^H\(x\), i\in A_k$ under the distribution $x_k\in X_k$). Accordingly, equality holds if the dynamics are at a fixed point which concludes the proof.
\end{proof}

\begin{proof}[Proof of Theorem~\ref{thm:potential}]
Given Theorem~\ref{lem:potential}, and the fact that the fixed points of \eqref{eq:replicator} are precisely the QRE of $\Gamma$, it remains to show that any sequence of play $\(x_k\(t\)\)_{k\in\N}, t\ge0$ converges to a compact, connected set consisting entirey of equilibria, i.e., of points $x\in X$ for which $\Phi^H\(x\)$ is constant and equal to $0$. \par
To see this, observe that for any sequence of play $\(x\(t\)\)_{t\ge0}\subseteq X$, the limit set $\Omega$ is defined as 
\[\Omega=\bigcap_{s\in \mathbb R_+}\text{cl}\{x\(t\):t>s\}\]
where $\text{cl}\{S\}$ denotes the closure of set $S$. Hence, $\Omega$ is compact and connected as the decreasing intersection of compact, connected sets. Moreover, since $\Phi^H\(x\(t\)\)$ is increasing by Lemma~\ref{lem:potential} and bounded on $X$ by definition, it follows that $\Phi^H\(x\(t\)\)$ converges to a value $\Phi^*=\sup{\Phi^H\(x\(t\)\)}$ for any sequence of play $x\(t\)_{t\ge0}\subseteq X$. By continuity of $\Phi^H$, this implies that $\Phi^*=\Phi^H\(x^*\)$ for all $x^*\in \Omega$. Hence, if $x^*\(t\)_{t\ge0}\subseteq \Omega$, it must be that $\Phi^H\(x^*\)=\Phi^*$ for all $t\ge0$. Since $\Phi^H\(x\(t\)\)$ is strictly increasing in $t$ unless $x\(t\)$ is a sequence equilibria, it follows that $\Omega$ consists entirely of equilibria of the game $\Gamma^H$ which are QRE in $\Gamma$. This concludes the proof. 
\end{proof}

\section{Omitted Materials: Section~\ref{sec:performance}}
\label{app:mechanism}

The following provides a detailed notation for the coordination games discussed in Section~\ref{sec:applications}. Some content is repetitive.

\subsection{Coordination games}
\label{sub:coordination}
Consider a 2-player $\N=\{1,2\}$, 2-action $A_1=A_2=\{a_1,a_2\}$ game $\Gamma=\(\N,\(A_k,u_k\)_{k\in\N}\)$ with payoff matrices \vspace*{-0.5cm}
\begin{subequations}
\begin{align}\label{eq:coordination_1}
u_1=\bordermatrix{
~ & a_1 & a_2 \cr
a_1 & u_{11} & u_{12} \cr
a_2 & u_{21} & u_{22} \cr},\;u_2=\bordermatrix{
~ & a_1 & a_2\cr
a_1 & v_{11} & v_{21} \cr
a_2 & v_{12} & v_{22} \cr},
\intertext{where $u_{ij}$ ($v_{ij}$) denotes the payoff of agent 1 (2) when this agent uses action $a_i$ and the other agent uses action $a_j$ for $i,j=1,2$. $\Gamma$ is a \emph{coordination game}, if}
\label{eq:coordination_2}
u_{11}>u_{21}, u_{22}>u_{12} \text{ and } v_{11}>v_{21}, v_{22}>v_{12}
\end{align}\label{eq:coordination}
\end{subequations}hold for the payoffs of agents 1 and 2, respectively. In this case, the game admits three NE that can be described in terms of the probabilities $x,y\in[0,1]$ of agents 1 and 2 using action $a_1$: two pure NE $\(x,y\)=\(1,1\)$ and $\(x,y\)=\(0,0\)$ that correspond to the pure action profiles $\(a_1,a_1\)$ and $\(a_2,a_2\)$ and one fully mixed at $\(\xm,\ym\)=\(\frac{\lambda_2}{k_2},\frac{\lambda_1}{k_1}\)$ where $\lambda_1:=u_{22}-u_{12}, k_1:={u_{11}-u_{12}-u_{21}+u_{22}}$ and $\lambda_2:=v_{22}-v_{12}, k_2:=v_{11}-v_{12}-v_{21}+v_{22}$, with $\lambda_i,k_i>0$ for $i=1,2$. The equilibrium $\(a_2,a_2\)$ is called \emph{risk-dominant} if $\(u_{22}-u_{12}\)\(v_{22}-v_{12}\)>\(u_{11}-u_{21}\)\(v_{11}-v_{21}\)$. In symmetric games, i.e., when $u_2=u_1^T$, this condition simplifies to $u_{22}+u_{21}>u_{11}+u_{12}$. If the inequality is reversed, then $\(a_1,a_1\)$ is risk dominant and if equality holds, then none of the pure equilibria risk-dominates the other. Finally, if $u_{22}\ge u_{11}$ and $v_{22}\ge v_{11}$ with at least one inequality strict then $\(a_2,a_2\)$ is called \emph{payoff} or \emph{Pareto-dominant}. For such games, we have the following properties. 

\begin{lemma}[Properties of $2\times2$ coordination games]
\label{lem:risk_dominant}
Let $\Gamma=\(\N,\(A_k,u_k\)_{k\in \N}\)$ denote a two-player, $\N=\{1,2\}$, two-action, $A_1=A_2=\{a_1,a_2\}$, coordination game with payoff functions $\(u_1,u_2\)$ as in equations \eqref{eq:coordination}. Then, $\Gamma$ is a weighted potential game with weights $\(1,w\)$ for some $w>0$ and it holds that
\begin{enumerate}[label=(\roman*)\,\,, wide=0pt, leftmargin=\parindent, labelsep=0pt]
\item $\xm+\ym>1$ if and only if $\(0,0\)$, i.e., the pure action profile $\(a_2,a_2\)$, is the risk-dominant equilibrium.
\end{enumerate}
If, in addition, $\Gamma$ is symmetric, i.e., $u_2=u_1^T$, then $\xm=\ym$ and property (i) simplifies to
\begin{enumerate}[label=(i*)\,\,, wide=0pt, leftmargin=\parindent, labelsep=0pt]
\item $\xm>1/2$ if and only if $\(0,0\)$ is the risk-dominant equilibrium. 
\end{enumerate}
In this case, the (exact) potential is globally maximized at the pure action profile $\(a_2,a_2\)$, i.e., at the risk-dominant equilibrium.
\end{lemma}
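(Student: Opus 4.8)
The plan is to handle the three assertions in turn --- the weighted-potential structure, the risk-dominance characterization, and (in the symmetric case) the location of the global potential maximizer --- since the first two reduce to direct algebra and only the last needs a genuine optimization argument.

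First I would establish the weighted-potential property by explicitly building a potential $\phi$ on the four pure profiles. Normalizing $\phi(a_2,a_2)=0$, the two deviation conditions for player $1$ (with weight $1$) fix $\phi(a_1,a_2)=u_{12}-u_{22}$ and relate $\phi(a_1,a_1)$ to $\phi(a_2,a_1)$, while the two deviation conditions for player $2$ (with weight $w$) fix $\phi(a_2,a_1)=(v_{12}-v_{22})/w$ and give a second expression for $\phi(a_1,a_1)$. Equating the two expressions for $\phi(a_1,a_1)$ collapses to the single consistency equation $w\,k_1=k_2$. Because the coordination hypotheses force $k_1,k_2>0$, the choice $w=k_2/k_1>0$ is admissible, so $\Gamma$ is a weighted potential game with weights $(1,w)$; the symmetric case $u_2=u_1^{T}$ gives $k_1=k_2$, hence $w=1$ and an \emph{exact} potential.

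For part (i) I would rewrite the coordination data as $u_{22}-u_{12}=\lambda_1$, $u_{11}-u_{21}=k_1-\lambda_1$, and symmetrically $v_{22}-v_{12}=\lambda_2$, $v_{11}-v_{21}=k_2-\lambda_2$, and substitute into the risk-dominance inequality to obtain $\lambda_1\lambda_2>(k_1-\lambda_1)(k_2-\lambda_2)$. Expanding the right-hand side cancels the $\lambda_1\lambda_2$ term and leaves $k_1\lambda_2+\lambda_1 k_2>k_1k_2$; dividing by $k_1k_2>0$ yields exactly $\lambda_2/k_2+\lambda_1/k_1=\xm+\ym>1$, and every step is reversible, giving the ``if and only if.'' The symmetric specialization gives $\lambda_1=\lambda_2$, $k_1=k_2$, so $\xm=\ym$, and part (i*) is just $2\xm>1\Leftrightarrow\xm>1/2$.

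The main obstacle is the final claim about the global maximizer of the exact potential in the symmetric case. Here I would evaluate the potential of the first paragraph (with $w=1$) at the four vertices, getting $\phi(a_2,a_2)=0$, $\phi(a_1,a_2)=\phi(a_2,a_1)=-\lambda_1<0$, and $\phi(a_1,a_1)=k_1-2\lambda_1$. The key structural point is that the multilinear extension $\Phi(x,y)$ is bilinear, so its Hessian is purely off-diagonal and indefinite; any interior critical point is therefore a saddle and the maximum over $[0,1]^2$ must occur at a vertex. The coordination inequalities $\lambda_1>0$ and $k_1-\lambda_1=u_{11}-u_{21}>0$ make both diagonal vertices strictly dominate the two off-diagonal ones, so the maximizer is the larger of $\phi(a_2,a_2)=0$ and $\phi(a_1,a_1)=k_1-2\lambda_1$. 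Finally, $(a_2,a_2)$ is risk-dominant $\Leftrightarrow\xm>1/2\Leftrightarrow k_1-2\lambda_1<0\Leftrightarrow\phi(a_2,a_2)>\phi(a_1,a_1)$, so the global maximizer of the potential is exactly the risk-dominant pure profile, as claimed.
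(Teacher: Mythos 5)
Your proof is correct and takes essentially the same route as the paper's: an explicit construction of the weighted potential, the identical $\lambda_i/k_i$ algebra showing that the risk-dominance inequality is equivalent to $\xm+\ym>1$, and a comparison of the potential's values at the pure profiles for the maximizer claim. If anything, you are more careful than the paper in two places: your consistency equation $w\,k_1=k_2$ yields the player-2 weight that actually matches the stated definition of a weighted potential game (the paper's proof sets $k_1=wk_2$ and asserts weights $(1,w)$, which inverts that weight), and your bilinearity/Hessian argument makes explicit that the maximum of the multilinear extension over $[0,1]^2$ is attained at a vertex --- a step the paper leaves implicit by comparing only pure profiles.
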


\begin{proof}[Proof of Lemma~\ref{lem:risk_dominant}] Since $k_1,k_2>0$, there exists $w>0$ such that $k_1=wk_2$. Then, it is immediate to check that 
\[P=\begin{pmatrix}u_{11}-u_{21}& u_{11}-u_{21}-w\(v_{11}-v_{21}\)\\ 0 & k_1-w\(v_{11}-v_{21}\)\end{pmatrix}\]
is a potential function for $\Gamma$ with weights $\(w_1,w_2\)=\(1,w\)$. Hence, $\Gamma$ is a $\(1,w\)$ weighted potential game with potential function $P$. For (i), using the coordination game assumption, i.e., that $u_{11}>u_{21}, u_{22}>u_{12}$ and $v_{11}>v_{21}, v_{22}>v_{12}$ and dividing both sides of inequality \eqref{eq:riskdominant} with the product $k_1k_2$, we have that \eqref{eq:riskdominant} is equivalent to 
\begin{align*}
&\frac{\lambda_1}{k_1}\cdot\frac{\lambda_2}{k_2}>\frac{k_1-\lambda_1}{k_1}\cdot\frac{k_2-\lambda_2}{k_2}\iff\ym\xm>\(1-\ym\)\(1-\xm\)\iff\xm+\ym>1
\end{align*}
as claimed. Finally, if $\Gamma$ is symmetric, i.e., if $u_2^T=u_1$, then $k_1=k_2$ and $\lambda_1=\lambda_2$ which implies that $\xm=\ym$. In this case, inequality \eqref{eq:riskdominant} yields the condition $\xm>1/2$ which proves (*i). To see that the global maximizer of the potential agrees with the risk dominant equilibrium, observe that $P=\begin{pmatrix}u_{11}-u_{21} & 0\\ 0 & u_{22}-u_{12}\end{pmatrix}$ is a potential function for the game. Hence, the global maximum of $P$ is at $\(a_2,a_2\)$ whenever $u_{22}-u_{12}>u_{11}-u_{21}$, i.e., whenever $\(\lambda_1/k_1\)>1-\(\lambda_1/k_1\)$ or equivalently whenever $\xm>1/2$. Since any potential function must satisfy $P+c$ for some constant $c\in \mathbb R$ (see \cite{Mon96}), this concludes the proof.
\end{proof}

As a special case, it is immediate from Lemma~\ref{lem:risk_dominant} that a necessary and sufficient condition for $\Gamma$ to be an exact potential game is that $k_1=k_2$. 


Using Lemma~\ref{lem:risk_dominant}, we can reason about the possible locations of QRE in $2\times2$ coordination games. This is established in Theorem~\ref{thm:location_qre} where we focus on the case $\xm+\ym\ge1$ (wlog). 
To prove Theorem~\ref{thm:location_qre}, we will use that for an arbitrary 2-player, 2-action game $\Gamma=\(\N,\(A_k,u_k\)_{k\in\N}\)$, the coupled dynamic equations in \eqref{eq:main} become
\begin{subequations}
\label{eq:qre_small}
\begin{align}
\frac{\dot x}{x\(1-x\)}=\;&\beta_x\lt y\(u_{11}-u_{21}\)+\(1-y\)\(u_{12}-u_{22}\)\rt+\alpha_x\ln{\(\frac1x-1\)}\\
\frac{\dot y}{y\(1-y\)}=\;&\beta_y\lt x\(v_{11}-v_{21}\)+\(1-x\)\(v_{12}-v_{22}\)\rt+\alpha_y\ln{\(\frac1y-1\)}
\end{align}
\end{subequations}
where $x,y\in[0,1]$ denote the probabilities that agent $1$ and $2$ respectively assign to pure action $a_1$. Using Lemma~\ref{lem:risk_dominant} and the expressions in equations \eqref{eq:qre_small}, we can now prove Theorem~\ref{thm:location_qre}.

\begin{proof}[Proof of Theorem~\ref{thm:location_qre}]
At any QRE $\(x_Q,y_Q\)\in\(0,1\)\times\(0,1\)$, the right sides of the coupled equations in \eqref{eq:qre_small} are simultaneously equal to $0$. Using the introduced notation $\lambda_1:=u_{22}-u_{12}, k_1:={u_{11}-u_{12}-u_{21}+u_{22}}$ and $\lambda_2:=v_{22}-v_{12}, k_2:=v_{11}-v_{12}-v_{21}+v_{22}$, with $\lambda_i,k_i>0$ for $i=1,2$, we can rewrite the QRE conditions as 
\begin{subequations}
\label{eq:tools}
\begin{align}
\label{eq:tools_x}c_1\( y_Q-\ym\)+\ln{\(\frac1{x_Q}-1\)}=0\\
\label{eq:tools_y}c_2\( x_Q-\xm\)+\ln{\(\frac1{y_Q}-1\)}=0
\end{align}
\end{subequations}
where, $c_1:=k_1\cdot\beta_x/\alpha_x$ and $c_2:=k_2\cdot\beta_y/\alpha_y$ are positive constants (with respect to $x,y$) by assumption. As above $\(\xm,\ym\)$ denote the probabilities of pure action $a_1$ at the fully mixed equilibrium for agent 1 and 2, respectively. The cases in the statement of Theorem~\ref{thm:location_qre} now follow from an exhaustive sign analysis of the terms $\ln{\(\frac1{x_Q}-1\)}$ and $\ln{\(\frac1{y_Q}-1\)}$ which are positive for $x_Q,y_Q>1/2$ and negative otherwise. Specifically, let $\xm+\ym>1$ (the case $\xm+\ym<1$ is analogous and the case $\xm+\ym=1$ corresponds to coordination games at which none of the pure equilibria is risk dominant and is treated in \cite{Kia12}). Then, we have following cases
\begin{description}
\item[Case 1:] $\xm,\ym>1/2$. If $x_Q>1/2$, then $\ln\(\frac{1}{x_Q}-1\)<0$ which implies that $y_Q>\ym$. In turn, since $\ym>1/2$, this implies in particular, that $y_Q>1/2$ and hence, that $\ln\(\frac{1}{y_Q}-1\)<0$ which imposes the condition $x_Q>\xm$ for equation \eqref{eq:tools_y} to be feasible. Hence, if $x_Q>1/2$, then a necessary condition for QRE is that $x_Q>\xm$ and $y_Q>\ym$. This establishes the upper right region in the upper panel of Figure~\ref{fig:location}. If $x_Q\le 1/2$, then $\ln\(\frac{1}{x_Q}-1\)\ge0$ and hence, $y_Q<\ym$. However, since $\xm>1/2$ by assumption, $x_Q\le 1/2$ implies that $x_Q-\xm<0$ which imposes the restriction $y_Q<1/2$ for equation \eqref{eq:tools_y} to hold. This yields the feasible region $x_Q,y_Q\le 1/2$ depicted in the bottom left shaded region in the upper panel of Figure~\ref{fig:location}. 
\item[Case 2:] $\xm>1/2,\ym \le 1/2$. Proceeding as in Case 1, assume first $x_Q>1/2$. Then, $y_Q$ must satisfy $y_Q>\ym$ for equation \eqref{eq:tools_x} to hold. However, if $y_Q>1/2$, then $\ln\(\frac{1}{y_Q}-1\)<0$ which imposes the restriction $x_Q<\xm$ for \eqref{eq:tools_y} to be infeasible. Hence, for $x_Q>1/2$, $y_Q$ must satisfy either $\ym<y_Q<1/2$ when $1/2<x_Q<\xm$ or $y_Q>1/2$ when $x_Q>\xm$. This yields the middle and upper right regions in the bottom panel of Figure~\ref{fig:location}. Finally, when $x_Q<1/2$, it holds that $\ln\(\frac{1}{x_Q}-1\)>0$ and hence, $y_Q<\ym$ for equation \eqref{eq:tools_x} to hold. In turn, the condition $y_Q<\ym$ does not impose further restrictions for equation \eqref{eq:tools_y} which results in the bottom left region in the bottom panel of Figure~\ref{fig:location}. 
\item[Case 3:] $\xm\le 1/2,\ym>1/2$. This case is symmetric to Case 2. \qedhere
\end{description}
\end{proof}
The above cases are illustrated in Figure~\ref{fig:location} in the main body of the paper. The case $\xm,\ym>1/2$ corresponds to a situation at which the interests are better aligned than in the case $\xm>1/2,\ym<1/2$. In particular, symmetric games are special instances of the situation depicted in the upper panel of Figure~\ref{fig:location}. In this case, $\xm=\ym$ and there are no QRE in the segment $[1/2,\xm]$. 

\begin{remark}[Intuition of Theorem~\ref{thm:location_qre}]
\label{rem:location}
The main intuition of Theorem~\ref{thm:location_qre} is that the QRE surface may or may not be connected depending on the alignment of the interests of the two agents, i.e., on whether $\xm>1/2,\ym \ge1/2$ (upper panel) or $\xm\ge 1/2, \ym<1/2$ (bottom panel) of Figure~\ref{fig:location}. In turn, the connectedness of the QRE surface crucially affects the convergence of the learning process.\\ Intuitively, if one agent increases their exploration rate, then the choice distribution of that agent at a QRE will shift towards the uniform distribution, i.e., $\(1/2,1/2\)$ in a $2\times2$ game. In case that the payoff parameters of the underlying game are such that the game is described by the upper panel, then the two agents will find themselves in the bottom left (shaded) square which lies entirely in the attracting region of the risk-dominant equilibrium, $\(x,y\)=\(0,0\)$, at the bottom left corner. Hence, after reducing the exploration rate to $0$, the Q-learning dynamics will rest at this equilibrium. Importantly, this outcome is independent of the starting point \emph{and} the exploration-exploitation profile of the other agent. \\
By contrast, this is not always the case if the payoff parameters of the underlying game are such that the game is described by the bottom panel of Figure~\ref{fig:location}. In this case, the QRE surface is connected and the effect on the learning process of the exploration policy --- i.e., of temporarily increasing the exploration rate before reducing it again back to its initial level --- depends on the exploration-exploitation profile of the other agents, and importantly also on their synchronicity. The critical observation is that the middle (shaded) rectangle in the bottom panel of Figure~\ref{fig:location} is transcended by the $x+y=\xm+\ym$ line which is the threshold between the attracting regions of the two corner equilibria, $\(x,y\)=\(0,0\)$ and $\(x,y\)=\(1,1\)$. Hence, when one or both agents increase their exploration rates to reach the middle rectangle, they cannot say in advance in which attracting region the learning process will find itself before they start reducing their exploration levels back to their initial zero levels. In this case, the process may well converge to the equilibrium where it started from (see, e.g., Figure~\ref{fig:nopay} and \ref{fig:grid}).
\end{remark}

\begin{proof}[Proof of Theorem~\ref{thm:catastrophe}]
The proof is constructive and utilizes Theorem~\ref{thm:location_qre}. For $M>0$, let $\Gamma^M_u=\{\N=\{1,2\},\(\{a_1,a_2\}\)_{k\in\N},\(u_1,u_2\)\}$ with $u_1,u_2$ given by $u_1=\bordermatrix{
~ & a_1 & a_2 \cr
a_1 & 2M & 0\cr
a_2 & 2M-1 & 2 \cr},\quad u_2=u_1^T
$. Then, (i) $\xm=\ym=2/3$ for any $M>0$ and (ii) $\(a_2,a_2\)$ is risk-dominant since $2M-1+2>2M+0$. Condition (i) implies that the basin of attraction of the $\(a_1,a_1\)$ equilibrium is equal to $I_u=[2/3,1]^2$ and hence that it has strictly positive measure (and in fact constant for any $M>0$). By Theorem~\ref{thm:location_qre}, (ii) implies that the QRE surface is disconnected --- upper panel of Figure~\ref{fig:location}. Hence, given a starting point in the interior of $I_u$, there exist exploration thresholds $\delta_k$ that depend on $\(x_k\(0\)\)$, for $k=1,2$, so that if the exploration rates remain throughout low, i.e., $\delta_k\(t\)<\delta_k$ for all $t>0$ and for both $k=1,2$, then the dynamics will never escape the basin of attraction of $\(a_1,a_1\)$. Hence, since $\lim_{t\to\infty}\delta_k\(t\)=0$ by assumption --- i.e., at the end of the learning process, agents stop to explore the space --- it holds that $\lim_{t\to\infty} u_k^{\text{exploit}}\(t\)=2M$ for both agents, i.e., their choice distributions will approximate (to arbitrary precision), the $\(a_1,a_1\)$ NE. \par
By contrast, if $\delta_k\(t\)>\delta_k$ for some agent $k\in \N$, then the coupled dynamics in equation \eqref{eq:main} will reach a fixed point close to the uniform distribution which by Theorem~\ref{thm:location_qre} lies in the basin of attraction of the risk-dominant $\(a_2,a_2\)$ equilibrium. When reducing the exploration rates back to zero, the agents' choice distribution will converge to the $\(a_2,a_2\)$ NE, which implies that $\lim_{t\to\infty}u_k^{\text{explore}}\(t\)=2$ for both agents. Since $M>0$ was arbitrary, this concludes the case of unbounded loss.\par
Note that a specific realization of the game $\Gamma^M_u$ that is described above can be obtained by appropriately tuning the payoffs in the Stag Hunt game as described in Table~\ref{tab:games}.\par
To obtain the other direction, i.e., unbounded gain, consider (in a similar fashion) for $M>0$ the game $\Gamma^M_v=\{\N=\{1,2\},\(\{a_1,a_2\}\)_{k\in\N},\(v_1,v_2\)\}$ with $u_1,u_2$ given by $u_1=\bordermatrix{
~ & a_1 & a_2 \cr
a_1 & 2M & 1.5\cr
a_2 & 2M-1 & 2 \cr},\quad u_2=u_1^T
$. Then, (i) $\xm=\ym=1/3$ for any $M>0$ and (ii) $\(a_1,a_1\)$ is risk-dominant since $2M-1+2<2M+1.5$. Proceeding as in the previous case, condition (i) implies that the basin of attraction of the $\(a_2,a_2\)$ equilibrium is equal to $I_v=[0,1/3]^2$ and hence that it has strictly positive measure (and in fact constant for any $M>0$). By Theorem~\ref{thm:location_qre}, (ii) implies that the QRE surface is disconnected --- upper panel of Figure~\ref{fig:location}. The difference is that now, the payoff dominant equilibrium $\(a_1,a_1\)$ is also the risk dominant equilibrium. Hence, starting by an arbitrary point in the interior of $I_v$ and by a similar argument as above, we obtain that $\lim_{t\to\infty}v_k^{\text{explore}}\(t\)=2M$ and $\lim_{t\to\infty}v_k^{\text{exploit}}\(t\)=2$ for any agent $k\in \N$ which concludes the proof. \end{proof}

\section{Supplementary Experiments}
\label{app:experiments}

To understand the effect of the exploration policy in the collective outcome of the SQL dynamics, we study the three coordination games in Table~\ref{tab:games}. For each game, we consider all possible combinations of the exploration policies CLR-1 and ETE for the two agents. The experiments for the three coordination games (Pareto Coordination, Battle of the Sexes and Stag Hunt) are presented in Figure~\ref{fig:grid}. 

\begin{figure*}[!htb]
\centering
\small \textbf{Panel A}\hspace{140pt} \textbf{Panel B}\hspace{140pt} \small \textbf{Panel C}\\[0.2cm]
\includegraphics[width=0.312\textwidth]{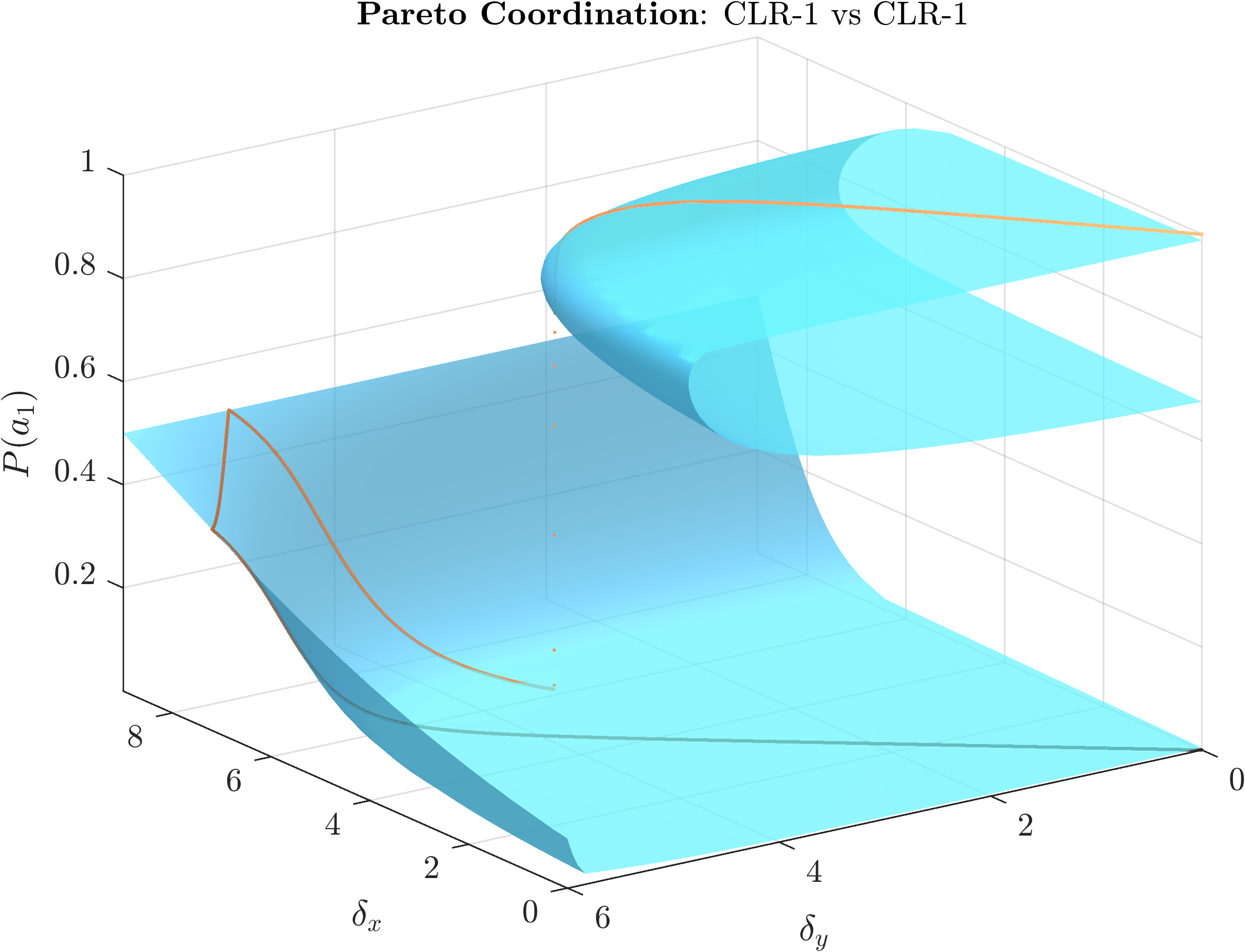}\hspace{0.4cm}
\includegraphics[width=0.312\textwidth]{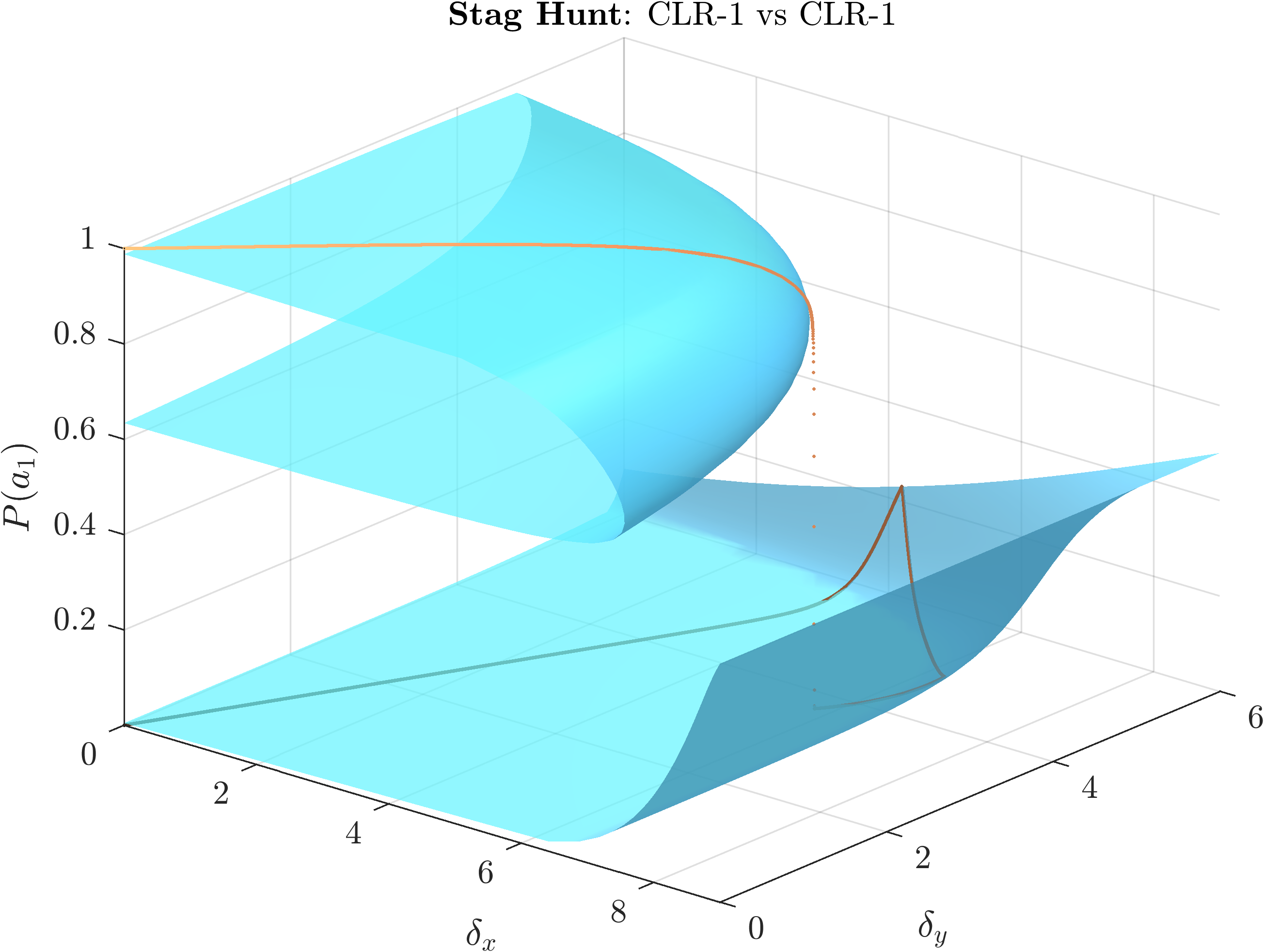}\hspace{0.4cm}
\includegraphics[width=0.312\textwidth]{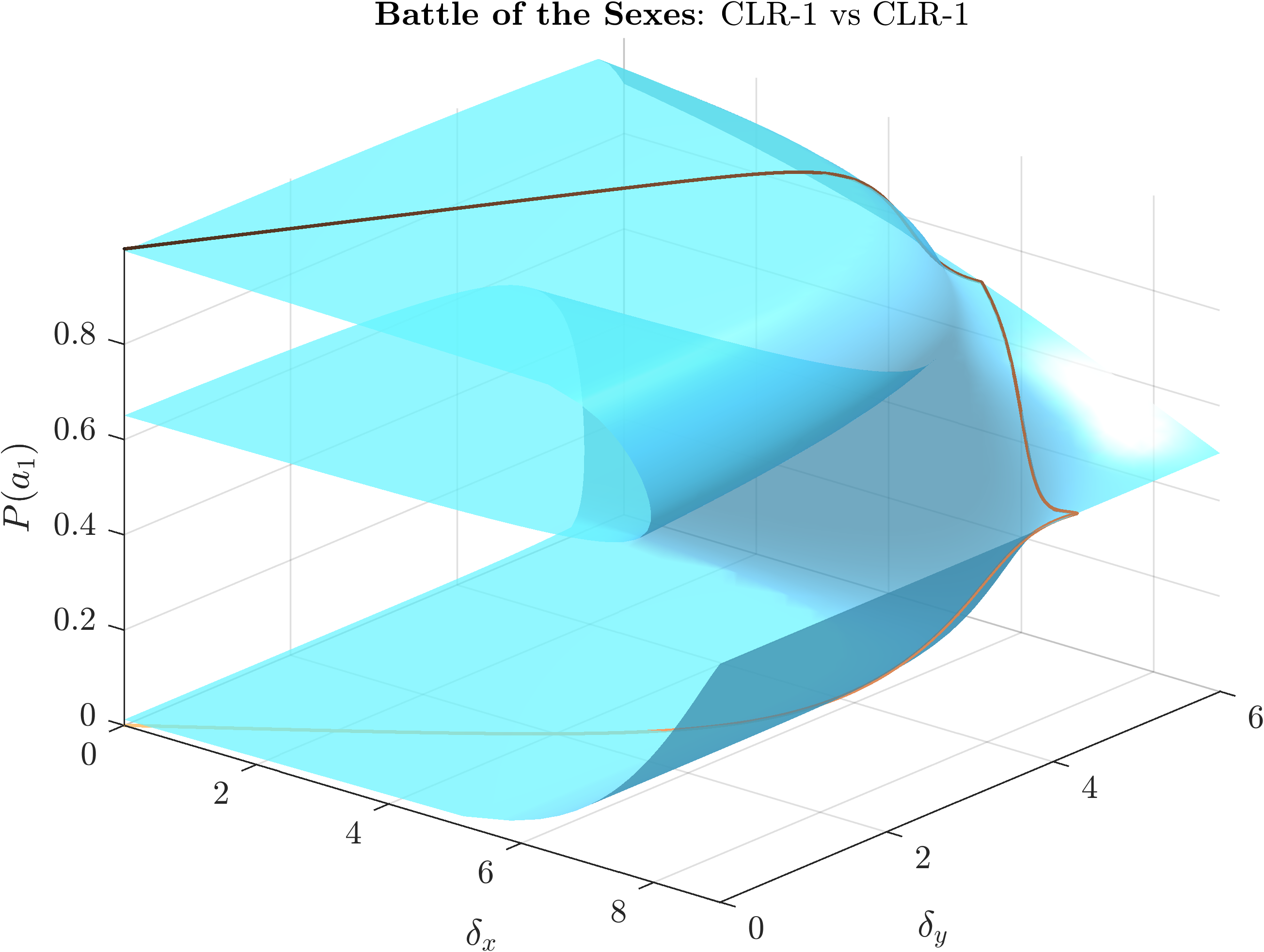}\\[0.2cm]
\includegraphics[width=0.312\textwidth]{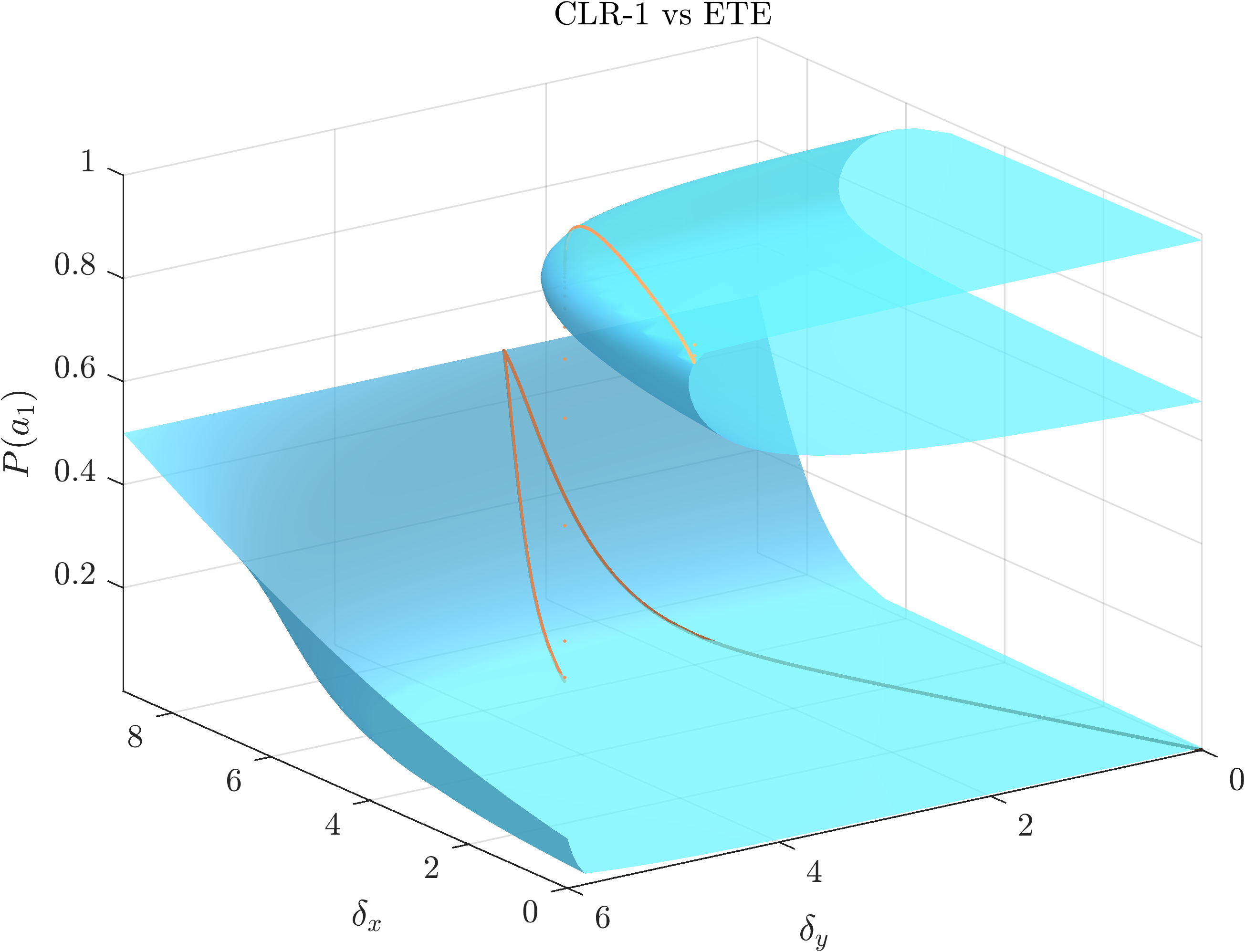}\hspace{0.4cm}
\includegraphics[width=0.312\textwidth]{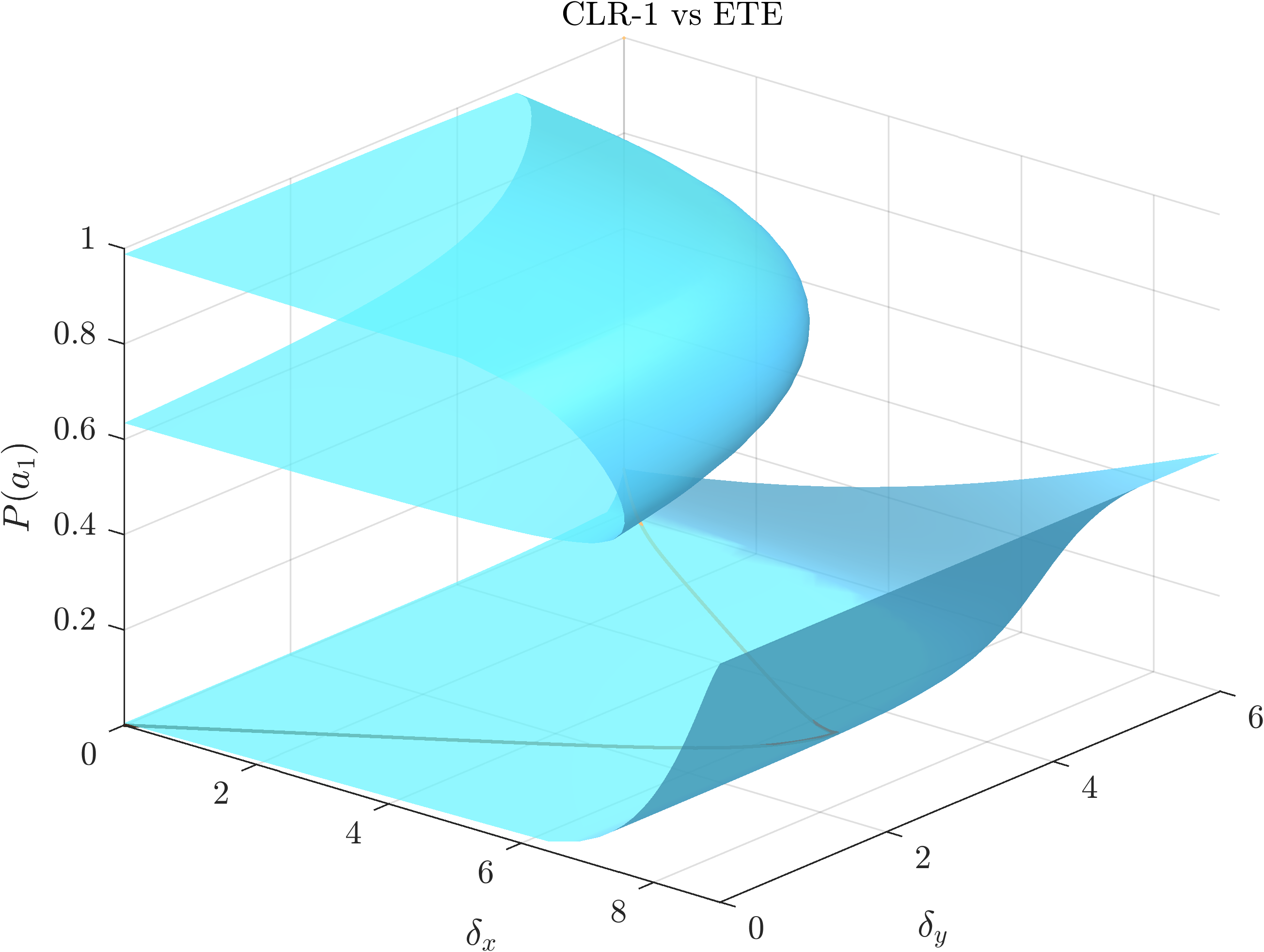}\hspace{0.4cm}
\includegraphics[width=0.312\textwidth]{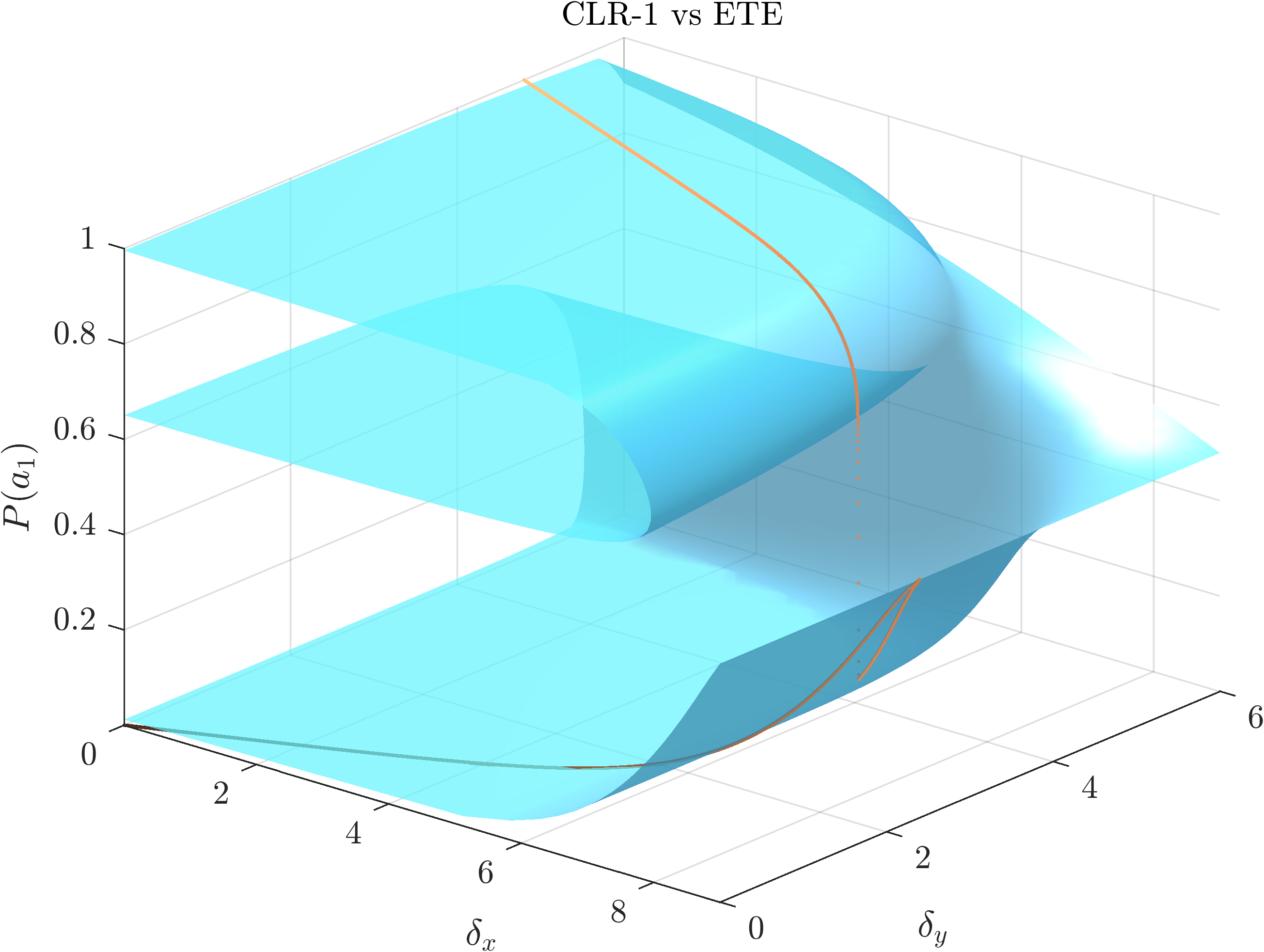}\\[0.2cm]
\includegraphics[width=0.312\textwidth]{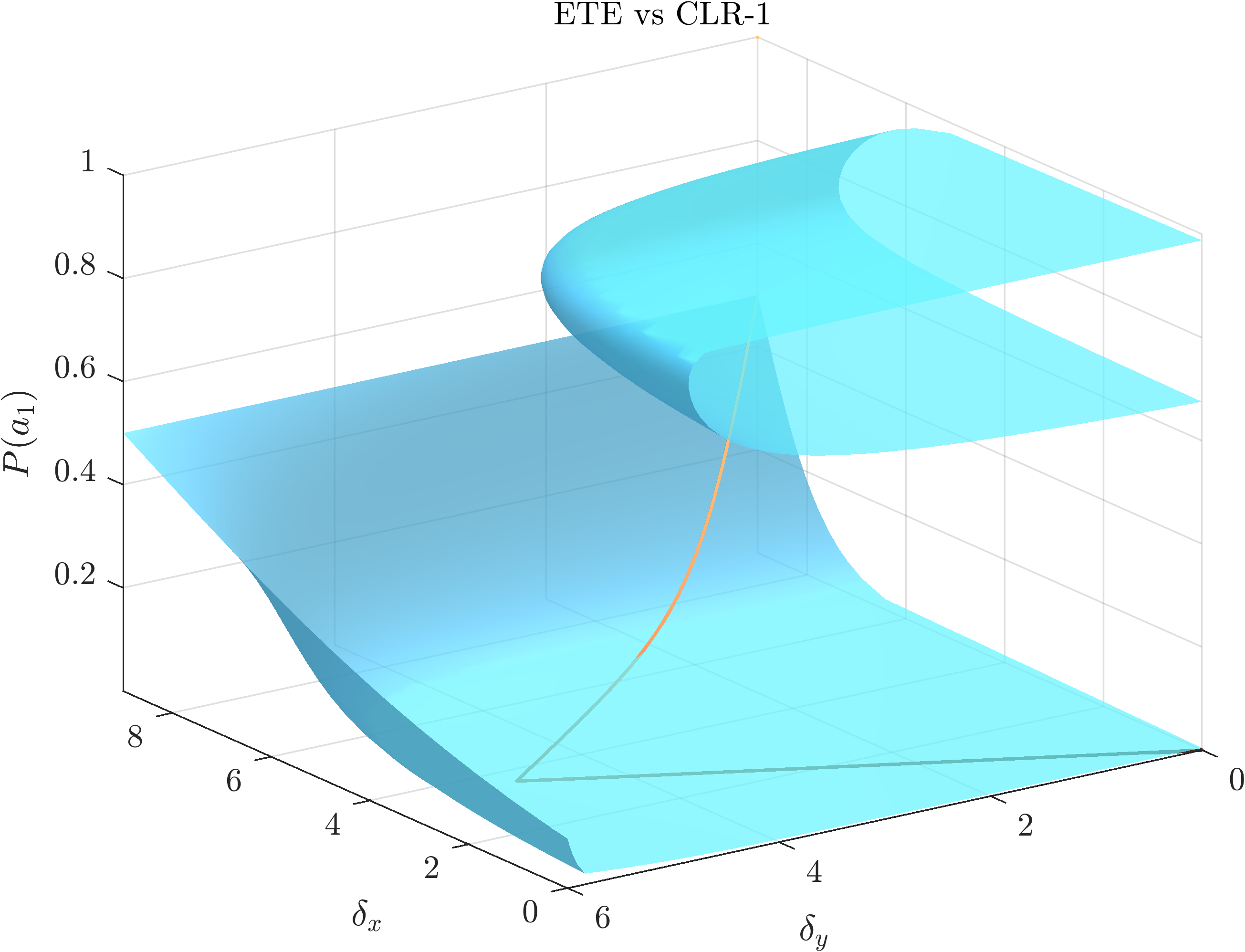}\hspace{0.4cm}
\includegraphics[width=0.312\textwidth]{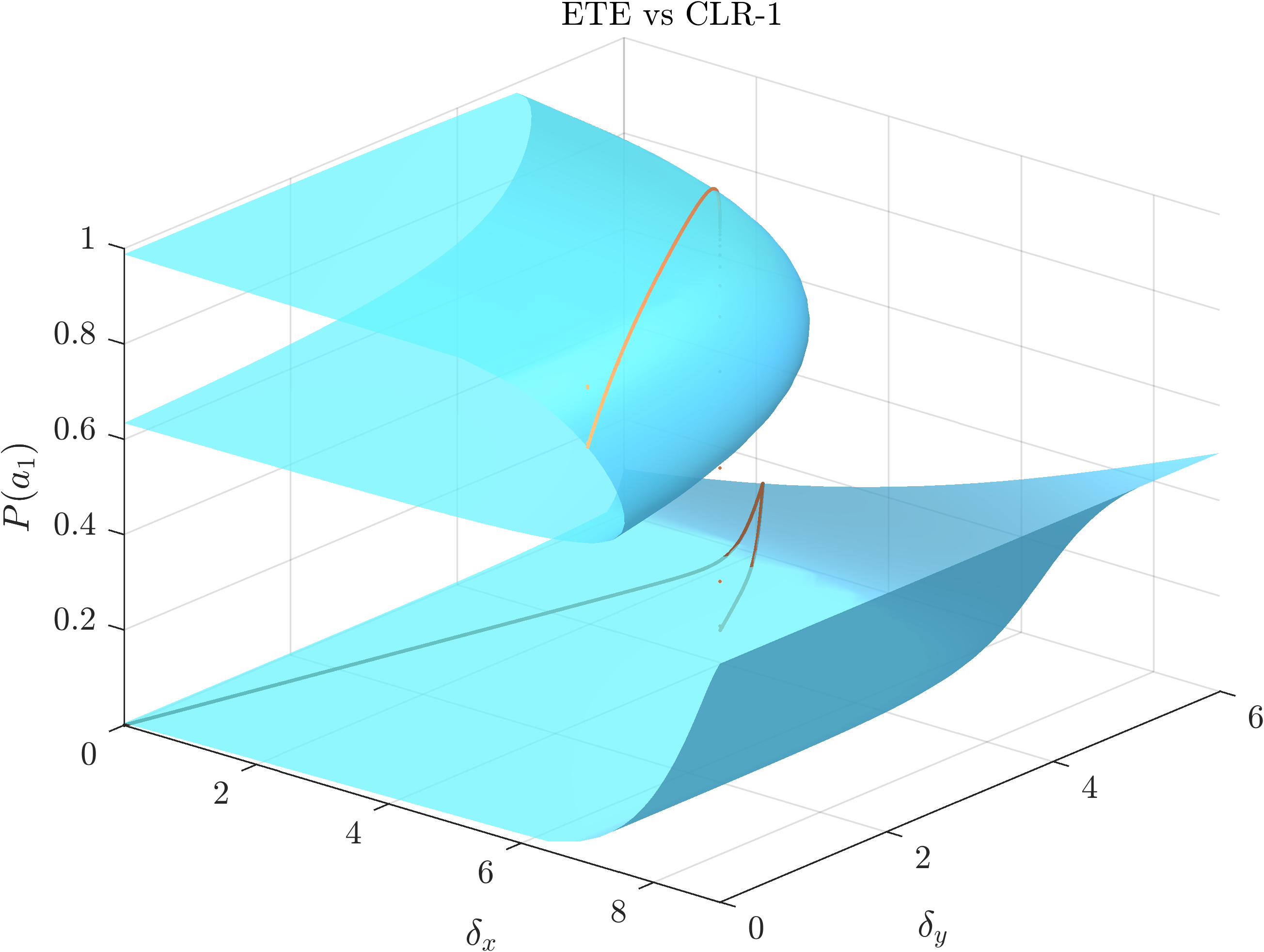}\hspace{0.4cm}
\includegraphics[width=0.312\textwidth]{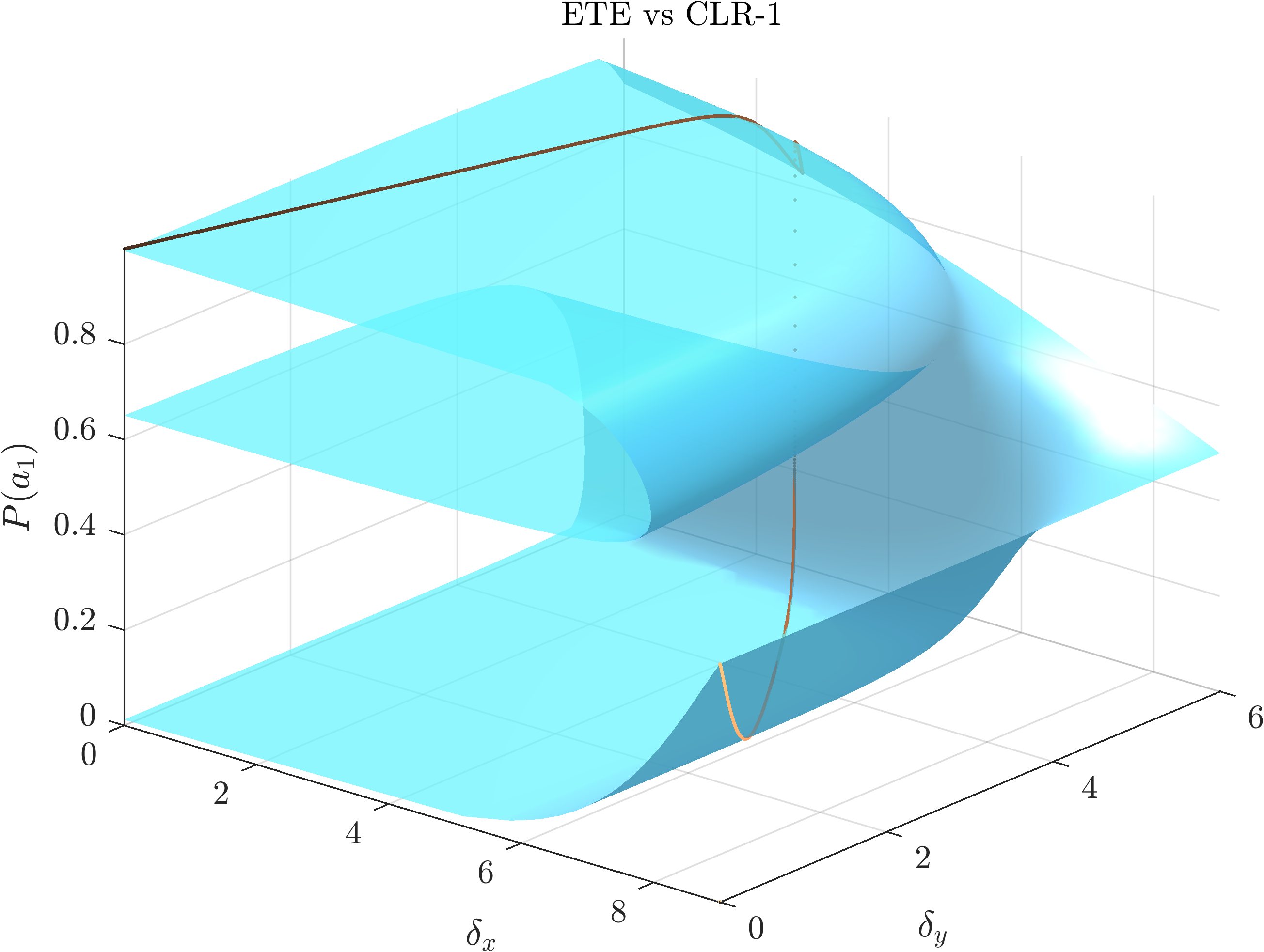}\\[0.2cm]
\includegraphics[width=0.312\linewidth]{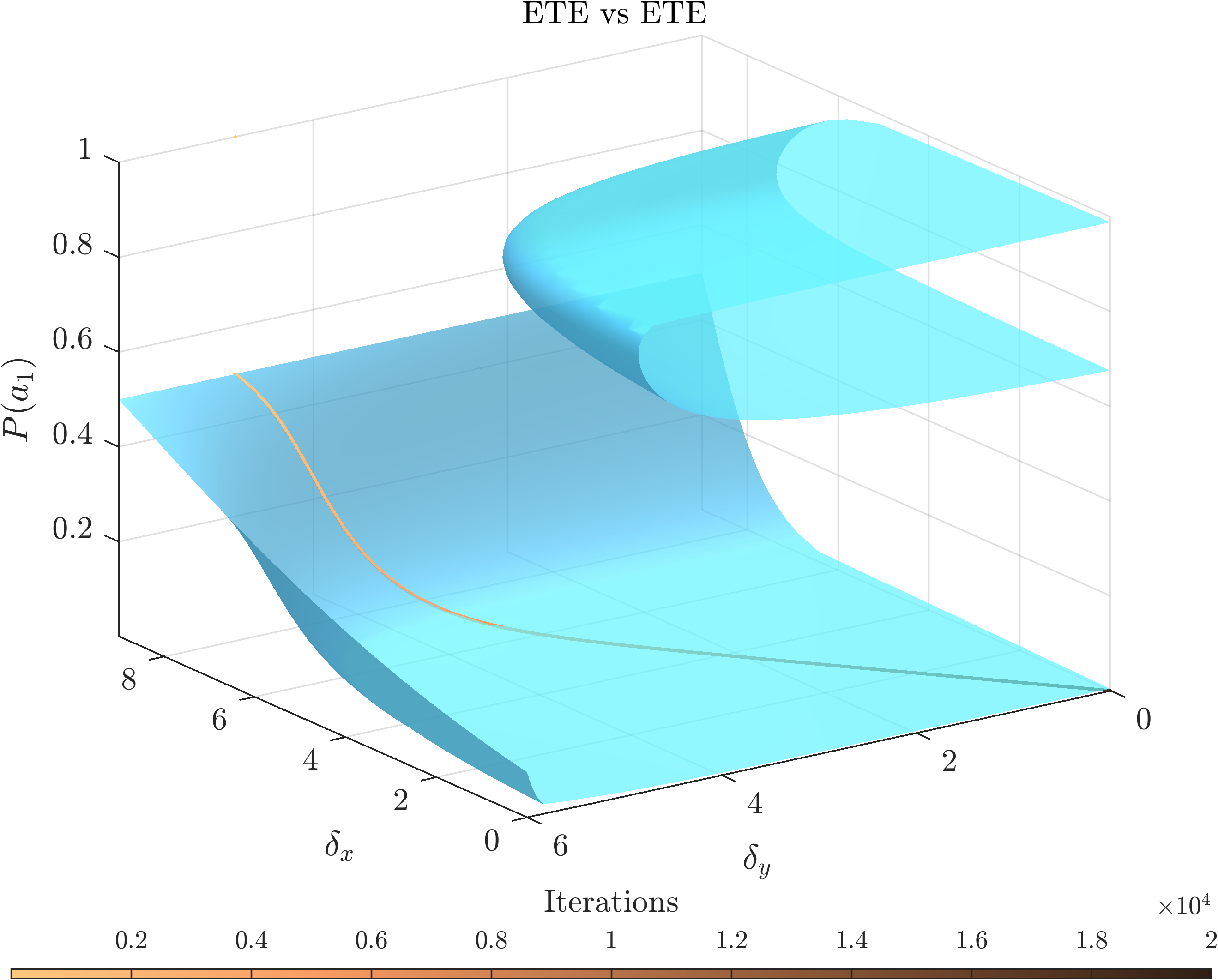}\hspace{0.4cm}
\includegraphics[width=0.312\linewidth]{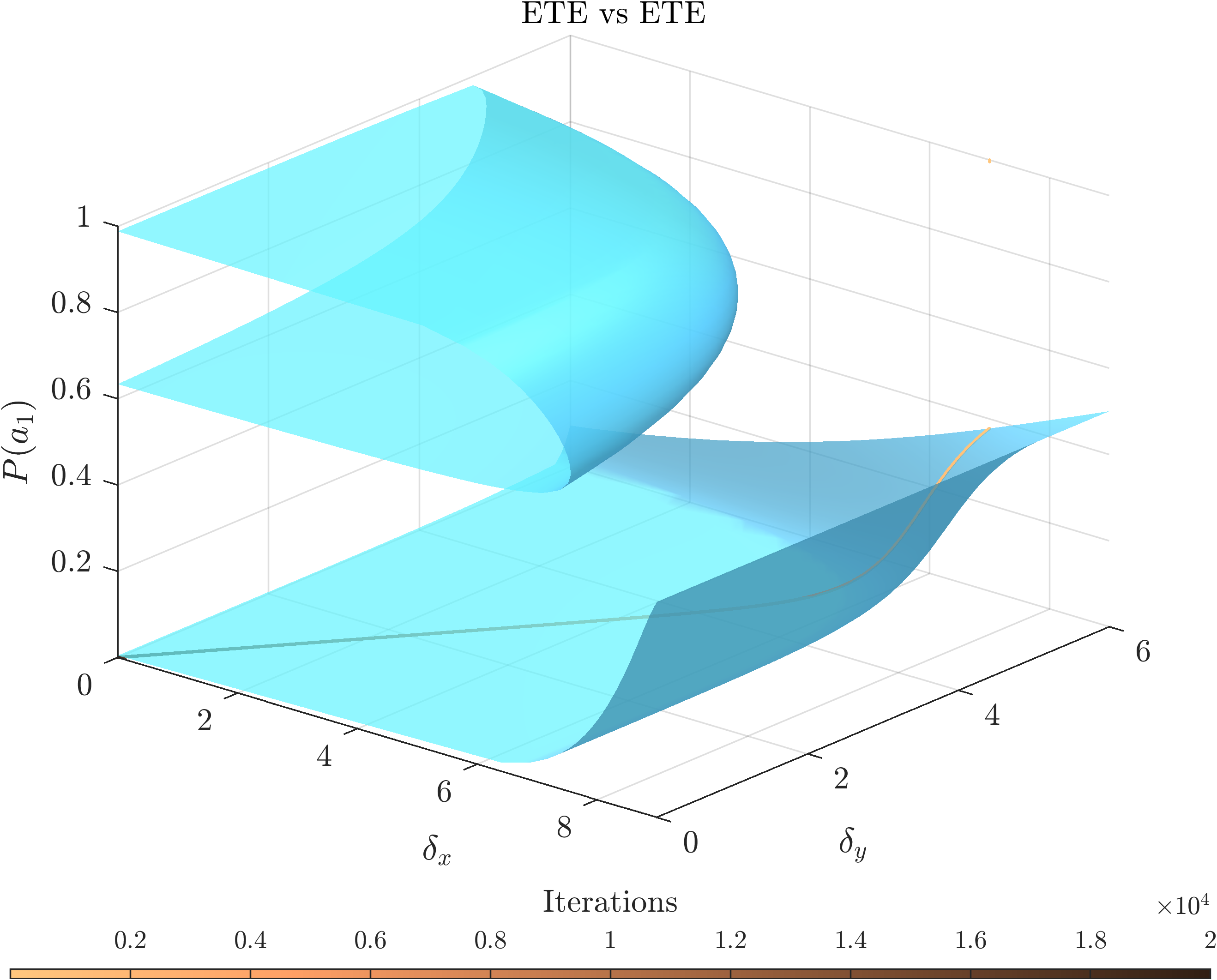}\hspace{0.4cm}
\includegraphics[width=0.312\linewidth]{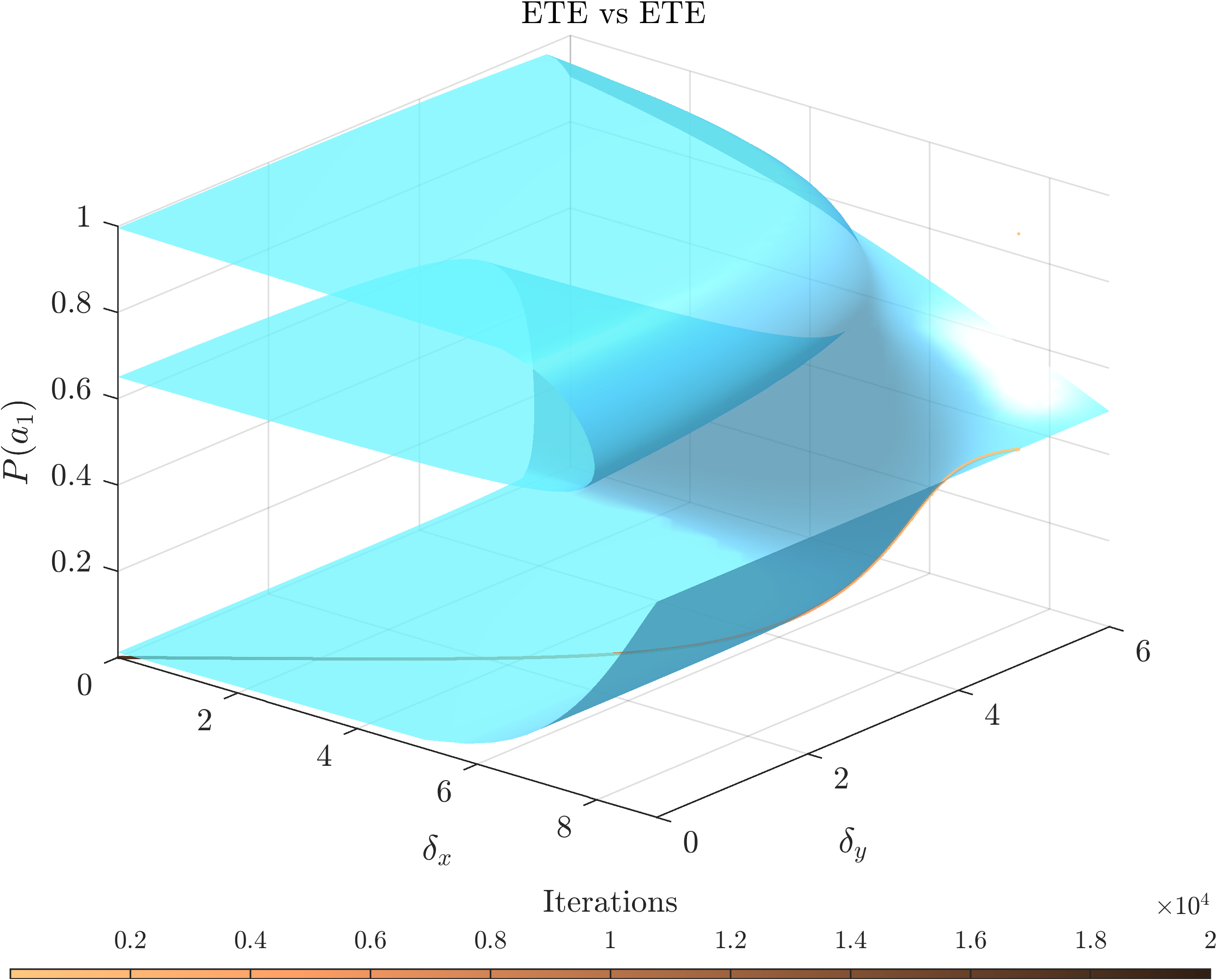}\\[0.2cm]
\caption{SQL dynamics ($1e+03$ Q-value updates for each of $2e+04$ choice distribution updates) with the ETE and CLR-1 policies in coordination games. In Pareto Coordination (Panel A) and Stag Hunt (Panel B), the SQL dynamics converge to the risk dominant equilibrium regardless of the starting point (\emph{saddle-node bifurcations}). By contrast, in the Battle of the Sexes (Panel C), the collective outcome of the exploration process is a priori ambiguous and the SQL dynamics may converge to either of the pure action equilibria (consistent with \emph{cusp bifurcations}). The decisive feature is the geometry of the QRE surface, i.e., whether it is connected or not, as formalized in Theorems~\ref{thm:catastrophe} and \ref{thm:location_qre}.}
\label{fig:grid}
\end{figure*}

Beyond potential games, e.g., in the Spoiled Child game of \cite{Wun10} (or in zero-sum games with unique interior equilibria), the behavior of the dynamics becomes critically dependent on the discretization scheme and is thus, unpredictable. Accordingly, we skip plots of such experiments.

\paragraph{Arbitrary Dimensions}
\label{par:diagonal}
As a warm-up, we study the SQL dynamics in pure coordination games --- coordination games with non-zero payoffs only on the diagonal --- with action spaces of arbitrary size \citep{Kim96}. Depending on the starting point and (especially) on the intensity and timing of the exploration performed by each agent, the SQL dynamics converge after the exploration phase to (typically) improved, yet possibly only locally optimal equilibria. \par
Specifically, we consider games $\Gamma=\(\N,\(A_k,u_k\)_{k\in\N}\)$ with two players $\N=\{1,2\}$ and $n$ actions $A_1=A_2=\{a_1,a_2,\dots,a_n\}$ with diagonal payoff matrices
\begin{equation}\label{eq:diagonal}
u_1=\bordermatrix{
~ & a_1 & a_2 & \dots & a_n\cr
a_1 & u_{11} & 0 & \dots & 0 \cr
a_2 & 0 & u_{22} & \dots & 0\cr
\dots & \dots & \dots & \ddots &\dots \cr
a_n & 0 & 0 & \dots & u_{nn}},
\quad u_2=u_1^T,
\end{equation}
with $0<u_{11}<u_{22}<\dots<u_{nn}$. Any symmetric profile $\(a_i,a_i\), i=1,2,\dots,n$ constitutes a pure NE of $\Gamma$. The equilibrium $\(a_n,a_n\)$ is payoff-dominant and (by properly generalizing the notion of risk-dominance) also risk dominant. In this case, Theorem~\ref{thm:potential} suggests that the Q-learning dynamics converge to a compact connected set of QRE of $\Gamma$. However, it is unclear whether this will be the payoff dominant equilibrium or not and if not, how this outcome depends on the starting point and exploration policy of both agents. Two different instances are given in Figures~\ref{fig:diag_10} and \ref{fig:diag_9}. 
\begin{figure}[!htb]
\centering
\includegraphics[width=0.48\linewidth]{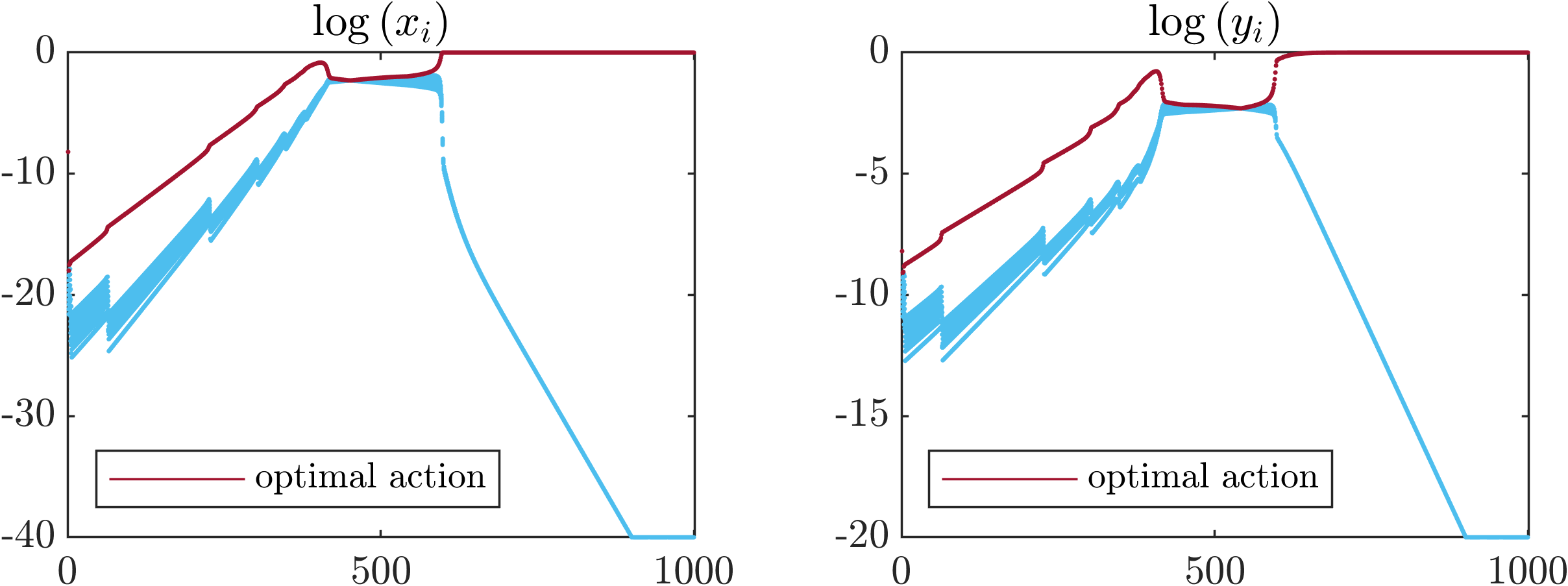}\hspace{10pt}
\includegraphics[width=0.48\linewidth]{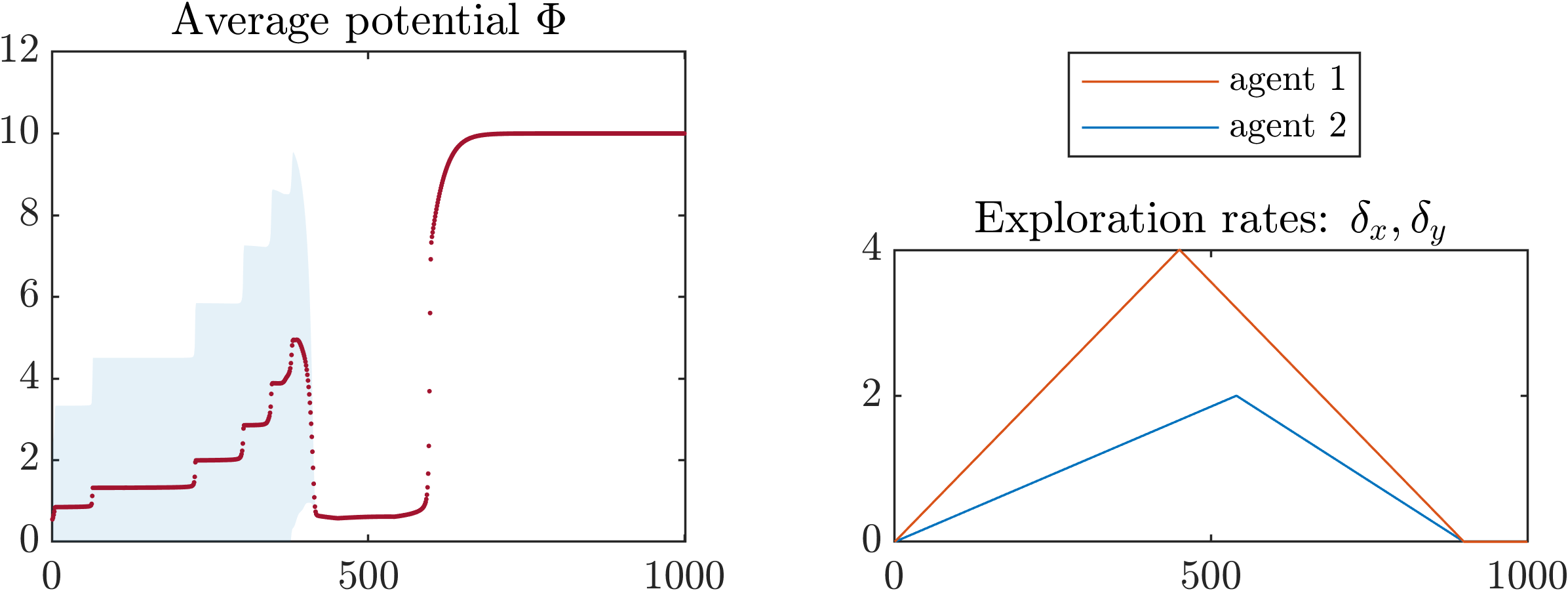}
\caption{SQL dynamics ($1e+20$ Q-value updates for each of $1e+3$ choice distribution updates) in a pure coordination game with $n=10$ actions.  The panels are as in Figure~\ref{fig:potential_10} and again show averages over a $10\times10$ grid of starting points, each of which is close to one pure action profile. Both agents perform CLR-1 exploration with different intensities and all trajectories of the SQL dynamics converge to the global optimum after the exploration phase regardless of the starting point (the standard deviation, depicted as the shaded region in the bottom left panel, vanishes).}
\label{fig:diag_10}
\end{figure}
\begin{figure}[!htb]
\centering
\includegraphics[width=0.48\linewidth]{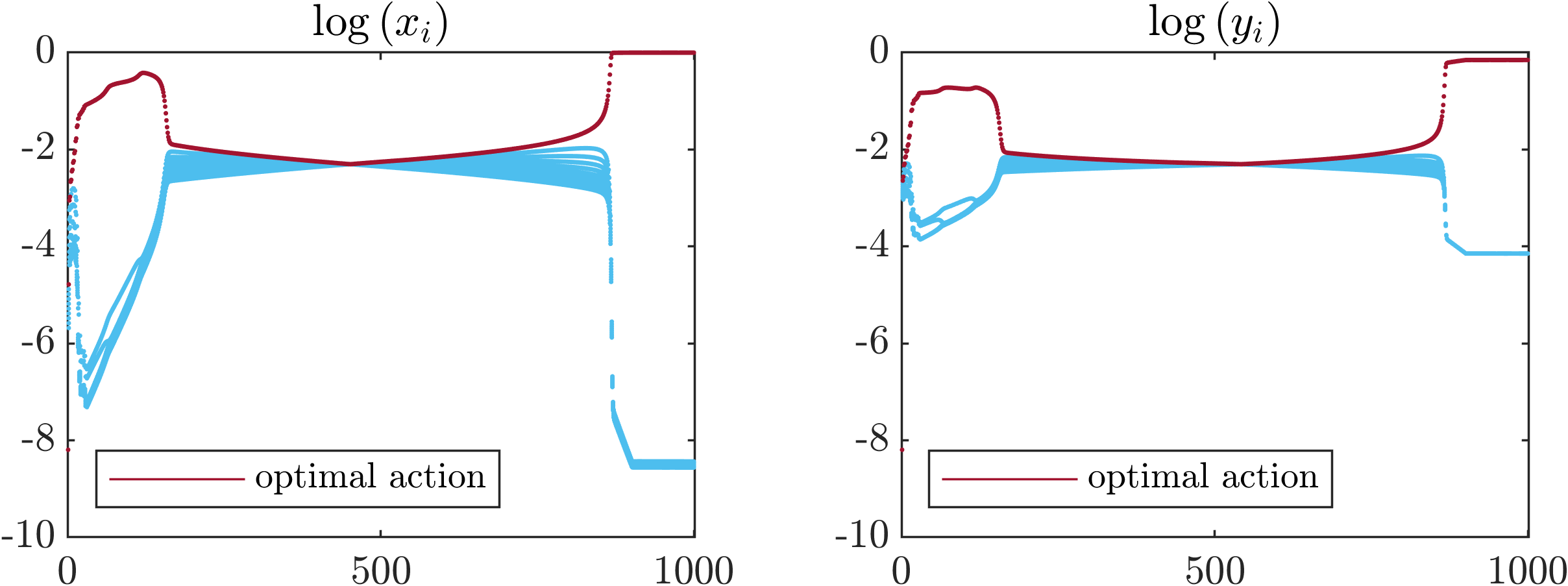}\hspace{10pt}
\includegraphics[width=0.48\linewidth]{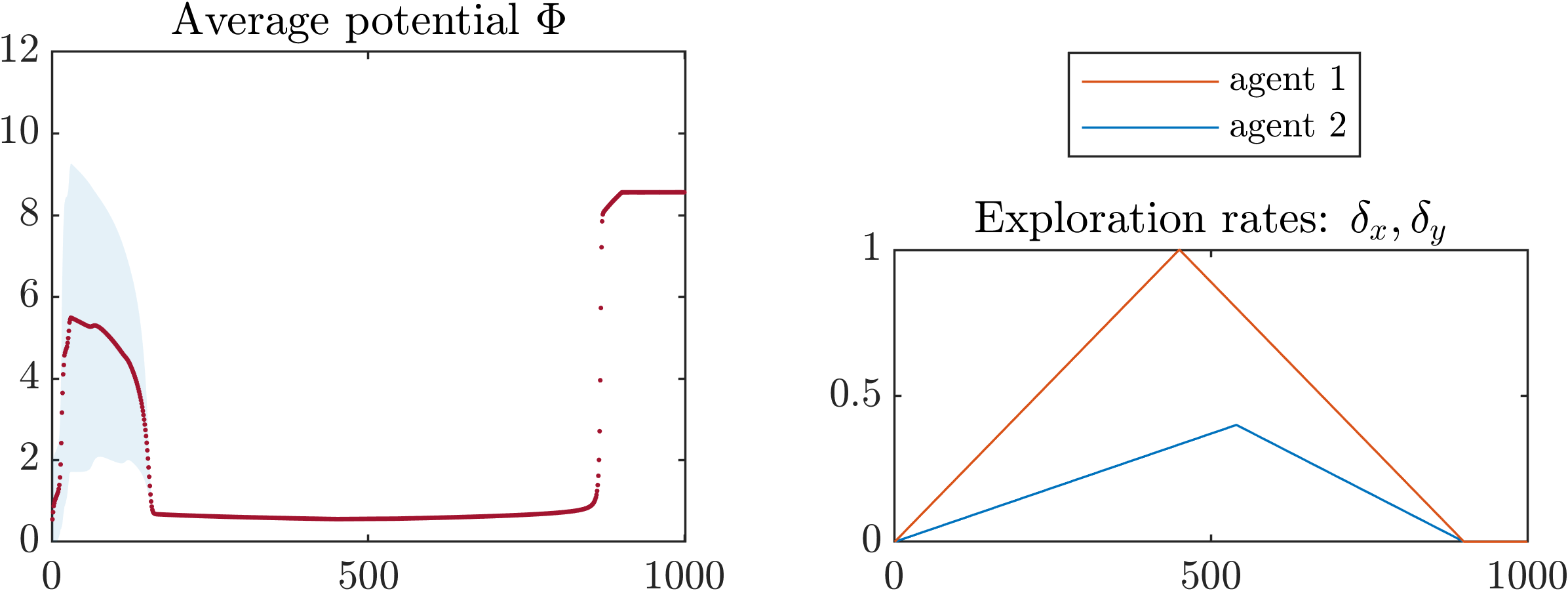}
\caption{The SQL dynamics for the same experiment as in Figure~\ref{fig:diag_10} but with less intense exploration rate in the CRL-1 policies for both agents. In this case, the SQL dynamics converge to a suboptimal outcome after the exploration phase (the mean lies at 8.559 instead of the absolute maximum 10).}
\label{fig:diag_9}
\end{figure}

Figure~\ref{fig:potential_9} shows a second experiment with $\Phi$, in which the SQL dynamics converge to a suboptimal outcome.
\begin{figure}[!htb]
\centering
\includegraphics[width=0.48\linewidth]{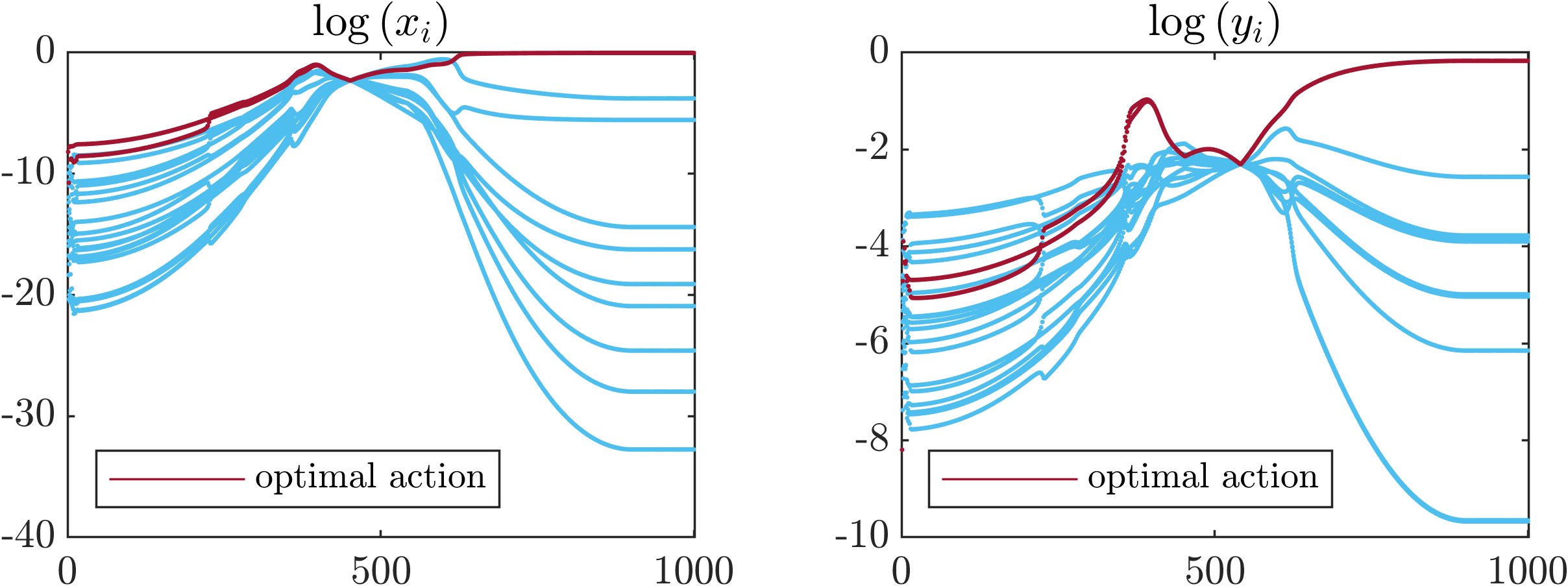}\hspace{10pt}
\includegraphics[width=0.48\linewidth]{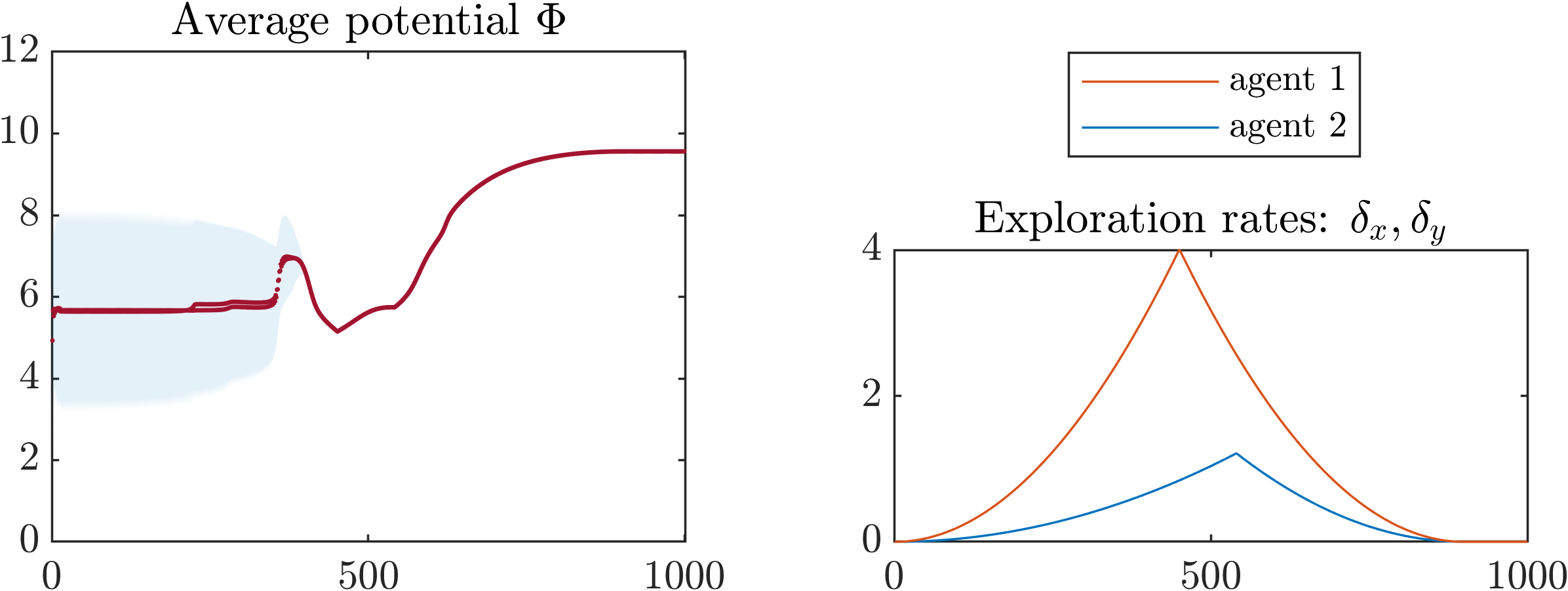}
\caption{The game and panels are as in Figure~\ref{fig:potential_10}. Agent 2 reduces their exploration rate in comparison to Figure~\ref{fig:potential_10} (lower right panel --- note also the quadratic increase-decrease in the exploration rate). The SQL dynamics converge to a suboptimal state with potential value $9.561$ (flat line in the right part of the bottom left panel) for all trajectories (even for those initially close to the global optimum) as can be inferred by the vanishing shaded region (one standard deviation).}
\label{fig:potential_9}
\end{figure}
The conclusive finding from the experiments is that after the exploration phase, the SQL dynamics converge to the same (local) optimum regardless of the starting point (the shaded region which represents one standard deviation around the mean vanishes). Finally, concerning the plots of the SQL dynamics in Figure~\ref{fig:potential_10}, the potential has been generated by the command \texttt{rnd(`1',twister)} of Matlab --- and the entry in $\(10,10\)$ deterministically set to $10$ --- and is given by 

\[\Phi=\left(\begin{array}{cccccccccc} 4 & 4 & 8 & 1 & 9 & 1 & 1 & 9 & 8 & 2\\ 7 & 7 & 9 & 4 & 7 & 7 & 4 & 2 & 6 & 9\\ 1 & 2 & 3 & 9 & 3 & 2 & 7 & 2 & 7 & 5\\ 3 & 8 & 7 & 5 & 8 & 3 & 4 & 8 & 4 & 6\\ 2 & 1 & 8 & 7 & 1 & 5 & 1 & 4 & 3 & 4\\ 1 & 7 & 9 & 3 & 5 & 1 & 5 & 2 & 9 & 3\\ 2 & 4 & 1 & 7 & 9 & 6 & 6 & 9 & 4 & 9\\ 4 & 6 & 1 & 8 & 3 & 2 & 5 & 4 & 9 & 6\\ 4 & 2 & 2 & 1 & 3 & 6 & 9 & 7 & 6 & 1\\ 5 & 2 & 8 & 7 & 2 & 7 & 6 & 7 & 6 & 10 \end{array}\right).\]

\renewcommand{\algorithmicrequire}{\textbf{Input:}}
\paragraph{Visualizing the Potential}
\label{par:visualization}
To study exploration in games with action spaces of arbitrary dimension (the findings are qualitatively equivalent when generalizing to arbitrary numbers of players), we adapt the method of \cite{Li18}. Specifically, for a two player game with choice distribution spaces $X\in \mathbb R^n$ and $Y\in \mathbb R^m$ consider the transformation $g:\mathbb R^n\to \mathbb R^{n-1}$ with $y_{1i}:=\ln{\(x_{1i}/x_{1n}\)}$, for each $i=1,2,\dots,n$ and the same for player 2. Together with the constraint $\sum_{i=1}^nx_{1i}=1$, the inverse $g^{-1}:\mathbb R^{n-1}\to \mathbb R^n$ of this transformation is given by $x_{1i}=x_{1n}e^{y_{1i}}$, for $i=1,\dots,n$.
\begin{algorithm}[!bth]
\caption{3D Visualization of the Potential}\label{alg:visual}
\begin{algorithmic}[1]
\Procedure{Input Game}{$\Phi,n,m$}
\Require $\Phi \gets $ Potential matrix
\Require $n \gets \#$ actions of player 1
\Require $m \gets \#$ actions of player 2\vspace*{0.2cm}
\EndProcedure
\Procedure{Generate random directions}{$n,m$}
\For {$i \gets u,v$}
\State{$u,v$ generate random vectors (not parallel)}
\State{$u=[u_1, u_2], v=[v_1, v_2]$ with}
\State{$u_1,v_1 \in \mathbb R^{n-1}, u_2,v_2\in\mathbb R^{m-1}$}\vspace*{0.2cm}
\EndFor
\EndProcedure
\Procedure{Transform Variables}{$u,v, \alpha,\beta$}
\hspace{0.4cm} \Require $\alpha,\beta \gets $ scalars in $\mathbb R$
\State {$x \gets \alpha\cdot u_1+\beta\cdot v_1$}
\State {$x \gets \exp{\(x\)}$}
\State {$x \gets$ normalized to sum up to $1$}
\State {Repeat to get $y$}\vspace*{0.2cm}
\EndProcedure
\Procedure{Evaluate Potential}{$x,y,\delta,\alpha,\beta$}
\hspace{0.4cm} \Require $\delta \gets $ Common exploration rate
\For {$\alpha,\beta$}
\State {$\Phi^H\(\alpha,\beta\) = x'\Phi y-\delta\sum x_i\ln{x_i}-\delta \sum y_j\ln y_j$}\vspace*{0.2cm}
\EndFor\\
\Return Plot tuples $\(\alpha,\beta,\Phi^H\(\alpha,\beta\)\)$
\EndProcedure
\end{algorithmic}
\end{algorithm}
The benefit of working with the transformed variables is that they are not subject to the Simplex constraints. Hence, we may choose random choice distributions $x_1,x_2$, transform them to $y_1,y_2$ and then scale them with two real scalars $\alpha,\beta$ of arbitrary sign and magnitude to obtain a point on the hyperplane $\alpha\cdot y_1+\beta\cdot y_2$. Then, we calculate the corresponding choice distributions (via the inverse transformation and the normalization equation) and evaluate the modified potential $\Phi^H$ at this point. This yields a tuple \[\(\alpha,\beta,\Phi^H\(g^{-1}\(\alpha\cdot y_1+\beta\cdot y_2\)\)\),\]
for each value of $\alpha,\beta$ which (if combined) yield the surface plots of Figures~\ref{fig:large} and \ref{fig:grid_pot}. The process is summarized in Algorithm~\ref{alg:visual}. Note that the technique readily generalizes to $n>2$ players with the same exploration rate. A specific instant of a potential game with $n=20$ which shows the transformation of the potential manifold as $\delta$ increases is included in the Multimedia Appednix.

\begin{figure*}[!htb]
\centering
\includegraphics[width=0.312\textwidth]{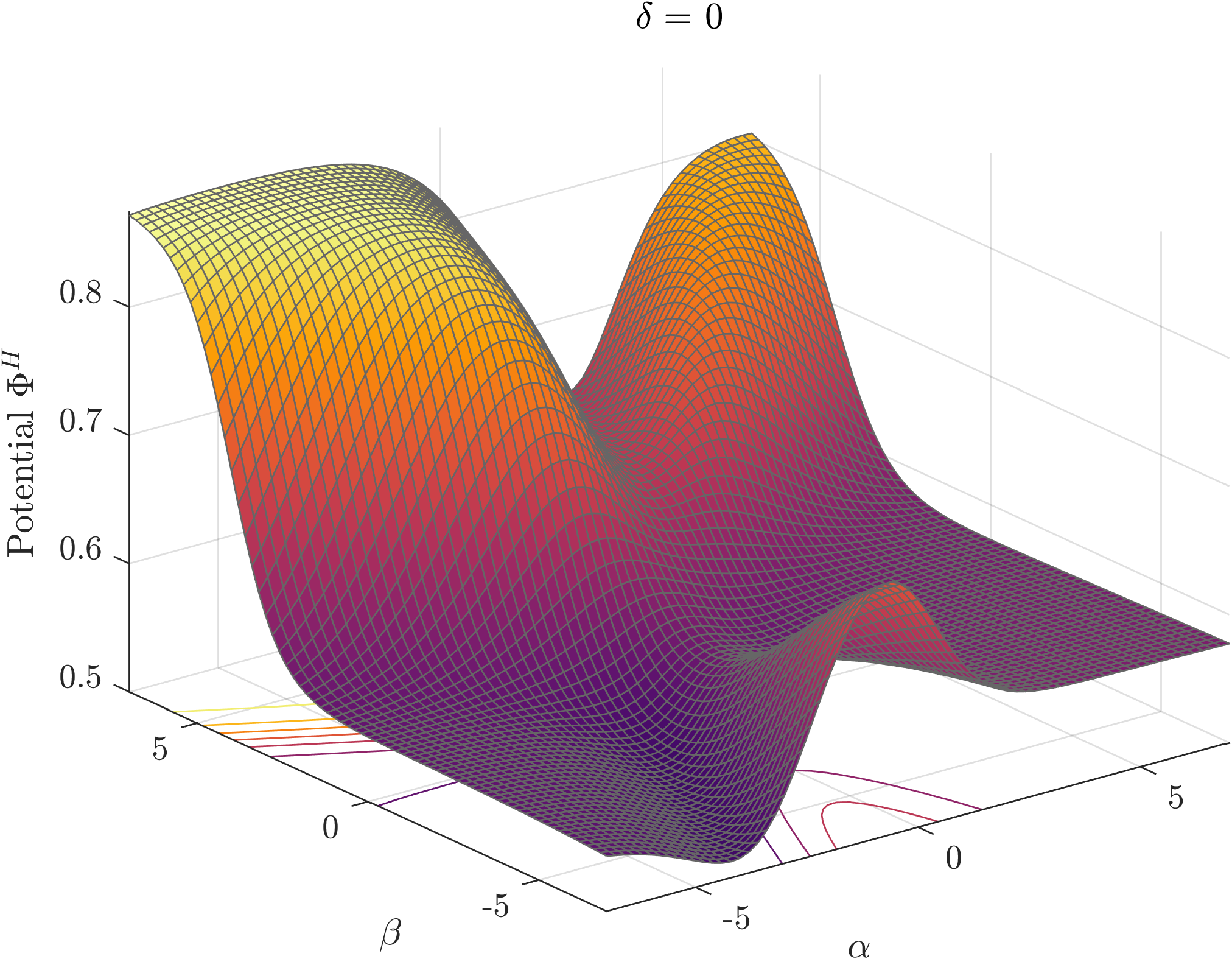}\hspace{10pt}
\includegraphics[width=0.312\textwidth]{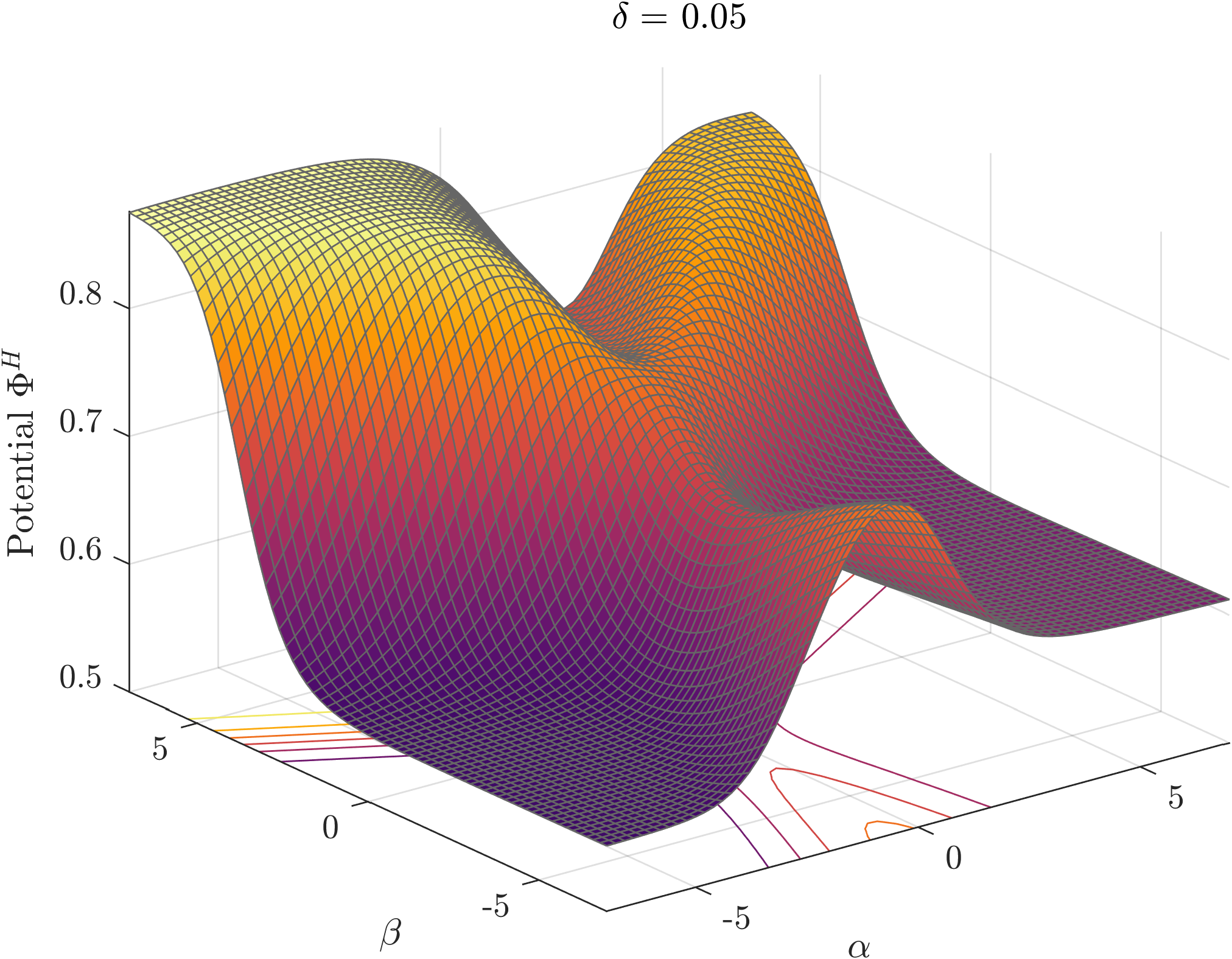}\hspace{10pt}
\includegraphics[width=0.312\textwidth]{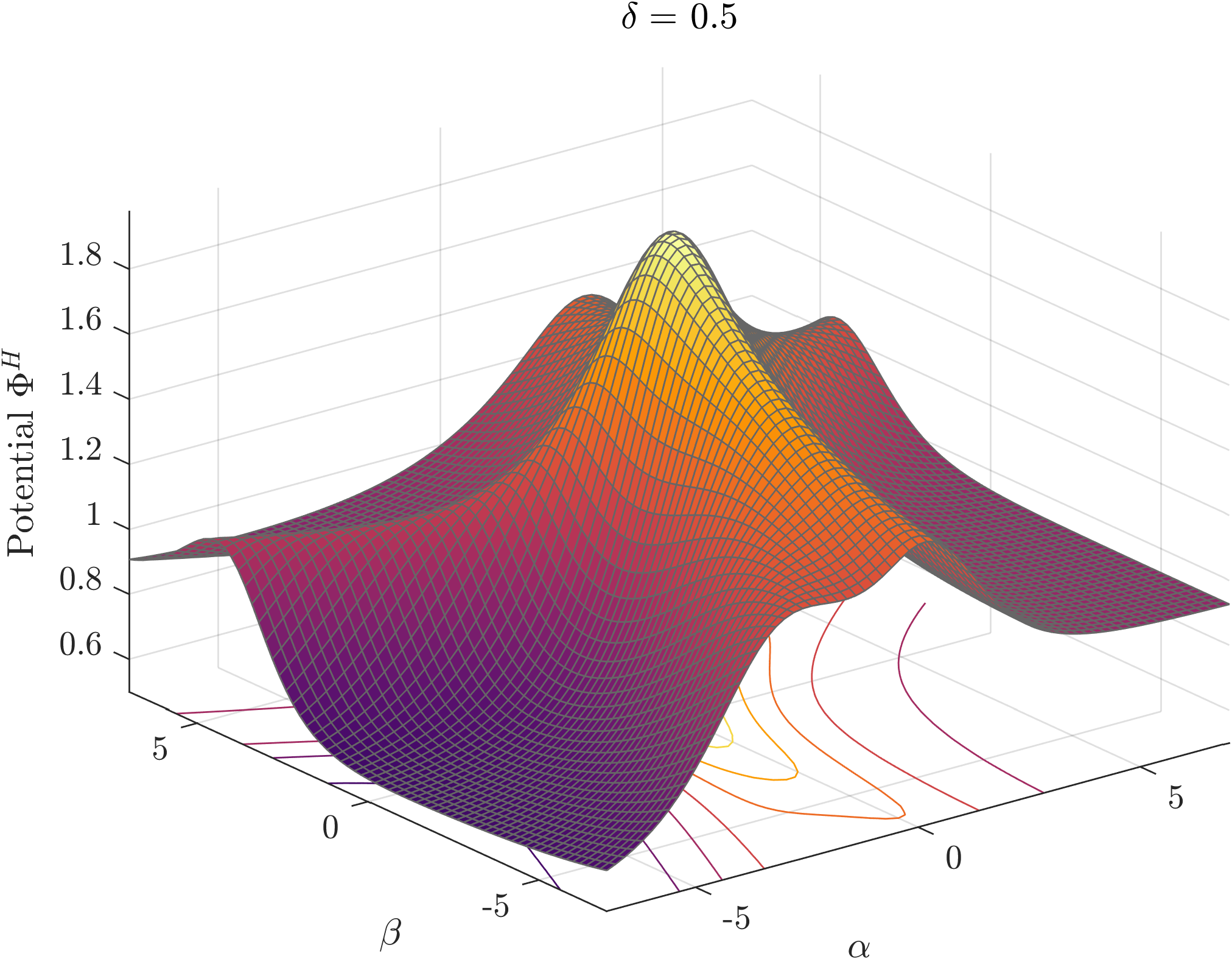}\\[0.2cm]
\includegraphics[width=0.312\textwidth]{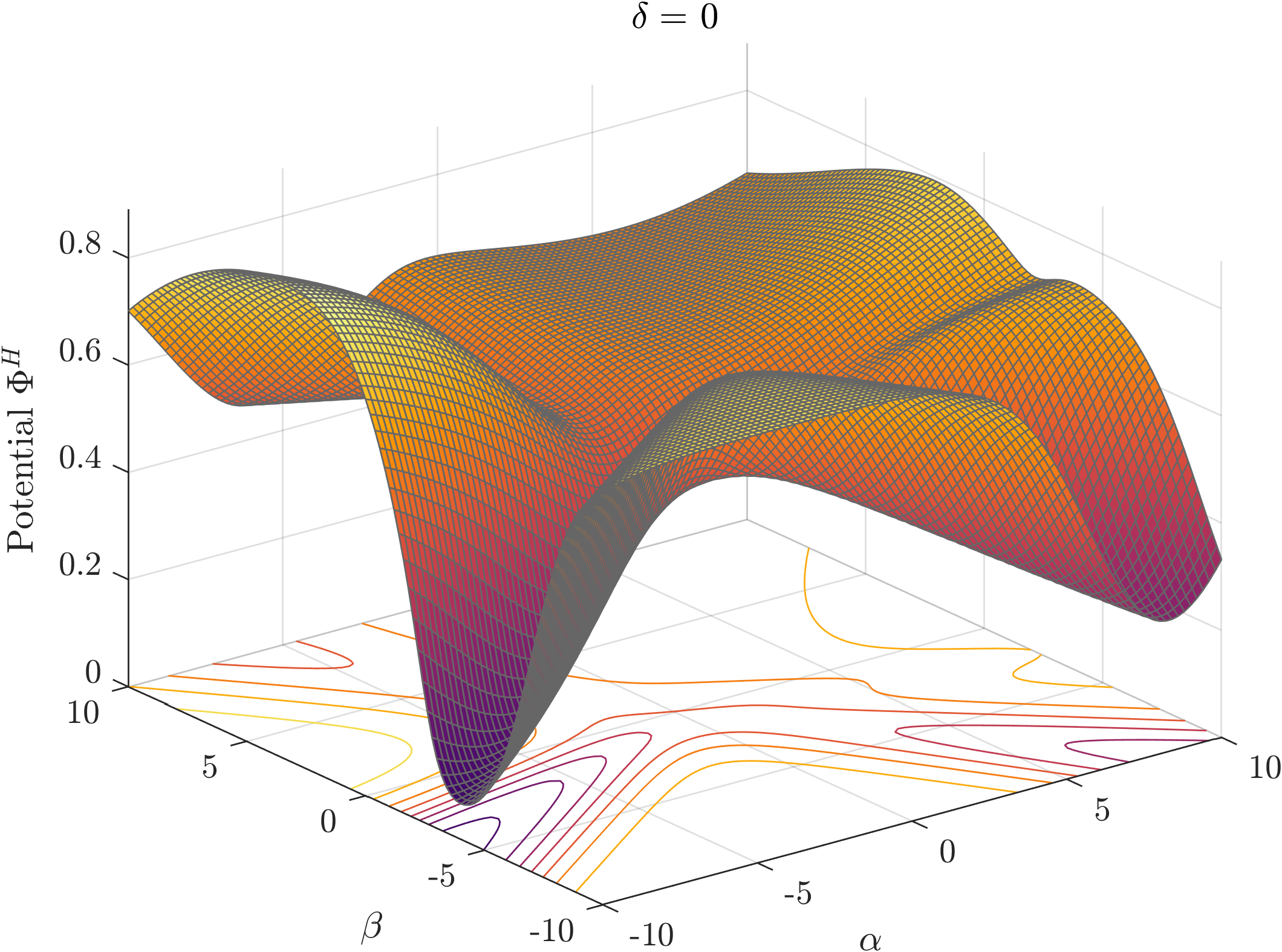}\hspace{10pt}
\includegraphics[width=0.312\textwidth]{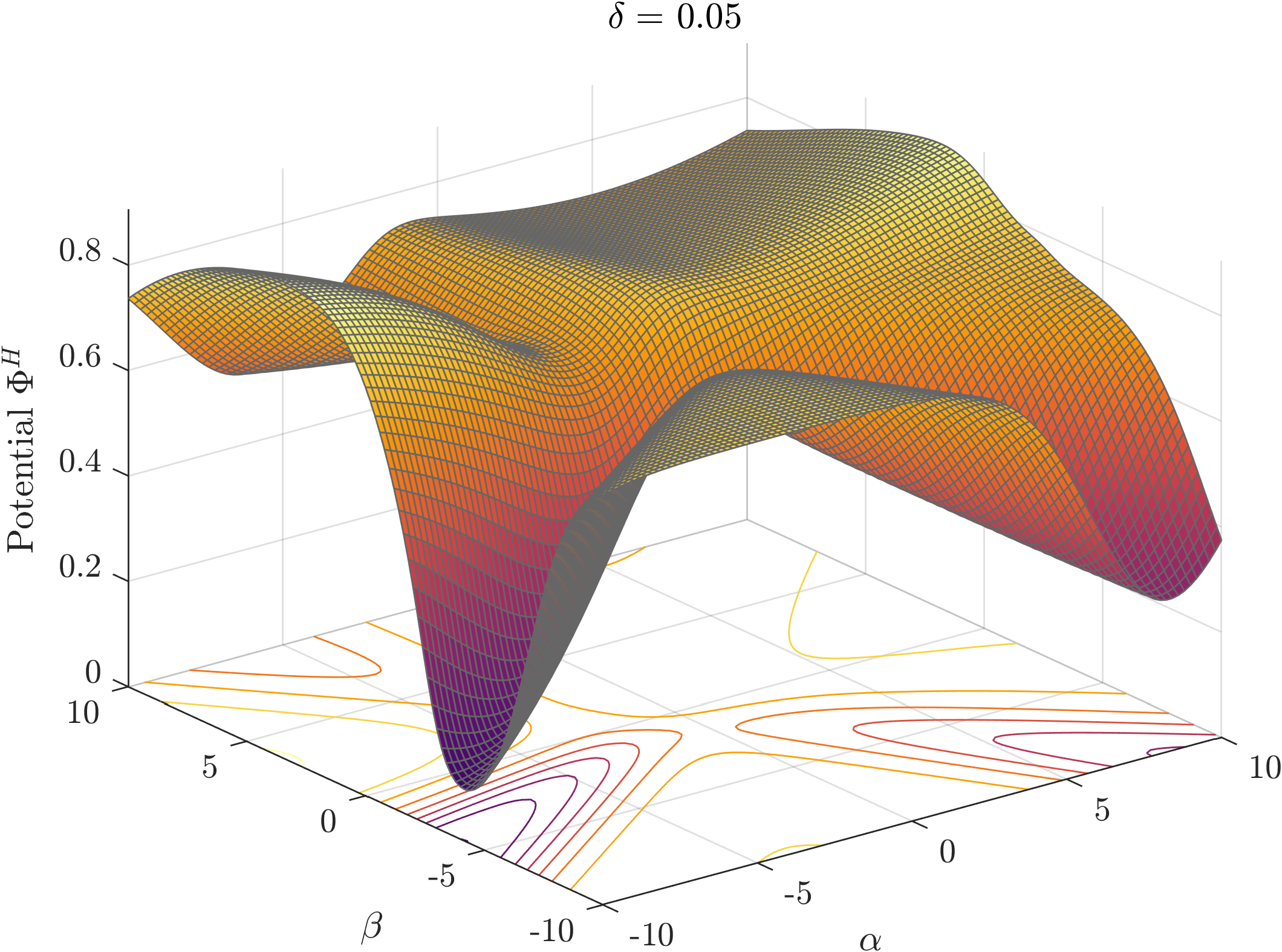}\hspace{10pt}
\includegraphics[width=0.312\textwidth]{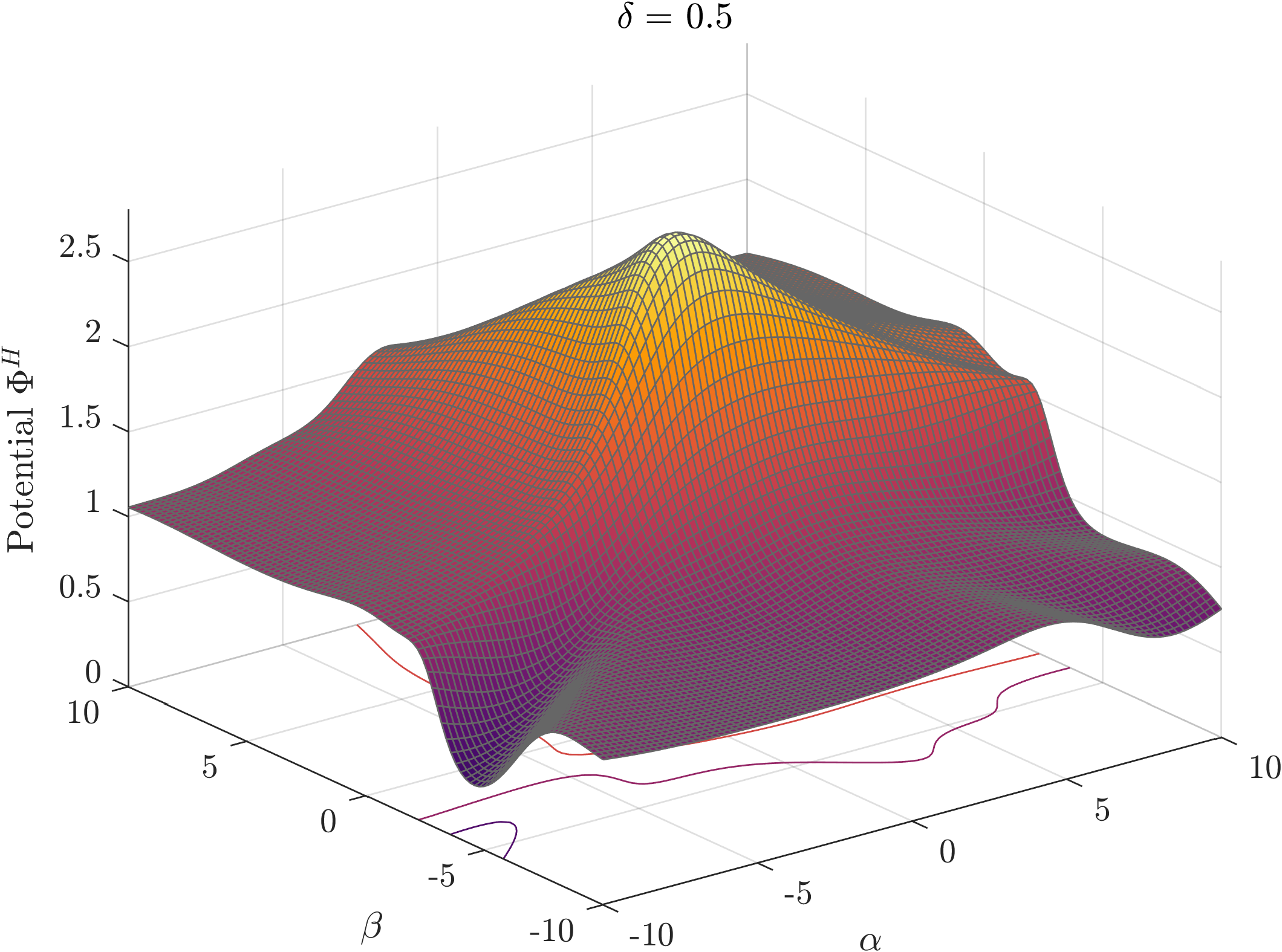}\\[0.2cm]
\includegraphics[width=0.312\textwidth]{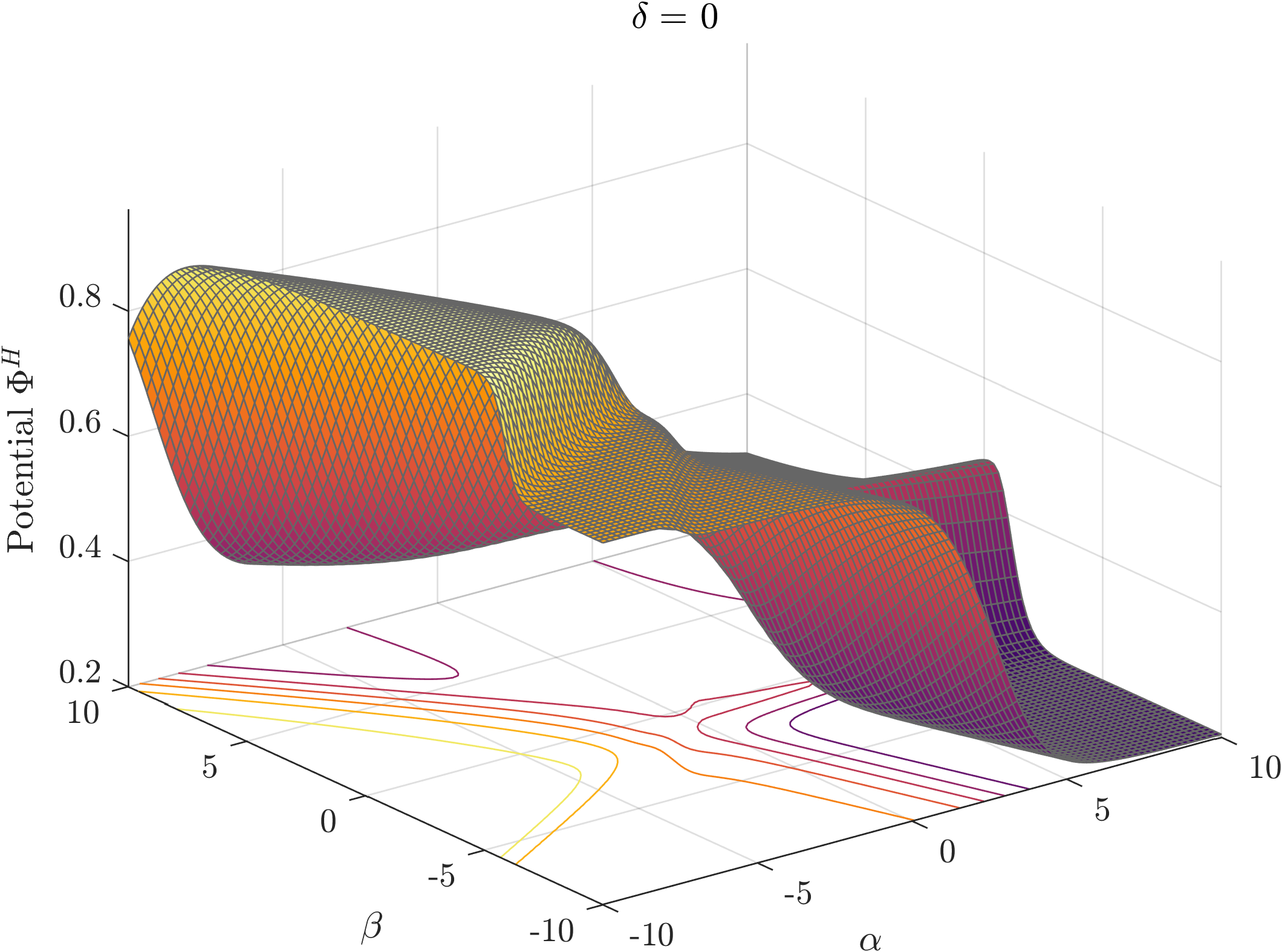}\hspace{10pt}
\includegraphics[width=0.312\textwidth]{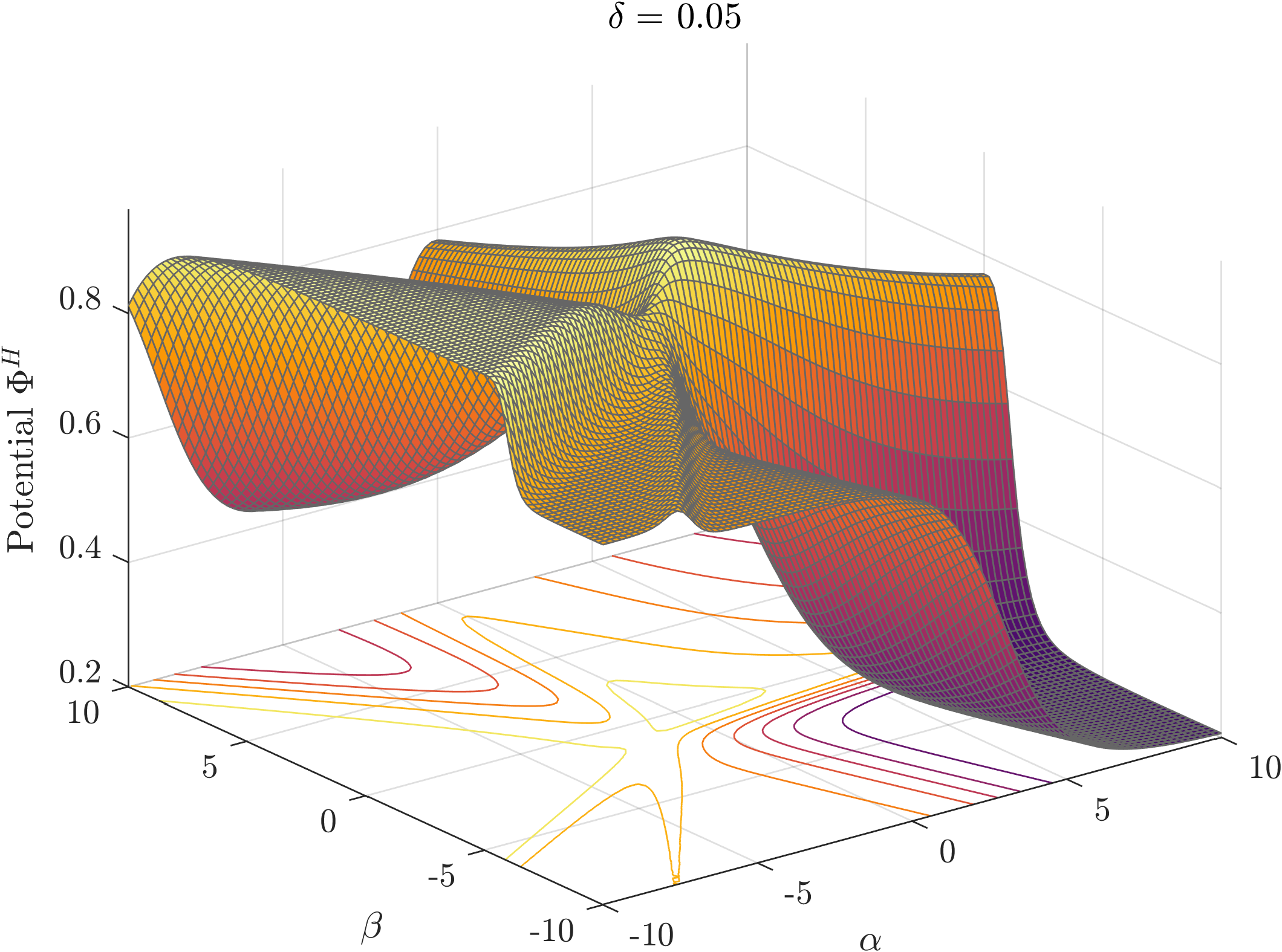}\hspace{10pt}
\includegraphics[width=0.312\textwidth]{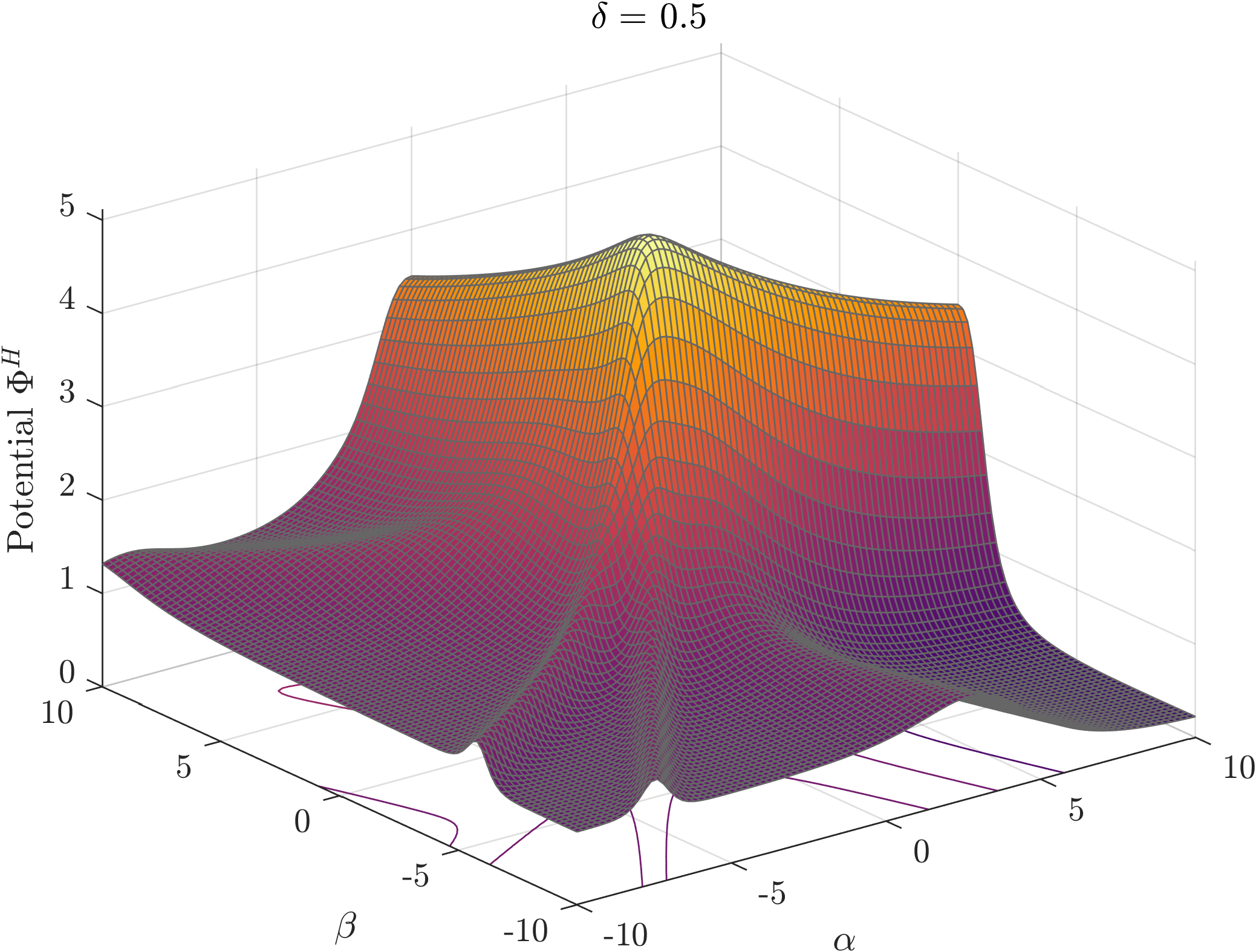}\\[0.2cm]
\includegraphics[width=0.312\textwidth]{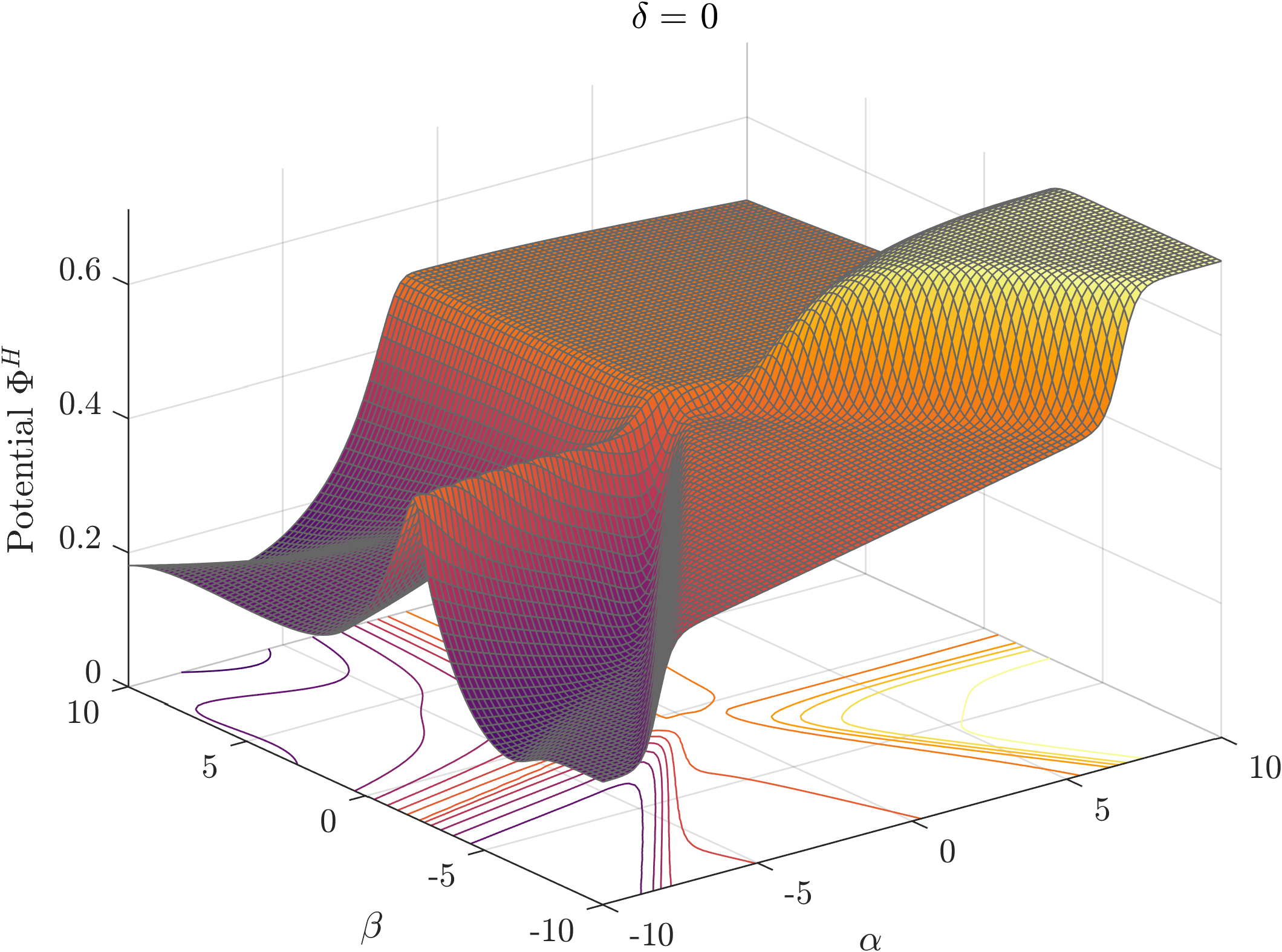}\hspace{10pt}
\includegraphics[width=0.312\textwidth]{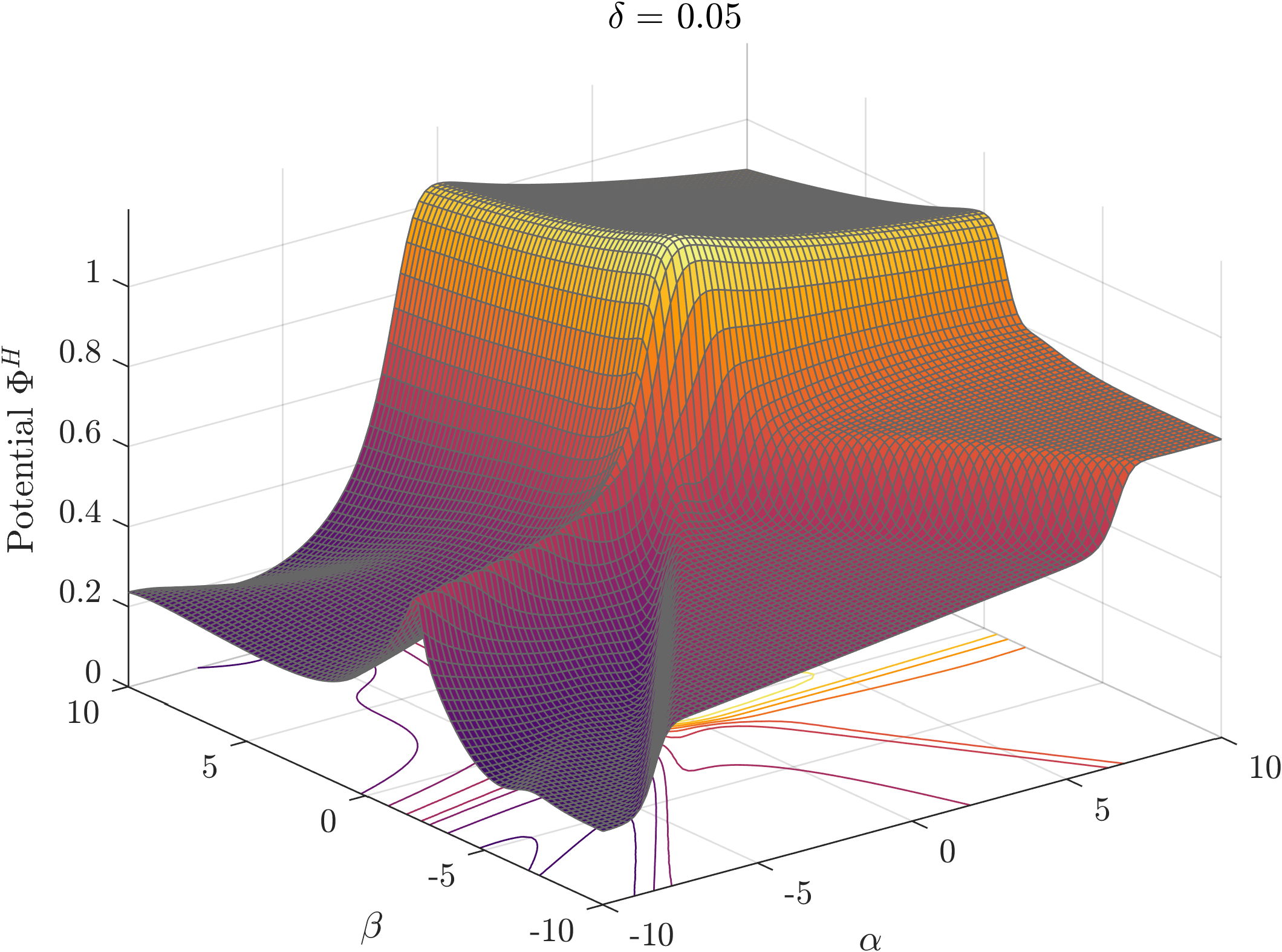}\hspace{10pt}
\includegraphics[width=0.312\textwidth]{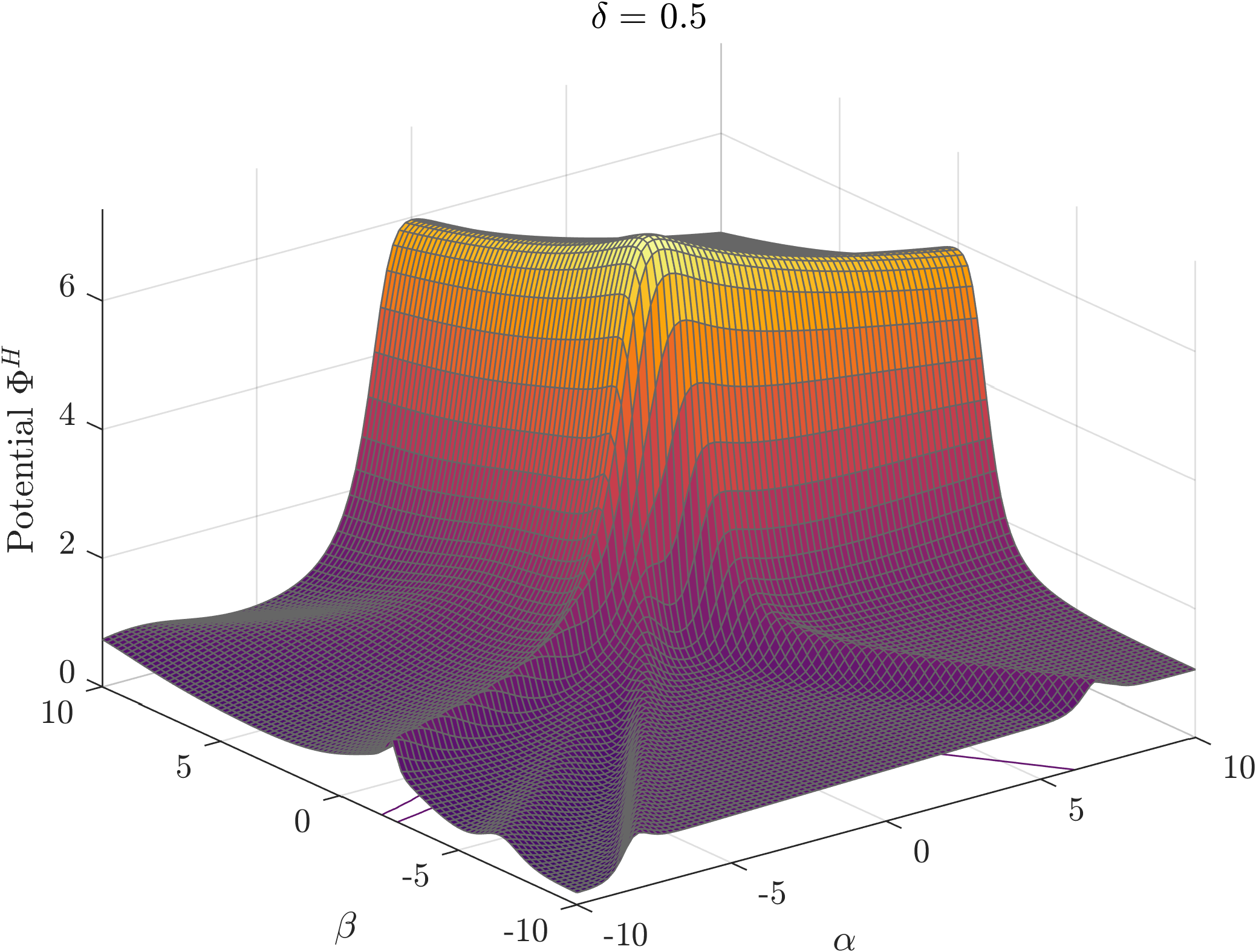}\\[0.2cm]
\caption{From top to bottom: snapshots of the modified potential $\Phi^H$ surface in $2$ player potential games with $n=4,10,100$ and $1000$ actions, respectively, and random payoffs in $[0,1]$. The surfaces are plotted using Algorithm~\ref{alg:visual} (see \cite{Li18}). From left to right: the exploration rate $\delta$ increases from $\delta=0$ to $\delta=0.05$ and $\delta=0.5$. The surface has arbitary maxima (resting points of the SQL dynamics) when $\delta$ is small (exploitation) but a single maximum at (or close to) $(0,0)$ which corresponds to the uniform distribution when $\delta$ is large (exploration). Intuitively, this is what agents \emph{see} as they increasingly incorporate exploration in their utilities. 
}
\label{fig:grid_pot}
\end{figure*}

\end{document}